\documentclass{article}

\PassOptionsToPackage{numbers, compress, square}{natbib}
\usepackage[preprint]{neurips_2026}
\usepackage[utf8]{inputenc} 
\usepackage[T1]{fontenc}    
\usepackage{hyperref}       
\usepackage{url}            
\usepackage{booktabs}       
\usepackage{amsfonts}       
\usepackage{nicefrac}       
\usepackage{microtype}      
\usepackage{xcolor}         

\usepackage{amsmath}
\usepackage{mathrsfs}
\usepackage{amsthm}
\usepackage{amssymb}

\usepackage{algorithm2e}
\RestyleAlgo{ruled}

\usepackage[capitalize]{cleveref}
\usepackage{mathtools}

\makeatletter
\AddToHook{cmd/appendix/before}{\def\cref@section@alias{appendix}\def\cref@subsection@alias{appendix}}
\makeatother

\DeclareMathOperator{\KLop}{KL}

\newcommand{\KL}[2]{%
  \mathchoice
    {\KLop\!\left(#1\,\middle\|\,#2\right)}
    {\KLop(#1\,\Vert\,#2)}
    {\KLop(#1\,\Vert\,#2)}
    {\KLop(#1\,\Vert\,#2)}
}
\newcommand{\Rdiv}[3][\eps]{%
  \mathfrak{R}_{#1}\!\left(#2\,\middle\|\,#3\right)%
}
\DeclareMathOperator{\TVop}{TV}
\newcommand{\TV}[2]{
  \mathchoice
    {\TVop\!\left(#1,#2\right)}
    {\TVop(#1,#2)}
    {\TVop(#1,#2)}
    {\TVop(#1,#2)}
}

\newcommand{\Null}{\mathbb{P}}

\newcommand{\E}{\mathbb{E}}
\newcommand{\Alt}{\mathbb{Q}}
\newcommand{\M}{\mathcal{M}}

\newcommand{\opt}{\Rdiv{\Alt}{\Null}}

\newcommand{\tsllr}{\text{tsLR}}
\newcommand{\eps}{\varepsilon}
\iftrue  
  \newcommand{\ben}[1]{\textcolor{purple}{#1}}
  \newcommand{\tg}[1]{[\textcolor{blue}{TG: #1}] }
  \newcommand{\ar}[1]{[\textcolor{red}{AR: #1}] }
  \newcommand{\gb}[1]{[\textcolor{brown}{GB: #1}] }
  
\else
  \newcommand{\ben}[1]{}
  \newcommand{\tg}[1]{}
  \newcommand{\gb}[1]{}
  \newcommand{\ar}[1]{}
\fi

\newcommand{\cI}{\mathcal{I}}
\newcommand{\cM}{\mathcal{M}}
\newcommand{\cO}{\mathcal{O}}
\newcommand{\cX}{\mathcal{X}}

\DeclareMathOperator*{\argmin}{\arg\min}
\DeclareMathOperator{\sign}{sign}

\newtheorem{theorem}{Theorem}[section]
\newtheorem{corollary}{Corollary}[theorem]
\newtheorem{lemma}{Lemma}[theorem]
\newtheorem{definition}{Definition}[section]
\newtheorem{remark}{Remark}[section]
\newtheorem{proposition}{Proposition}[section]

\title{Optimal Rates for Differentially Private Hypothesis Testing with E-values}

\author{%
  Ben Jacobsen \\
  University of Wisconsin-Madison\\
  \texttt{bjacobsen3@wisc.edu} \\
  \And 
  Tomas Gonzalez \\
  Carnegie Mellon University\\
  \texttt{tcgonzal@cs.cmu.edu} \\
  \And 
  Gavin Brown \\
  University of Wisconsin--Madison \\
  \texttt{gavin.brown@wisc.edu} \\
  \And 
  Kassem Fawaz \\
  University of Wisconsin--Madison \\
  \texttt{kfawaz@wisc.edu} \\
  \And 
  Aaditya Ramdas \\
  Carnegie Mellon University \\
  \texttt{aramdas@cs.cmu.edu}
}

\begin{document}

\maketitle

\begin{abstract}
  E-values have attracted considerable interest in recent years as flexible tools for enabling anytime-valid and adaptive data analysis.
  Hypothesis testing is at the core of many of these applications, which can often involve private or sensitive data.
  In this work, we answer a simple but important question: given two distributions $\mathbb{P}$ and $\mathbb{Q}$, 
  what is the maximum achievable e-power when testing $X\sim \mathbb{P}^n$ against $X\sim\mathbb{Q}^n$ with e-values that satisfy $\varepsilon$-differential privacy? 
  We characterize the optimal rate for this problem and provide an algorithm which matches it exactly.
  In the sequential setting, when observations arrive one-by-one and the analyst chooses when to halt, we give matching upper and lower bounds on the stopping times of any private e-process.
  Numerical experiments confirm the practicality of our algorithms, which require less data than the recently proposed DP-SPRT across a range of sequential testing problems and privacy levels.
\end{abstract}

\section{Introduction}

Hypothesis testing is ubiquitous in modern scientific practice and also one of the most fundamental problems in statistical inference. 
We consider simple hypothesis testing: an algorithm receives a dataset $X\in \cX^n$ and must evaluate whether it was sampled from a distribution $\Null$ (the null hypothesis) or $\Alt$ (the alternative).
In the classical, decision-theoretic setting, we wish to output a single decision $\phi(X) \in \lbrace 0,1 \rbrace$, where the output $\phi=1$ is interpreted as rejecting the null hypothesis in favor of the alternate. 
To prevent false discoveries, we ask that $\Null[\phi(X) = 1] \leq \alpha$ for some fixed significance level $\alpha$. 
The performance of the test is then measured by its true-positive rate or \emph{power}, $\Alt[\phi(X) = 1]$.

Recently, a complementary paradigm has emerged based on the concept of \emph{e-values}~\citep{ramdasHypothesisTestingEvalues}. 
Instead of binary decisions, algorithms for hypothesis testing in this setting output a real-valued, non-negative value $E$ which
quantifies the degree to which $X$ provides evidence for the alternate hypothesis over the null. 
We call $E$ an \emph{e-variable} and refer to realizations of $E$ as \emph{e-values}, but gloss over this distinction when speaking informally.
Larger e-values present more evidence for the alternative, because we expect the e-value to be small under the null, as per its definition.
\begin{definition}
    An \emph{e-variable for $\Null$} is a non-negative random variable $E$ satisfying $\E^\Null[E] \le 1$. 
\end{definition}
One can directly convert $E$ into a hypothesis test, rejecting the null if $E$ exceeds $1/\alpha$; this is a level-$\alpha$ test due to Markov's inequality. 
We measure the e-value's performance by its (e-)power.
\begin{definition}
    Let $E$ be an e-variable for $\Null$. Its \emph{e-power against $\Alt$} is defined as $\E^\Alt[\log E]$. We say that E is \emph{log-optimal} if $\E^\Alt[\log(E'/E)] \leq 0$ for any other e-variable $E'$ for $\Null$.
\end{definition}

A major advantage of this approach is that 
statistical inferences drawn from (products of) e-values often remain valid even when decisions about data collection and experimental design are made adaptively, whereas those same conditions typically invalidate type-I error control in classical methods~\citep{ramdasHypothesisTestingEvalues,grunwaldSafeTesting2020}. 
There is thus good reason to hope that more widespread use of e-values and their sequential analog, e-processes, could mitigate issues related to the misuse of p-values which have contributed to the ongoing scientific reproducibility crisis~\citep{ter_Schure_2019,amrhein2019scientists,shafer2021testing,grunwaldSafeTesting2020,ramdas2023game,chugg2026evaluesstatisticalevidencecomparison,huang2026controlling}. Methods based on e-values are also broadly useful for modern statistical workflows in which the number and timing of analyses are rarely fixed in advance such as adaptive clinical trials \cite{adaptive-clinical-trials}, online active experimentation \cite{waudby2024anytime-valid-contex-bandits, martinez-taboada2026vectorvalued}, online multiple testing \cite{online_multiple_testing_evalues}, sequential change detection \cite{edetectors} and auditing \cite{risk_auditing, fairness_auditing, privacy-auditing}. 

Many statistical applications are driven by sensitive individual-level data, however. This motivates the need for methods which additionally satisfy \emph{differential privacy} (DP), a rigorous and widely used definition of data privacy which guarantees that the result of a statistical analysis cannot depend too much on any one person's data \cite{dwork2014algorithmic,fioretto2025differential,abowd2018us}. 
\begin{definition}
    For $\eps\ge 0$, an algorithm $\M:\cX^n\to\cO$ is \emph{$\eps$-differentially private} if for all datasets $X$ and $X'$ differing in one entry and all $S\subseteq \cO$ we have 
    $\Pr[\cM(X)\in S]\le e^\eps \Pr[\M(X')\in S]$.
\end{definition}

There is a large body of literature investigating differentially private hypothesis testing in the classical decision-theoretic setting~\citep{canonne2019structure,5360513,kazan2023testtestsframeworkdifferentially,kifer2016newclassprivatechisquare,wang2017revisitingdifferentiallyprivatehypothesis,swanberg2019improveddifferentiallyprivateanalysis,Covington_2024}, including several works that study sequential testing problems~\citep{cummings2018differentiallyprivatechangepointdetection,zhang2022DPSPRT,michel2026dpsprtdifferentiallyprivatesequential,nikolakakis2022quantilemultiarmedbanditsoptimal,azize2026differentiallyprivatebestarmidentification}. 
We are aware of only three papers on differential privacy and e-values. 
\citet{waudbysmith2024nonparametricextensionsrandomizedresponse} and \citet{saha2026optimalevariablesconstraints} consider \emph{local} DP, wherein privacy protections are applied to individual data points before aggregation~\citep{kasiviswanathan2011can}. 
But while the privacy guarantees provided by local DP are very strong, they can also come at the cost of substantial utility degradation and are not appropriate for all statistical applications.

Our work therefore focuses on the more commonly used \emph{central} model of DP, where the privacy protections apply to the output of a  trusted aggregator. 
To our knowledge, the only work that considers e-values in this setting is that of \citet{csillagDifferentiallyPrivateEValues2025a}.
The algorithms described in that work only apply to testing distributions $\Null$ and $\Alt$ with bounded likelihood ratios, however, which fails to hold even in the simple case of testing Gaussians.

We aim to provide a more complete understanding of DP e-values for hypothesis testing. 
Within this context, our major contributions are:
\begin{itemize}
    \item An instance-specific hardness result describing the maximum possible e-power achievable by any $\varepsilon$-DP e-value when testing $\Null$ against $\Alt$ (\cref{thm:e-power-ub}).
    \item The construction of an $\eps$-DP e-value which achieves optimal e-power for any arbitrary pair of distributions, up to lower order terms (\cref{thm:basic-batch}).
    \item An instance-specific lower-bound on the expected stopping time of any sequential test based on thresholding an $\eps$-DP e-process, i.e., the amount of data which must be collected on average to achieve a given level of statistical power (\cref{thm:agr-ub}).
    \item An efficient algorithm for computing $\eps$-DP e-processes which simultaneously matches this bound, up to a constant multiplicative factor, for \emph{all} stopping times larger than some tunable minimal threshold (\cref{alg:batch2proc}, \cref{thm:sequential-lb}). In \cref{sec:simulation}, we compare \cref{alg:batch2proc} against the recently proposed DP-SPRT of~\citet{michel2026dpsprtdifferentiallyprivatesequential} and find that our algorithm achieves consistently earlier stopping times across a range of settings.
\end{itemize}

Given the central importance of simple hypothesis tests, our results also provide a foundation for investigating many other interesting problems involving robust, adaptive, and privacy-preserving statistical methods based on e-values. For instance, a natural next step would be to investigate $\eps$-DP e-values for composite hypothesis testing; in that setting, it is often possible to directly apply hardness results for simple hypothesis testing, such as our \cref{thm:e-power-ub}, via so-called least-favorable pairs~\citep{huber1973minimax}. A similar idea arises in statistical estimation, where the difficulty of estimating a parameter $\theta$ to error $\pm \alpha$ is strongly connected to worst-case difficulty of testing a distribution $D_\theta$ against $D_{\theta'}$ for $|\theta - \theta'| \leq \alpha$~\citep{donoho1991geometrizingII}. It is therefore likely that the ideas we present could be extended to better understand the complexity of e.g.\ anytime-valid confidence sequences~\citep{ramdasAdmissibleAnytimevalidSequential2022,ramdas2023game} under central DP, much as the foundational work of~\citet{canonne2019structure} on private simple hypothesis testing in the decision-theoretic setting helped contribute to breakthroughs in instance-optimal DP estimation algorithms~\citep{mcmillan2022instanceoptimaldifferentiallyprivateestimation,universally-instance-optimal-mechanisms}.

\subsection*{Preliminaries and Notation}

The formal definition of differentially private e-variables requires some care: DP is a property of a randomized algorithm $\M$ while the e-variable property is about the output $E=\M(X)$, a random variable.
We draw a clear distinction between these two objects.
\begin{definition}[Differentially private e-variable and e-power]
    We say that a non-negative random variable $E = \M(X)$ is an \emph{$\eps$-DP e-variable for $\Null$} whenever $\M$ is $\varepsilon$-DP and $\E^{\Null^n}[\M(X)] \leq 1$, where the expectation is taken with respect to both the input data $X \sim \Null^n$ and the internal randomness of $\M$. The e-power of $E$ against an alternative $\Alt$ is $\E^{\Alt^n}[\log \M(X)]$.
\end{definition}

 We use capital letters like $P$ and $Q$ for distributions and corresponding lowercase letters like $p$ and $q$ for their densities. $\M(X)$ is a random variable representing the output of $\M$ on input $X$, while $\M(\Null^n)$ represents the output distribution of $\M$ when $X \sim \Null^n$. We use the notation $\E^{R^n}[f]$ to represent the expectation of $f$ when the underlying dataset $X$ is an i.i.d.\ sample of $n$ points from $R$. When $n$ is clear from context, we will sometimes abbreviate this to $\E^R[f]$. If $f$ is randomized, then the expectation is taken with respect to both the input data and the internal randomness of $f$ unless otherwise stated. 
 
We mainly state our results in terms of continuous distributions and ignore measure-theoretic details in the main body of the paper. We collect a number of standard definitions and lemmas related to differential privacy, information theory, and e-values in \cref{app:lemmas-and-defs}. Formal proofs of most of our results are deferred to \cref{sec:deferred_proofs}.

\section{Batch Setting}\label{sec:batch}
\subsection{Characterization of the Optimal E-power under Pure DP}
\label{sec:batch_optimal}

In this section, we present our results for the batch setting, i.e., when the analyst receives all the data at once.
Without privacy, the maximum achievable e-power is straightforward: if $E$ is an e-variable for $\Null$, then $\frac{1}{n}\E^{\Alt^n}[\log E] \leq \KL{\Alt}{\Null}$ and this bound is achieved exactly by the likelihood ratio e-variable $R = \frac{d\Alt^n}{d\Null^n}(x)$~\citep{ramdasHypothesisTestingEvalues}. We are interested in understanding the corresponding optimal rate under central DP constraints:
\begin{equation}
    \mathfrak{R}_{n,\eps}(\Alt~\Vert~\Null) \coloneq \sup_{\varepsilon\text{-DP e-variables } E \text{ for }\Null} \frac{\E^{\Alt^n}[\log E]}{n},\qquad
    \Rdiv{\Alt}{\Null} \coloneq \lim_{n\to\infty} \mathfrak{R}_{n,\eps}(\Alt~\Vert~\Null).
    \label{eq:rate_def}
\end{equation}

In our search for optimal private e-values, we find it useful to decouple the privacy properties from the e-variable properties. 
More formally, consider fixing an arbitrary $\eps$-DP mechanism $\cM$.
How should we turn its output into an e-variable?
We observe that we now have a different testing problem: distinguishing null $\cM(\Null^n)$ from alternate $\M(\Alt^n)$.
Crucially, by the post-processing property of differential privacy, we can consider how to construct an e-variable from the output of $\M$ without considering privacy explicitly.

Absent privacy constraints, the optimal e-value is simply the likelihood ratio. 
Its e-power is simply the Kullback--Leibler divergence $\KL{\Alt}{\Null}$.
Thus, for any fixed DP mechanism $\M$, we can post-process its output into an e-variable $E_\M$ with e-power
\[
    \mathbb{E}^{\Alt^n}[\log E_\M] 
= \mathop{\mathbb{E}}\limits_{\substack{
  X \sim \Alt^n \\
  Y \sim \cM(X)
}}\left[\log \frac{\mathrm{d}\M(\Alt^n)}{\mathrm{d}\M(\Null^n)}(Y) \right] 
= \KL{\M(\Alt^n)}{\M(\Null^n)},
\]
and this is the best way to process $\M$'s output.
Immediately, we have the following characterization of the rate.

\begin{proposition}\label{prop:characterization}
For any null $\Null$, alternate $\Alt$, and $\varepsilon > 0$, the following equality holds:
\[\Rdiv{\Alt}{\Null} = \lim_{n \to \infty} \sup_{\M \hspace{.1cm}\eps\text{-DP }}  \frac{\KL{\M(\Alt^n)}{\M(\Null^n)}}{n}.\]
\end{proposition}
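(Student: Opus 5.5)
The plan is to show that, for each fixed $n$, the two finite-$n$ quantities coincide:
\[
\mathfrak{R}_{n,\eps}(\Alt~\Vert~\Null)=\sup_{\M\ \eps\text{-DP}}\frac{\KL{\M(\Alt^n)}{\M(\Null^n)}}{n}.
\]
Taking $n\to\infty$ then gives the proposition. The limit exists on one side exactly when it exists on the other, because the two sequences are identical term by term. I will prove two inequalities.

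For $\ge$, I fix an $\eps$-DP mechanism $\M$ with $\KL{\M(\Alt^n)}{\M(\Null^n)}<\infty$. Then $\M(\Alt^n)\ll\M(\Null^n)$, so the likelihood ratio $L=\frac{d\M(\Alt^n)}{d\M(\Null^n)}$ exists. I set $E=L(\M(X))$. This is a post-processing of $\M$, so it is $\eps$-DP. It is an e-variable because $\E^{\Null^n}[E]=\int L\,d\M(\Null^n)\le 1$. Its e-power is $\E^{\Alt^n}[\log E]=\KL{\M(\Alt^n)}{\M(\Null^n)}$, as in the display preceding the proposition. If the KL divergence is infinite, then a mass of $\M(\Alt^n)$ sits on a $\M(\Null^n)$-null set. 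In that case I would either take $E=+\infty$ on that set, if extended e-values are allowed, or truncate to e-variables $\min(L,K)$ composed with a mixing step. The e-powers of these tend to $\infty$, so the supremum is still matched.

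For $\le$, I take any $\eps$-DP e-variable $E=\M(X)$. Write $P_\M=\M(\Null^n)$ and $Q_\M=\M(\Alt^n)$; these are laws on $[0,\infty)$, the space of outputs of $E$. If $\E^{\Alt^n}[\log E]=-\infty$ there is nothing to prove, and if $Q_\M\not\ll P_\M$ the KL divergence is infinite and the bound is trivial. Otherwise I apply the non-private fact cited from \citet{ramdasHypothesisTestingEvalues} to the pair $(Q_\M,P_\M)$. The identity map $y\mapsto y$ is an e-variable for $P_\M$, since $\E^{P_\M}[y]\le1$. The log-optimality of the likelihood ratio, i.e.\ Gibbs' inequality $\E^{Q}[\log(e/\tfrac{dQ}{dP})]\le\log\E^{P}[e]\le 0$, then gives
\[
\E^{\Alt^n}[\log E]=\E^{Q_\M}[\log y]\le \KL{Q_\M}{P_\M}.
\]
Since $\M$ is itself an $\eps$-DP mechanism, the right-hand side is at most the supremum over mechanisms. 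Dividing by $n$ finishes this direction.

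The main obstacle is the measure-theoretic bookkeeping: the cases $-\infty$ and $+\infty$, absolute continuity, and Jensen's inequality applied to possibly non-integrable $\log$ terms. For Jensen I would split $\log$ into its positive and negative parts and use $\log x\le x-1$ to ensure that $\E^Q[(\log(E/L))^+]<\infty$. One point also needs stating: the supremum in the proposition is taken over all $\eps$-DP mechanisms with arbitrary output spaces. This is what makes the $\le$ direction valid, because the mechanism $\M$ producing $E$ is one of them.
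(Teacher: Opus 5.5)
Your proposal is correct and is essentially the paper's proof. For each $n$, the likelihood ratio of the mechanism's output is a post-processed $\eps$-DP e-variable whose e-power equals $\KL{\M(\Alt^n)}{\M(\Null^n)}$, and the Gibbs (equivalently Donsker--Varadhan) inequality bounds the e-power of any $\eps$-DP e-variable by the KL divergence of its own mechanism. The one simplification you missed is that group privacy gives $\M(\Alt^n)(S)\le e^{n\eps}\,\M(\Null^n)(S)$ for every event $S$, so absolute continuity always holds and the KL divergence is at most $n\eps$. Your case analysis for infinite KL or a singular part, including the truncation and mixing step, is therefore unnecessary.
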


A maximization problem similar to the one defined by \cref{prop:characterization} has previously been studied in the context of local DP~\citep{duchi2014localprivacydataprocessing}, where it is known that the optimum is achieved by a specific family of two-value ``staircase'' distributions~\citep{kairouz2015extremalmechanismslocaldifferential}. 
The problem is much less well understood under central DP, however, and so it is not immediately evident how to use this result to argue optimality.

Our core results for the batch setting attack this problem from two directions: an upper bound in \cref{sec:batch_rate_upper} and lower bound in \cref{sec:batch_lower_bound}.
The attacks are coordinated: both depend on the analysis of a particular intermediate distribution $\widetilde Q$ that serves as a bridge between $\Null$ and $\Alt$. We present the construction of $\widetilde Q$ in \cref{sec:batch_construction}. 

Our bounds match exactly, up to lower order terms, meaning that the private e-values we construct are asymptotically log-optimal for all pairs of distributions. As a consequence, we derive the following strong duality result characterizing the maximum KL divergence between central DP mechanisms, which may be of independent interest:

\begin{theorem}\label{thm:duality}
    Let $\mathcal{D}$ denote the set of probability distributions over $\mathcal{X}$. For any $\Null,\Alt \in \mathcal{D}$ and $\eps > 0$, the following equality holds:
    \begin{equation}\label{eq:duality}
    \lim_{n\to\infty}\sup_{\M \hspace{.1cm}\eps\text{-DP }}   \frac{\KL{\M(\Alt^n)}{\M(\Null^n)}}{n} = \inf_{Q' \in \mathcal{D}} \KL{Q'}{\Null} + \eps \TV{Q'}{\Alt}.
\end{equation}
\end{theorem}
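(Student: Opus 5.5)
We prove \cref{eq:duality} as two inequalities; by \cref{prop:characterization} the left-hand side equals $\Rdiv{\Alt}{\Null}$. Write $F(Q') = \KL{Q'}{\Null} + \eps\TV{Q'}{\Alt}$ and $\widetilde Q \in \argmin_{Q'} F(Q')$. A minimizer should exist by convexity and lower semicontinuity of $F$; it can also be written down explicitly through a clipped tilt $\tilde q \propto \min\{\max\{q, c_1 p\}, c_2 p\}$ of the likelihood ratio.

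\textbf{Upper bound ($\le$).} Fix an $\eps$-DP $\M$ and any $Q'$. Split
\[
\KL{\M(\Alt^n)}{\M(\Null^n)} = \KL{\M(\Alt^n)}{\M(Q'^n)} + \E_{Y\sim\M(\Alt^n)}\!\left[\log \tfrac{d\M(Q'^n)}{d\M(\Null^n)}(Y)\right].
\]
For the first term, couple $\Alt^n$ and $Q'^n$ coordinatewise with the maximal coupling. The number of disagreeing coordinates is then $\mathrm{Bin}(n,\TV{Q'}{\Alt})$. Group privacy gives a pointwise likelihood-ratio bound of $e^{\eps\cdot\#\text{disagreements}}$, so the $\infty$-divergence is at most $\eps$ times the number of disagreements. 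Taking expectations, and using joint convexity of KL over the coupling, bounds the first term by about $\eps n\TV{Q'}{\Alt}$.

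For the second term, $\log\frac{d\M(Q'^n)}{d\M(\Null^n)}$ is itself $\eps$-Lipschitz in Hamming distance at the level of output densities. Changing the sampling measure from $\M(Q'^n)$ to $\M(\Alt^n)$ therefore costs at most another $\eps n\TV{Q'}{\Alt}$ in the same coupling. Under $\M(Q'^n)$ the term is $\KL{\M(Q'^n)}{\M(\Null^n)}\le n\KL{Q'}{\Null}$ by data processing.

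This naive route yields a factor of $2\eps$, which is too loose. The fix is to work with a single coupling $(X,X')$ with $X\sim\Alt^n$, $X'\sim Q'^n$, and bound
\[
\E\left[\log \tfrac{m_\Alt(Y)}{m_\Null(Y)}\right] \le \E\left[\log \tfrac{m_{Q'}(Y)}{m_\Null(Y)}\right] + \eps\,\E[d_H(X,X')].
\]
Here $m_R$ denotes the output density of $\M(R^n)$. The idea is to carry out the change of measure inside a conditional argument, conditioning on the coupled pair: compare $\M(X)$ with $\M(X')$ pointwise via $e^{\eps d_H}$, then apply data processing to $Q'^n$ versus $\Null^n$.

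I expect this step to be the main obstacle: obtaining exactly one factor of $\eps$ per disagreement. If the direct version fails, I would take the dual route of Donsker--Varadhan. Write $\KL{\M(\Alt^n)}{\M(\Null^n)} = \sup_g \{\E^{\Alt}g - \log\E^{\Null}e^g\}$ and restrict to $g = \log$ likelihood ratio, which is $\eps$-Lipschitz under Hamming changes. Bound $\E^{\Alt} g \le \E^{Q'}g + \eps n\TV{Q'}{\Alt}$ by the coupling, and bound $\E^{Q'}g - \log\E^{\Null}e^g \le n\KL{Q'}{\Null}$ by Donsker--Varadhan applied to $Q'^n$ versus $\Null^n$. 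Dividing by $n$ and taking the infimum over $Q'$ gives $\le$, even before the limit.

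\textbf{Lower bound ($\ge$).} Construct a mechanism based on $\widetilde Q$ whose KL divergence is nearly $nF(\widetilde Q)$. The natural choice is a noisy clipped log-likelihood-ratio statistic. Let
\[
S(X)=\sum_i \ell(x_i), \qquad \ell = \mathrm{clip}_{[a,b]}\log\frac{\tilde q}{p},
\]
with the clipping chosen so that the resulting test has sensitivity matched to $\eps$. This is not a Laplace sum: instead, output a sample from the exponential mechanism, or directly let the output density be $\propto \exp(\eps\cdot\text{Hamming-type score})$.

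A cleaner option is the mechanism that outputs $Y$ with density proportional to $\exp(-\eps\,\mathrm{dist}_H(X,\text{typical set}))$. Then verify, via the law of large numbers and large-deviation estimates, that the output log-likelihood ratio concentrates at $n(\KL{\widetilde Q}{\Null} + \eps\TV{\widetilde Q}{\Alt}) - o(n)$ under $\Alt^n$. The first-order optimality conditions of $\widetilde Q$ (the clipping thresholds) ensure that the Hamming cost $\eps$ matches the marginal gain in $\log(\tilde q/p)$. This is the step that makes the two bounds coincide. Finally, take $n\to\infty$.
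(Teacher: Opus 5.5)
Your upper bound has a genuine gap, and it is more than a lost constant: both inequalities you propose as fixes for the factor $2\eps$ are false in general. The single-coupling display amounts to $\KL{\M(\Alt^n)}{\M(\Null^n)}\le\KL{\M(Q'^n)}{\M(\Null^n)}+\eps n\TV{Q'}{\Alt}$. The Donsker--Varadhan fallback needs $\E^{\Alt}g\le\E^{Q'}g+\eps n\TV{Q'}{\Alt}$ for $g=\log\bigl(d\M(\Alt^n)/d\M(\Null^n)\bigr)$.

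Here is a counterexample to both. Let $\M$ apply randomized response independently to each bit $\mathbb{I}[x_i\in S]$: report $1$ with probability $e^\eps/(1+e^\eps)$ if $x_i\in S$ and $1/(1+e^\eps)$ otherwise. This is $\eps$-DP. Take $\eps=3$, $\Alt(S)=0.99$, $\Null(S)=0.01$, and $Q'$ proportional to $\Alt$ on $S$ and on $S^c$ with $Q'(S)=0.89$, so $\TV{Q'}{\Alt}=0.1$. Everything tensorizes. Per coordinate, the two left-minus-right gaps are about $0.46$ and $0.51$, against $\eps\TV{Q'}{\Alt}=0.3$. As $\Alt(S)\to1$ and $\Null(S)\to0$, the second gap tends to $2\eps\tanh(\eps/2)\TV{Q'}{\Alt}$, which exceeds $\eps\TV{Q'}{\Alt}$ whenever $\eps>\ln 3$.

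The error is the claim that $g$ is ``$\eps$-Lipschitz under Hamming changes.'' In fact $g$ is a fixed function on the output space. For neighbours $x,x'$, group privacy only yields $|\E_{\M(x)}g-\E_{\M(x')}g|\le(e^\eps-1)\E_{\M(x')}|g-c|$. That bound scales with the spread of $g$ (here about $2\eps$ per coordinate, in general up to order $n\eps$), not with $\eps$.

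The missing idea is to never form $\M(Q'^n)$ at all. The paper (\cref{lem:decomp}) applies data processing to the augmented triple $(\M(X),\tilde X,B)$. Under $\Alt$, the pairs $(X_i,\tilde X_i)$ are maximally coupled with $\tilde X_i\sim Q'$ and $B_i=\mathbb{I}[X_i\neq\tilde X_i]$. Under $\Null$, one sets $X=\tilde X\sim\Null^n$ and gives $B$ the same conditional law given $\tilde X$. The chain rule then charges exactly $n\KL{Q'}{\Null}$ to $(\tilde X,B)$, i.e.\ at the level of inputs. The remaining conditional term compares a mixture of laws $\M(x)$, with $x$ differing from $\tilde x$ exactly where $b_i=1$, against the single law $\M(\tilde x)$. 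Joint convexity and the pointwise ratio bound $e^{\eps\sum_i b_i}$ bound it by $\eps\sum_i b_i$, whose mean is $\eps n\TV{Q'}{\Alt}$. Because this bounds a KL divergence through a density-ratio bound, rather than changing measure on a fixed statistic, one $\eps$ per disagreement suffices.

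The lower bound is likewise not yet a proof. ``$Y$ with density proportional to $\exp(-\eps\,\mathrm{dist}_H(X,\text{typical set}))$'' does not define a distribution over outputs. The concentration at $n(\KL{\widetilde Q}{\Null}+\eps\TV{\widetilde Q}{\Alt})-o(n)$ is asserted rather than derived. You also need not abandon the Laplace sum.

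The ratio $\tilde q/p$ is already the clipped likelihood ratio $E^*=\min(c_2,\max(c_1,q/p))$ with $c_2=e^\eps c_1$; this is exactly what the first-order conditions give. So $\log E^*$ has range of width exactly $\eps$, and $\E^\Null[E^*]=1$. Since $\widetilde Q$ moves exactly $\TV{\widetilde Q}{\Alt}$ of mass onto $\{q<c_1p\}$ and off $\{q>c_2p\}$, a short computation gives $\E^\Alt[\log E^*]=\KL{\widetilde Q}{\Null}+\eps\TV{\widetilde Q}{\Alt}=:\mu$ (\cref{thm:exact-lb}). Releasing $\sum_i\log E^*(x_i)+\mathrm{Lap}(1)$ is $\eps$-DP. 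Apply Donsker--Varadhan with test function $y\mapsto\lambda y$ for $\lambda<1$, and use $\E^\Null[(E^*)^\lambda]\le1$. This gives $\KL{\M(\Alt^n)}{\M(\Null^n)}\ge\lambda n\mu+\log(1-\lambda^2)$, which is $n\mu-O(\log n)$ for $\lambda=1-1/n$. The paper instead privatizes $\sum_i\log(1-\lambda+\lambda E^*(x_i))$ with $\mathrm{Lap}(b)$, $b<1$, so that the output is itself an $\eps$-DP e-variable (\cref{thm:basic-batch}). It then passes from e-power to KL through \cref{prop:characterization}.
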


We remark that right-hand side of \eqref{eq:duality} can be interpreted as a relaxation of the problem of finding the worst corrupted alternate in robust statistics~\citep{huber1973capacities}, where it is known that the optimal test statistic is obtained by clipping the likelihood ratio $d\Alt/d\Null$ to some interval $[c',c'']$~\citep{huber1965robust}. Our problem differs in that we allow $\TV{Q'}{\Alt}$ to vary with $\Null$, $\Alt$, and $\eps$ instead of fixing a particular level of contamination in advance. Nonetheless, we show that a (different) clipped version of the likelihood ratio is similarly optimal for testing under central DP.

\subsection{Constructing the Intermediate Distribution}
\label{sec:batch_construction}

In this section we present our construction of $\widetilde Q$ and some associated quantities upon which our later results rely.
We present the construction tersely, mentioning only that $\widetilde Q$ is the distribution where the infimum on the right-hand side of \eqref{eq:duality} is realized.
We give further explanations about $\widetilde Q$ alongside the relevant arguments and visualize the construction of $\widetilde Q$ for a specific pair of distributions in \cref{fig:construction}.

Fix null $\Null$, alternate $\Alt$, and let $\lambda \in \mathbb{R}$ be arbitrary. We introduce the quantities $c_1(\lambda) = e^{-\eps/2 + \lambda - 1}, c_2(\lambda) = e^{\eps/2 + \lambda - 1}$, and define the following subsets of $\mathcal{X} $: 
\begin{align*}
    &A_\lambda = \lbrace x: q(x) < c_1(\lambda) p(x) \rbrace,
    &B_\lambda = \lbrace x : q(x) > c_2(\lambda) p(x) \rbrace, \qquad
    &M_\lambda = \mathcal{X} \setminus (A_\lambda \cup B_\lambda).
\end{align*}

We then define the key quantity $\lambda^*$ as the solution to the following implicit equation
\begin{equation}\label{eq:lambda-star}
    f(\lambda) \coloneq c_1(\lambda) \Null(A_\lambda) + \Alt(M_\lambda) + c_2(\lambda) \Null(B_\lambda) = 1.
\end{equation}

We establish the existence of a solution in \cref{sec:deferred_batch}, and will therefore 
proceed to use $A$, $B$, and $M$ as shorthand for $A_{\lambda^*}$, $B_{\lambda^*}$, and $M_{\lambda^*}$ respectively. Similarly, we will use $c_1$ and $c_2$ as shorthand for $c_1(\lambda^*)$ and $c_2(\lambda^*)$.

We are now ready to present our new distribution $\widetilde Q$, whose density is defined as follows:
\begin{equation}
    \widetilde q(x) =
    \begin{cases}
        c_1\cdot  p(x) & x \in A, \\
        q(x) & x \in M, \\
        c_2 \cdot  p(x) & x \in B. \\
    \end{cases}
\end{equation}

This integrates to 1 by the definition of $\lambda^*$, ensuring that $\widetilde Q$ is in fact a probability distribution. 

\subsection{Bounding the Optimal Rate}
\label{sec:batch_rate_upper}

Our proof technique for deriving an upper bound on e-power is based on techniques of \citet{balle2018privacyamplificationsubsamplingtight}, which were subsequently extended by \citet{canonne2019structure} to characterize optimal private algorithms for symmetric hypothesis tests.  
A key component of our approach is the following coupling and group-privacy lemma, which applies to general intermediate distributions.
\begin{lemma}\label{lem:decomp}
    Let $\Null, \Alt, Q' \in \mathcal{D}$, and let $\M: \mathcal{X}^n \to \mathcal{O}$ be an arbitrary $\eps$-DP mechanism. Then:
    \begin{equation*}
        \frac{1}{n}\KL{\M(\Alt^n)}{\M(\Null^n)} \leq \KL{Q'}{\Null} + \varepsilon \TV{Q'}{\Alt}.
    \end{equation*}
\end{lemma}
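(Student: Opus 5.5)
Apply the chain rule for KL through an intermediate product distribution $Q'^n$, and control the extra term with a coupling plus group privacy. The naive split
\[
\KL{\M(\Alt^n)}{\M(\Null^n)} = \E^{\M(\Alt^n)}\Bigl[\log\tfrac{d\M(\Alt^n)}{d\M(Q'^n)}\Bigr] + \E^{\M(\Alt^n)}\Bigl[\log\tfrac{d\M(Q'^n)}{d\M(\Null^n)}\Bigr]
\]
fails, because the second expectation is taken under $\M(\Alt^n)$ rather than $\M(Q'^n)$. So we will bound the two pieces jointly through a coupling.

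Let $\tau=\TV{Q'}{\Alt}$. Take a maximal coupling $(Z,X)$ with $Z\sim Q'$, $X\sim\Alt$ and $\Pr[Z\neq X]=\tau$, and tensorize it to $(Z^n,X^n)$. The Hamming distance $D=d_H(Z^n,X^n)$ is then $\mathrm{Bin}(n,\tau)$. By group privacy, for every output $y$ the densities satisfy $m(y\mid X^n)\le e^{\eps D}\,m(y\mid Z^n)$. We write $\mu_R$ for the output density of $\M(R^n)$.

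\textbf{Main estimate.} Using joint convexity of KL over the coupling (i.e.\ $\M(\Alt^n)$ and $\M(\Null^n)$ as mixtures), we get
\[
\KL{\M(\Alt^n)}{\M(\Null^n)}\le \E_{(Z^n,X^n)}\,\E_{y\sim m(\cdot\mid X^n)}\Bigl[\log\tfrac{\mu_\Alt(y)}{\mu_\Null(y)}\Bigr].
\]
We then write $\log\frac{\mu_\Alt}{\mu_\Null}=\log\frac{\mu_\Alt}{\mu_{Q'}}+\log\frac{\mu_{Q'}}{\mu_\Null}$ and handle each piece.

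\begin{itemize}
\item For the first piece, data processing gives a bound by $\KL{\M(\Alt^n)}{\M(Q'^n)}$. This is not obviously small, so I would instead absorb it: replace $\Alt$ by the mixture and use group privacy directly on densities, giving $\mu_\Alt(y)\le \E[e^{\eps D}m(y\mid Z^n)]$.
\item A cleaner route, which I would try first, is to note that $\log\frac{\mu_{\Alt}(y)}{\mu_\Null(y)}$ is the e-power of the likelihood-ratio e-variable $E(y)=\mu_\Alt(y)/\mu_\Null(y)$. This $E$ satisfies $\E^{\M(\Null^n)}[E]\le1$, and it is the post-processing of an $\eps$-DP output. So it suffices to show: for any $\eps$-DP mechanism and any e-variable $E$ for $\M(\Null^n)$,
\[
\E^{\Alt^n}[\log E]\le n\KL{Q'}{\Null}+\eps n\tau.
\]
\end{itemize}

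To prove this, set $g(x^n)=\E[\log E(\M(x^n))]$. Privacy gives the sensitivity bound: $g$ changes by at most $\eps$ per coordinate, because $\log\E$ of a density ratio is bounded by $\eps$. Jensen's inequality applied to $\log$ inside the mechanism expectation needs care, though. I would use $\phi(x^n)=\log\E[E(\M(x^n))]$ instead, which is $\eps$-Lipschitz in Hamming distance, satisfies $\E^{\Null^n}[e^{\phi}]\le1$, and by Jensen satisfies $\E^{\Alt^n}[\log E]\le \E^{\Alt^n}[\phi]$. Then:
\begin{enumerate}
\item Along the coupling, the Lipschitz property gives $\E^{\Alt^n}[\phi]\le \E^{Q'^n}[\phi]+\eps\,\E[D]=\E^{Q'^n}[\phi]+\eps n\tau$.
\item By the Donsker–Varadhan / Gibbs variational inequality, $\E^{Q'^n}[\phi]\le \KL{Q'^n}{\Null^n}+\log\E^{\Null^n}[e^\phi]\le n\KL{Q'}{\Null}$.
\end{enumerate}
Dividing by $n$ gives the claim.

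The step I expect to be the main obstacle is verifying that $\phi$ is $\eps$-Lipschitz in Hamming distance. This should follow because $\E[E(\M(x^n))]=\int E\,dm(\cdot\mid x^n)$ with $E\ge0$, and pure DP gives pointwise density ratios bounded by $e^{\eps}$ for neighbors. Integrating a nonnegative function therefore changes it by at most a factor $e^{\pm\eps}$. Measure-theoretic issues, such as $\phi=-\infty$ or infinite KL, are trivial edge cases: the bound holds vacuously if $\KL{Q'}{\Null}=\infty$.
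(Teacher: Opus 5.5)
Your final argument via $\phi$ is correct, and it takes a genuinely different route from the paper.

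\textbf{The paper's route.} The paper works at the level of divergences. It augments the mechanism's output with a shadow dataset $\tilde X\sim Q'^n$ and disagreement indicators $B$. It builds a separate joint law under the null ($\tilde X=X$, with the conditional law of $B$ given $\tilde X$ matched to the alternative). Data processing and the chain rule then split off $\KL{Q'^n}{\Null^n}=n\KL{Q'}{\Null}$. The remaining conditional divergence is bounded by joint convexity plus group privacy.

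\textbf{Your route.} You pass to the likelihood-ratio e-variable $E$ and use Jensen to average out the mechanism's randomness into $\phi(x)=\log\E[E(\M(x))]$. Pure DP, applied to the nonnegative integrand $E$, makes $\phi$ $\eps$-Lipschitz in Hamming distance with $\E^{\Null^n}[e^{\phi}]\le 1$. You then transport from $\Alt^n$ to $Q'^n$ along the tensorized maximal coupling at cost $\eps n\TV{Q'}{\Alt}$, and finish with Donsker--Varadhan.

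\textbf{What each buys.} Your route needs no coupling under the null and no conditional divergences. It bounds the e-power of every $\eps$-DP e-variable directly, without passing through \cref{prop:characterization}. It also shows conceptually why a bounded-log-sensitivity statistic like $E^*$ is the right object. Your bound actually holds over all Hamming-Lipschitz $\phi$ with $\E^{\Null^n}[e^{\phi}]\le 1$, and that class already contains $\sum_i\log E^*(x_i)$. What the paper's decomposition buys is reusability. The same shadow-dataset and indicator chain-rule structure drives the potential-function proof of \cref{thm:agr-ub}, where the bound must scale with $\E^{\Alt}[N]$ for a stopping time that depends on intermediate outputs. Your fixed-dataset Jensen step does not directly give that: applied naively with $n=N_{\max}$, it yields a bound in terms of $N_{\max}$ rather than $\E^{\Alt}[N]$.

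\textbf{Two corrections to the write-up.}
\begin{itemize}
\item Your ``main estimate'' display is an identity, not a consequence of joint convexity: averaging over $X^n$ just recovers $\M(\Alt^n)$.
\item Your aside that $g(x^n)=\E[\log E(\M(x^n))]$ is $\eps$-Lipschitz is false in general. $\log E$ is signed, and pure DP only controls integrals of nonnegative functions multiplicatively. You were right to switch to $\phi$.
\end{itemize}

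\textbf{Edge cases.} Let $h(x)=\E[E(\M(x))]$. It satisfies $h(x)\le e^{n\eps}h(x')$ for all $x,x'$, together with $\E^{\Null^n}[h]=1$. So $h$ is finite and positive everywhere, and $\phi$ is bounded. This is exactly what licenses the coupling step and the Donsker--Varadhan step without further integrability arguments.
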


\begin{proof}
  Our proof is based on the construction of couplings between several key random variables: the original dataset $X$, the output of our algorithm $\M(X)$, and a `shadow dataset' $\tilde X$ which we will introduce shortly. We will define one coupling of these random variables under the alternate (i.e.\ when $X \sim \Alt^n$) and a different one under the null (i.e.\ when $X \sim \Null^n)$.
  
  Under the alternate, we use the total-variation coupling $\gamma$ between $\Alt$ and $Q'$ (\cref{lem:tv-coupling}). That is, for $(X_i, \tilde X_i) \sim \gamma$, we have $\gamma(X_i \neq \tilde X_i) = w \coloneq \TV{Q'}{\Alt}$. We sample the true dataset $X$ and the shadow dataset $\tilde X$ jointly from $\gamma^n$, so that $X \sim \Alt^n$ and $\tilde X \sim Q'^n$. We additionally define a sequence of indicator random variables $B = B_{1:n}$, where $B_i \coloneq \mathbb{I}[X_i \neq \tilde X_i] \sim \text{Bern}(w)$.
  
  Under the null, we instead sample $X\sim \Null^n$ and set $\tilde X = X$. For technical reasons that will become clear shortly, we will not set the distribution of $B_i$ under the null to be an indicator variable. Instead, we require that $\Null(B_i=1 \mid \tilde X=\tilde x) = \Alt(B_i = 1 \mid \tilde X=\tilde x)$, i.e., the conditional distributions match.

  For a random variable $W$,  we use $\Alt_W$ to denote the distribution of $W$ under the alternate, and likewise for $\Null_W$ under the null. The conditional distribution of $W$ given a realized value $z$ of a random variable $Z$ is written as $\Alt_{W\mid z}$ (resp. $\Null_{W \mid z}$).
  
  With this setup done, our goal is to bound $\KL{\Alt_{\M(X)}}{\Null_{\M(X)}}$. We begin by applying the data processing inequality:
\begin{align*}
    \KL{\Alt_{\M(X)}}{\Null_{\M(X)}} 
    &\leq \KL{\Alt_{\M(X),\tilde X, B}}{\Null_{\M(X), \tilde X, B}}.
\end{align*}

By the chain rule for KL divergence (\cref{lem:kl_chain_rule}), it follows that:
\begin{align}
     \KL{\Alt_{\M(X)}}{\Null_{\M(X)}}
     &\leq 
     \KL{\Alt_{\tilde X, B}}{\Null_{\tilde X, B}} \label{eq:DPI-term}\\
     &\quad+ \E_{\tilde x, b \sim \Alt_{\tilde X, B}}\left[ \KL{\Alt_{\M(X)\mid \tilde x, b}}{\Null_{\M(\tilde x) \mid \tilde x, b}} \right] \label{eq:GP-term},
\end{align}
where we used the fact that for $X \sim \Null_{X \mid \tilde x}$, $X = \tilde x$ by construction and thus $\M(X) = \M(\tilde x)$. Next, we will bound each of these terms separately. To bound \cref{eq:DPI-term}, we apply chain rule again to get: 
\begin{equation*}
    \KL{\Alt_{\tilde X, B}}{\Null_{\tilde X, B}}
    = \KL{\Alt_{\tilde X}}{\Null_{\tilde X}} + 
    \E_{\tilde x \sim \Alt_{\tilde X}}\left[ \KL{\Alt_{B \mid \tilde x} }{\Null_{B \mid \tilde x}} \right].
\end{equation*}

Because we defined our coupling so that the conditional distribution $B$ given $\tilde x$ is the same under both hypotheses, $\KL{\Alt_{B \mid \tilde x} }{\Null_{B \mid \tilde x}} = 0$ for any $\tilde x$, and so we are left with $\KL{\Alt_{\tilde X}}{\Null_{\tilde X}}$. But by construction, $\tilde X \sim (Q')^n$ under $\Alt$ and $\tilde X \sim \Null^n$ under $\Null$, and so   $\KL{\Alt_{\tilde X}}{\Null_{\tilde X}} = n \KL{Q'}{\Null}$.

Next, to bound the expectation in \cref{eq:GP-term}, we will use group privacy (\cref{lem:group-privacy}). To start, fix some particular value of $\tilde x, b$. We can then write $\Alt_{\M(X) \mid \tilde x, b}$ as a mixture distribution with components $\Alt_{\M(x) \mid x, \tilde x, b}$ indexed by $x \sim \Alt_{X \mid \tilde x, b}$. 
We will also write $\Null_{\M(\tilde x) \mid \tilde x, b}$ as a mixture with matching weights using a trivial decomposition with all components equal to $\Null_{\M(\tilde x) \mid  \tilde x, b}$.
Then, by the joint convexity of KL divergence (\cref{lem:kl_joint_convexity}), we have that:
\begin{align*}
    \KL{\Alt_{\M(X) \mid \tilde x, b}}{\Null_{\M(\tilde x) \mid \tilde x, b}} 
    &\leq \E_{x \sim \Alt_{X \mid \tilde x, b}} \left[\KL{\Alt_{\M(x) \mid x, \tilde x, b}}{\Null_{\M(\tilde x) \mid \tilde x, b}} \right]  \\
    &= \int_{\mathcal{X}^n} \int_{\mathcal{O}} \log \frac{d\Alt_{\M(x)\mid x,\tilde x,b}}{d\Null_{\M(\tilde x)\mid \tilde x,b}}~d\Alt_{Y \mid x, \tilde x, b}~d\Alt_{X \mid \tilde x,b}.
\end{align*}

Conditioned on $x$ and $\tilde x$, the likelihood ratio in the integrand depends solely on the internal randomness of $\M$ and not on $\Null$ or $\Alt$. Then, because $\M$ is $\eps$-DP and $x, \tilde x$ differ on exactly $\sum_{i=1}^n b_i$ components, we have by group privacy that $\frac{d\Alt_{\M(x)\mid x, \tilde x, b}}{d\Null_{\M(\tilde x) \mid \tilde x, b}} \leq \exp(\sum_{i=1}^n b_i\eps)$ pointwise. Hence,
\begin{equation}\label{eq:jc+gp}
\KL{\Alt_{\M(X) \mid \tilde x, b}}{\Null_{\M(\tilde x) \mid \tilde x, b}}
    \leq \sum_{i=1}^n b_i\eps.
\end{equation}

Finally, we have that $\E_{\tilde x, b \sim \Alt_{\tilde X, B}}\left[\sum_{i=1}^n b_i\eps \right] = nw\eps = n\eps \TV{Q'}{\Alt}$ by linearity of expectation. Combining our bounds, we conclude that
\begin{equation*}
    \frac{1}{n}\KL{\Alt_{\M(X)}}{\Null_{\M(X)}} = \frac{1}{n}\KL{\M(\Alt^n)}{\M(\Null^n)} \leq \KL{Q'}{\Null} + \TV{Q'}{\Alt}\eps,
\end{equation*}
as desired.
\end{proof}

Having proved \cref{lem:decomp}, the following weak duality version of \cref{eq:duality} is an immediate corollary:
\begin{corollary}\label{cor:one-sided}
    Let $\Null, \Alt \in\mathcal{D}$ and $\eps > 0$. Then:
    \begin{equation*}
        \sup_{\M \hspace{.1cm}\eps\text{-DP }} \frac{1}{n} \KL{\M(\Alt^n)}{\M(\Null^n)} \leq \inf_{Q' \in \mathcal{D}} \KL{Q'}{\Null} + \eps \TV{Q'}{\Alt}.
    \end{equation*}
\end{corollary}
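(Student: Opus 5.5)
This is an immediate consequence of \cref{lem:decomp}: we take a supremum on one side and an infimum on the other. Fix $n$, $\Null$, $\Alt$ and $\eps>0$. The key point is that the bound in \cref{lem:decomp} holds simultaneously for every $\eps$-DP mechanism $\M:\mathcal{X}^n\to\mathcal{O}$ and every intermediate distribution $Q'\in\mathcal{D}$. Moreover, its right-hand side $\KL{Q'}{\Null}+\eps\TV{Q'}{\Alt}$ depends only on $Q'$, $\Null$, $\Alt$ and $\eps$, and not on $\M$.

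The argument has two steps.

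\emph{Infimum over $Q'$.} Fix an arbitrary $\eps$-DP mechanism $\M$. By \cref{lem:decomp},
\[
\tfrac{1}{n}\KL{\M(\Alt^n)}{\M(\Null^n)} \le \KL{Q'}{\Null}+\eps\TV{Q'}{\Alt}
\]
for every $Q'\in\mathcal{D}$. The left-hand side does not involve $Q'$, so it is a lower bound for the family of right-hand sides. Hence it is at most their infimum, $\inf_{Q'\in\mathcal{D}}\KL{Q'}{\Null}+\eps\TV{Q'}{\Alt}$.

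\emph{Supremum over $\M$.} This infimum does not depend on $\M$. It is therefore an upper bound on $\tfrac1n\KL{\M(\Alt^n)}{\M(\Null^n)}$ uniformly over all $\eps$-DP mechanisms $\M$, and so it bounds their supremum. This is exactly the claimed inequality.

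The infimum could be $+\infty$, for example if every $Q'$ has $\KL{Q'}{\Null}=\infty$. This cannot actually happen, since $Q'=\Null$ gives the value $\eps\TV{\Null}{\Alt}\le\eps$. In any case the inequality would hold trivially, and the mechanism-side KL may take the value $+\infty$ without affecting the argument.

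No step is a genuine obstacle. The only care needed is to note that the bound in \cref{lem:decomp} is uniform in both $\M$ and $Q'$, so that the sup and inf can be taken independently without any exchange-of-limits issue.
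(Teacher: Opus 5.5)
Your proposal is correct and matches the paper, which also obtains this corollary directly from \cref{lem:decomp}. As you do, it uses the fact that the bound holds for every $\eps$-DP mechanism $\M$ and every $Q'\in\mathcal{D}$, so one can take the infimum over $Q'$ and then the supremum over $\M$.
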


Next, we show that $\widetilde Q$ is the unique distribution where the infimum on the right-hand side of \cref{cor:one-sided} is realized. The proof is deferred to \cref{sec:deferred_batch}.

\begin{theorem}\label{thm:e-power-ub}
    For any pair of distributions $\Null$ and $\Alt$ and any $\varepsilon > 0$, we have:
    \begin{equation*}
        \inf_{Q' \in \mathcal{D}} \KL{Q'}{\Null} + \eps \TV{Q'}{\Alt}  = \KL{\widetilde Q}{\Null} + \eps \TV{\widetilde Q}{\Alt}.
    \end{equation*}
\end{theorem}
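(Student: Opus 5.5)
We treat the right-hand side as a convex minimization over densities and verify optimality of $\widetilde Q$ by a Lagrangian (KKT) argument. Write $F(Q') = \KL{Q'}{\Null} + \eps\TV{Q'}{\Alt} = \int q'\log(q'/p) + \frac{\eps}{2}\int |q' - q|$. This is convex in $q'$, being the sum of a strictly convex and a convex functional. Optimality of $\widetilde Q$ is then equivalent to the existence of a subgradient of $F$ at $\widetilde q$ that is constant in $x$, so that it annihilates every feasible direction $h = q' - \widetilde q$ with $\int h = 0$.

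\textbf{Subgradient at $\widetilde q$.} Pointwise, the subdifferential of $t \mapsto t\log(t/p(x)) + \frac{\eps}{2}|t - q(x)|$ at $t = \widetilde q(x)$ is
\[
\log\frac{\widetilde q(x)}{p(x)} + 1 + \frac{\eps}{2}s(x), \qquad s(x) \in \partial|\cdot|\bigl(\widetilde q(x) - q(x)\bigr).
\]
\emph{On $A$:} $\widetilde q = c_1 p > q$, so $s = 1$, and the derivative is $\log c_1 + 1 + \eps/2 = \lambda^*$.
\emph{On $B$:} $\widetilde q = c_2 p < q$, so $s = -1$, and the derivative is $\log c_2 + 1 - \eps/2 = \lambda^*$.
\emph{On $M$:} $\widetilde q = q$ with $c_1 \le q/p \le c_2$, so $\log(q/p) + 1 \in [\lambda^* - \eps/2,\, \lambda^* + \eps/2]$. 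We may therefore choose $s(x) = \frac{2}{\eps}\bigl(\lambda^* - 1 - \log(q/p)\bigr) \in [-1,1]$, which makes the derivative equal to $\lambda^*$ there as well.

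So the selected subgradient $g$ equals the constant $\lambda^*$ everywhere.

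\textbf{Conclusion.} By convexity, for any $Q'$,
\[
F(Q') \ge F(\widetilde Q) + \int g\,(q' - \widetilde q) = F(\widetilde Q) + \lambda^*\Bigl(\int q' - \int \widetilde q\Bigr) = F(\widetilde Q),
\]
using $\int\widetilde q = 1$, which holds by \eqref{eq:lambda-star}. To make this rigorous without functional subdifferential calculus, I would prove the pointwise inequality directly. For each $x$ and $t \ge 0$, set $\phi_x(t) = t\log(t/p) + \frac{\eps}{2}|t - q|$; the goal is $\phi_x(t) \ge \phi_x(\widetilde q(x)) + \lambda^*(t - \widetilde q(x))$. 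This follows from the one-dimensional convexity of $\phi_x$ together with the subgradient choice above. Integrating over $x$ then gives the claim.

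Uniqueness, as mentioned in the text, would follow from strict convexity of the KL term. If $P$ is not absolutely continuous with respect to the relevant reference measure, or $q'$ charges regions where $p = 0$, then $F(Q') = \infty$, so these cases are trivial.

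\textbf{Main obstacle.} The step I expect to be hardest is the existence of $\lambda^*$ solving \eqref{eq:lambda-star}, which the text defers to \cref{sec:deferred_batch}. I would show that $f$ is continuous and nondecreasing in $\lambda$, that $f \to 0$ as $\lambda \to -\infty$, and that $f \to \infty$ as $\lambda \to \infty$, then apply the intermediate value theorem. Continuity needs care at atoms where $q/p$ equals a threshold. I would handle this by noting that $f$ is in fact continuous: on the boundary set $q = c_i p$ the three terms agree, so membership of that set in $A$, $M$, or $B$ does not change the value of $f$.

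A secondary technical issue is integrability of $\log(\widetilde q/p)$. This is harmless because $\widetilde q/p \in [c_1, c_2]$ wherever $p > 0$, so $\KL{\widetilde Q}{\Null}$ is finite and all the integrals above are well defined.
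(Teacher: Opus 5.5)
Your argument is correct, and it runs in the opposite direction from the paper's. The paper first shows that a minimizer of $F$ exists and is unique. Existence comes from weak compactness of the sublevel sets of $\KL{\cdot}{\Null}$ on a Polish space plus weak lower semicontinuity of $F$. Uniqueness comes from strict convexity. The paper then derives the minimizer's clipped form from a formal first-order Lagrangian condition $\log s(x)+1-\lambda+\frac{\eps}{2}\sign[s(x)-g(x)]=0$, and simply asserts that a normalizing multiplier $\lambda^*$ exists.

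You instead take $\widetilde Q$ as given and verify that it is a global minimizer, via the pointwise subgradient inequality with the constant multiplier $\lambda^*$. This buys you three things:
\begin{itemize}
\item No topological assumption and no a priori existence of a minimizer are needed.
\item The kink on $M$, where the paper's $\sign(0)$ is left undefined, is handled rigorously by your explicit choice of $s(x)\in[-1,1]$.
\item Existence of $\lambda^*$ is proved directly by the intermediate value theorem. Writing $f(\lambda)=\int \min\{c_2(\lambda)p,\max\{c_1(\lambda)p,q\}\}$ makes continuity (by dominated convergence), monotonicity, and the limits $0$ and $\infty$ immediate.
\end{itemize}
Your route also shows that every root of $f(\lambda)=1$ yields a minimizer. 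When $f$ is flat the root need not be unique (e.g.\ $\Alt=\Null$), but strict convexity forces all roots to give the same $\widetilde Q$. In exchange, the paper's necessity-based route explains where the clipped form of $\widetilde Q$ comes from and gets uniqueness of the minimizer as part of the argument, whereas for you uniqueness is only a side remark, not needed for the equality.
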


Taken together, these results give us an explicit, computable upper bound on the e-power of any $\eps$-DP e-variable for $\Null$ against $\Alt$. In the following section, we will construct a private e-variable whose e-power asymptotically matches this upper bound exactly.

\begin{figure}
    \centering
    \includegraphics[width=\linewidth]{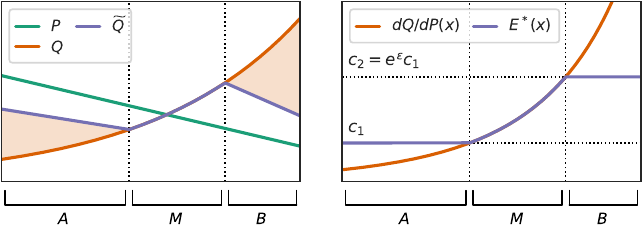}
    \caption{\textbf{Left}: Illustrates our definition of $\widetilde Q$ for $\varepsilon=1$ and two densities $P$ and $Q$ over $[0,1]$. 
    The shaded region in A and the shaded region in B both have area $\TV{\widetilde Q}{\Alt}$.
    \textbf{Right}: Compares the true likelihood ratio $dQ/dP$ to the non-private bounded e-variable $E^*$, which can be interpreted as $d\widetilde Q/dP$. Simply clipping $dQ/dP$ to a fixed range could introduce either positive or negative bias; the distribution-dependent range $[c_1, c_2]$ is calibrated to ensure that  $\E^\Null[E^*]=1$ exactly.}
    \label{fig:construction}
\end{figure}

\subsection{Matching the Optimal Rate}\label{sec:batch_lower_bound}

Our lower bound on the rate proceeds in two steps.
First, we introduce a non-private e-variable $E^*$ with log-sensitivity $\eps$, and show that its e-power matches our upper-bound with equality.
\begin{theorem}\label{thm:exact-lb}
    Let $E^*(x) = \min(c_2, \max(c_1, q(x)/p(x)))$, with $c_1 = e^{-\eps/2 + \lambda^* - 1}$ and $c_2 = e^{\eps}c_1$, as in \cref{sec:batch_construction}.
    Then:
    \begin{equation*}
        \E^\Alt[\log E^*] = \KL{\widetilde Q}{\Null} + \eps \TV{\widetilde Q}{\Alt}.
    \end{equation*}
\end{theorem}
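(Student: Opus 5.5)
The plan is a direct computation: identify $E^*$ with the density ratio $d\widetilde Q/d\Null$ and split $\E^\Alt[\log E^*]$ into a KL term and a correction term supported on $A\cup B$.

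\textbf{Step 1: $E^* = \widetilde q/p$ pointwise.} Unwinding the definitions of $A$, $B$, $M$ at $\lambda^*$:
\begin{itemize}
\item on $A$ we have $q/p < c_1$, so $E^* = c_1 = \widetilde q/p$;
\item on $B$ we have $q/p > c_2$, so $E^* = c_2 = \widetilde q/p$;
\item on $M$ we have $c_1 \le q/p \le c_2$, so the clipping is inactive and $E^* = q/p = \widetilde q/p$.
\end{itemize}
Points with $p(x)=0<q(x)$ fall in $B$. They have $\widetilde q = 0$ and contribute nothing to $\widetilde Q$, so there the identity is read as $E^*=c_2$. They are handled inside the $B$ integral below. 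Since $\widetilde Q$ integrates to one by the defining equation \eqref{eq:lambda-star}, $\widetilde Q$ is a genuine probability distribution, and $E^*$ is its likelihood ratio against $\Null$ (in particular $\E^\Null[E^*]=1$).

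\textbf{Step 2: add and subtract $\widetilde q$.} Write
\[
\E^\Alt[\log E^*] = \int \widetilde q \log\frac{\widetilde q}{p} + \int (q-\widetilde q)\log E^* .
\]
The first integral is $\KL{\widetilde Q}{\Null}$. In the second, the integrand vanishes on $M$ because $q=\widetilde q$ there. On $A$ and $B$, $\log E^*$ is the constant $\log c_1$ or $\log c_2$. Hence the correction term equals
\[
\log c_1 \int_A (q-\widetilde q) + \log c_2 \int_B (q-\widetilde q).
\]

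\textbf{Step 3: evaluate the masses.} On $A$ we have $q<\widetilde q$, and on $B$ we have $q>\widetilde q$. Both $q$ and $\widetilde q$ integrate to one and agree on $M$, so
\[
\int_B (q-\widetilde q) = \int_A(\widetilde q - q) = \TV{\widetilde Q}{\Alt} =: w .
\]
The correction term is therefore $w(\log c_2 - \log c_1) = w\eps$, using $c_2 = e^{\eps}c_1$. This gives exactly $\KL{\widetilde Q}{\Null} + \eps\TV{\widetilde Q}{\Alt}$.

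\textbf{Main obstacle.} The only delicate point is justifying the add-and-subtract step, i.e.\ that the integrals are finite so that no $\infty-\infty$ arises. This follows because $\log E^*$ is bounded between $\log c_1$ and $\log c_2$. As a result, $\E^\Alt[\log E^*]$ and $\int(q-\widetilde q)\log E^*$ are finite, and so $\KL{\widetilde Q}{\Null}=\int \widetilde q \log E^*$ is finite as well. The existence of $\lambda^*$ is taken from \cref{sec:deferred_batch}.
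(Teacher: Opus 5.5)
Your proof is correct and follows the same route as the paper: identify $E^*$ with $\widetilde q/p$, reduce $\E^\Alt[\log E^*]-\KL{\widetilde Q}{\Null}$ to $\log c_1\int_A(q-\widetilde q)+\log c_2\int_B(q-\widetilde q)$, and conclude using $\widetilde Q(A)-\Alt(A)=\Alt(B)-\widetilde Q(B)=\TV{\widetilde Q}{\Alt}$ together with $c_2/c_1=e^{\eps}$. Your handling of points with $p=0<q$, and of finiteness via the boundedness of $\log E^*$, are details the paper leaves implicit.
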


Second, we prove that any e-variable $E$ with log-sensitivity $\eps$ can be efficiently transformed into an $\eps$-DP e-variable whose e-power approaches $\E^\Alt[\log E]$ asymptotically (\cref{thm:basic-batch}). Combining these two results will then yield an $\eps$-DP e-variable satisfying $\lim_{n \to \infty} \frac{1}{n}\E^{\Alt^n}[\log E] = \opt$.

\subsubsection{Privatizing our test statistic}

We have established that the non-private e-value $E^*$ simultaneously attains our target rate and has bounded log-sensitivity $\eps$. We therefore turn to the natural remaining challenge of constructing a private e-variable $\tilde E$ that preserves its asymptotic e-power. 

A seemingly plausible strategy is to simply add zero-mean Laplacian noise $Z \sim \mathrm{Lap}(b)$ to the sum $\sum_{i=1}^n \log E^*(x_i)$ and then exponentiate. Unfortunately, since $\exp(\cdot)$ is strictly convex, this will introduce a positive bias and violate the requirement that $\E^\Null[\tilde E] \leq 1$. 

We might hope to compensate for this by dividing our final output by the quantity $\E[\exp(Z)]$, which can be recognized as the moment-generating function of a Laplacian distribution evaluated at $1$. 
However, the MGF of a Laplace distribution with scale parameter $b$ is equal to $\frac{1}{1-b^2}$, and only exists for scale parameters $b<1$. Meanwhile, satisfying $\eps$-DP by adding Laplacian noise to $\log[E^*]$ directly would require using a scale parameter of $b=1$ exactly. 
This issue was also recognized by~\citet{csillagDifferentiallyPrivateEValues2025a}, but their proposed solution of using Gaussian noise is unavailable under pure DP. 
To overcome this challenge, we define a new, lower-sensitivity test statistic $\Lambda_n(E;\lambda)$, based on a family of merging functions previously studied by \citet{vovkMergingSequentialEvalues2024} for reasons unrelated to privacy:
    \begin{equation}
        \Lambda_n(E;\lambda) = \sum_{t = 1}^n \log (1-\lambda+\lambda  E(x_t)), 
        \qquad 
        \tilde \Lambda_n(E;\lambda) = \Lambda_n(E;\lambda) + Z_b - \log \mathbb{E}[\exp(Z_b)].
        \label{eq:batch_test_statistic}
    \end{equation}

    Here, $Z_b \sim \mathrm{Lap}(b)$ for some scale parameter $b < 1$ which will depend on our chosen value of $\lambda$. The following theorem shows that when $\lambda$ is chosen optimally, \cref{eq:batch_test_statistic} preserves the asymptotic e-power of $E$. We defer the proof to \cref{sec:deferred_batch}.

\begin{theorem}\label{thm:basic-batch}
   Let $E$ be an e-variable for $\Null$ with log-sensitivity $\eps$ such that $\E^\Alt[\log E] = \mu$. Then, for every $\eps = O(1)$, there exist computable values $\lambda \in (0,1)$ and $b<1$ such that $\exp(\tilde \Lambda_n(E;\lambda))$ is an $\eps$-DP e-variable for $\Null$ and:
    \begin{align*}
        \E^\Alt[\tilde\Lambda_n(E;\lambda)] \geq n\mu - \log(n \mu) - O(1).
    \end{align*}
\end{theorem}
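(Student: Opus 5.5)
We take the statistic $\Lambda_n(E;\lambda)$ in \cref{eq:batch_test_statistic} and check three things in turn: its sensitivity, its validity as an e-variable, and its e-power.

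\textbf{Sensitivity.} Since $E$ has log-sensitivity $\eps$, for any two values $e,e'$ taken on neighboring inputs we have $e'\le e^{\eps}e$. Then
\[
\frac{1-\lambda+\lambda e'}{1-\lambda+\lambda e}\le \frac{1-\lambda+\lambda e^{\eps}e}{1-\lambda+\lambda e}\le 1-\lambda+\lambda e^{\eps}.
\]
The last inequality holds because the ratio is increasing in $e$ and tends to $e^\eps$ as $e\to\infty$; we also need the bound $\le 1-\lambda+\lambda e^\eps$, so we check it at $e=1$ and use monotonicity. A cleaner route is to write $1-\lambda+\lambda e=(1-\lambda)+\lambda e$ and note that the mixture ratio is bounded by $\max(1,e^\eps)$ on each term. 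Either way, the sensitivity $\Delta_\lambda$ of $\log(1-\lambda+\lambda E)$ is at most $\log(1-\lambda+\lambda e^{\eps})$, and this is $\approx \lambda(e^\eps-1)$ for small $\lambda$. The Laplace mechanism with $b=\Delta_\lambda/\eps<1$ is then $\eps$-DP. The condition $b<1$ is automatic because $\Delta_\lambda<\eps$ for every $\lambda\in(0,1)$, so $\Lambda_n$ has sensitivity at most $\Delta_\lambda$ and $\tilde\Lambda_n$ is $\eps$-DP.

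\textbf{E-variable property.} Under $\Null$ the data are i.i.d., and each factor $1-\lambda+\lambda E(x_t)$ has $\Null$-mean at most $1$. Hence $\E^\Null[\exp\Lambda_n]\le 1$. The noise $Z_b$ is independent of the data, and subtracting $\log\E[e^{Z_b}]=-\log(1-b^2)$ exactly cancels its mean contribution. So $\E^\Null[\exp\tilde\Lambda_n]\le 1$.

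\textbf{Power.} Since $\E[Z_b]=0$, we have
\[
\E^\Alt[\tilde\Lambda_n]=n\,\E^\Alt[\log(1-\lambda+\lambda E)]+\log(1-b^2).
\]
By concavity of $\log$, $\log(1-\lambda+\lambda E)\ge \lambda\log E$, but this loses a factor of $\lambda$ and is too weak. Instead we write $\log(1-\lambda+\lambda E)=\log E+\log(\lambda+(1-\lambda)/E)\ge \log E+\log\lambda$. This gives $\E^\Alt\ge n\mu+n\log\lambda$, which is lossy in the other direction. The useful bound is $\log(\lambda+(1-\lambda)/E)\ge \log\lambda$, refined via $1-\lambda+\lambda E\ge \lambda E$, so the per-sample loss is $-\log\lambda\approx 1-\lambda$ for $\lambda$ near $1$.

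So we take $\lambda=1-\delta$, giving a per-sample loss of about $\delta$ and a total data loss of about $n\delta$. At $\lambda\to1$ the sensitivity $\Delta_\lambda=\log(\delta+(1-\delta)e^\eps)\approx \eps-\delta(1-e^{-\eps})$. Thus $1-b\approx \delta(1-e^{-\eps})/\eps$, and the noise penalty is $\log(1-b^2)\approx \log(2(1-b))=\log\delta+O(1)$, using $\eps=O(1)$ to absorb constants.

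The total loss is $n\delta+\log(1/\delta)+O(1)$. To sharpen the data term we use $\E^\Alt[\log(\lambda+(1-\lambda)/E)]\ge \log\lambda$; the cruder estimate $\log(1-\delta)\ge -2\delta$ already gives a loss of order $n\delta$. Optimizing at $\delta\asymp 1/n$ gives a total loss of $\log n+O(1)$.

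\textbf{Main obstacle.} The statement claims $\log(n\mu)$ rather than $\log n$, which suggests the data-term loss should scale as $\delta\mu n$ rather than $\delta n$. We would get this via a finer bound: by concavity in $\lambda$, $\log(1-\lambda+\lambda E)\ge(1-\delta)\log E$ (interpolating between $\lambda=0$ and $\lambda=1$). This gives a loss of $\delta n\mu$, and choosing $\delta=1/(n\mu)$ yields a total loss of $\log(n\mu)+O(1)$. The remaining work is to verify the expansion of $\Delta_\lambda$ carefully so that the $O(1)$ depends only on $\eps$. That computation is the delicate step: it requires controlling $\log(1-b^2)$ uniformly in small $\delta$, and this is where $\eps=O(1)$ enters, since $(1-e^{-\eps})/\eps$ must stay bounded away from $0$ and $\infty$.
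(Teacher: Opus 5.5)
Your final argument, the concavity route in your last paragraph, is the paper's argument. You take $b=\Delta_\lambda/\eps$, lower-bound the data term by $\lambda n\mu$ via concavity, set $\lambda=1-\delta$ so that $1-b\asymp\delta$ and $\log(1-b^2)=\log\delta-O(1)$, and choose $\delta=1/(n\mu)$. Two steps need repair.

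\textbf{The sensitivity bound.} Your justification of $\Delta_\lambda\le\log(1-\lambda+\lambda e^{\eps})$ is invalid as written. The map $e\mapsto\frac{1-\lambda+\lambda e^{\eps}e}{1-\lambda+\lambda e}$ is increasing with limit $e^{\eps}$. Checking it at $e=1$ therefore bounds it only for $e\le 1$; for $e>1$ it exceeds $1-\lambda+\lambda e^{\eps}$. Your ``cleaner route'' gives only $e^{\eps}$, i.e.\ $b=1$, where $\log\E[e^{Z_b}]=\infty$ and the construction fails.

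The missing observation is that $c_1:=\inf_x E(x)\le 1$, which is forced by $\E^\Null[E]\le 1$. Log-sensitivity $\eps$ puts every value of $E$ in $[c_1,e^{\eps}c_1]$, so the worst neighbouring pair is $(c_1,e^{\eps}c_1)$. The resulting ratio $\frac{1-\lambda+\lambda e^{\eps}c_1}{1-\lambda+\lambda c_1}$ is increasing in $c_1$, hence at most its value $1-\lambda+\lambda e^{\eps}$ at $c_1=1$. The paper instead uses the exact sensitivity $\log\frac{1-\lambda+\lambda c_2}{1-\lambda+\lambda c_1}$ for the range $[c_1,c_2]$ of $E$ and Taylor-expands it.

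Once fixed, your uniform bound is actually cleaner than the paper's expansion. From $\log(1-x)\le -x$ you get $\Delta_{1-\delta}\le\eps-\delta(1-e^{-\eps})$ exactly. Hence $1-b^2\ge 1-b\ge\delta(1-e^{-\eps})/\eps$, with a constant depending only on an upper bound for $\eps$. The ``delicate'' step you anticipate disappears.

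\textbf{The data term.} Your early dismissal of $\log(1-\lambda+\lambda E)\ge\lambda\log E$ as ``too weak'' is backwards. That bound loses only $\delta n\mu$, which is exactly what the $\log(n\mu)$ rate needs. The detour through $\log(1-\lambda+\lambda E)\ge\log E+\log\lambda$ yields only $n\mu-\log n-O(1)$, which does not imply the claim when $\mu$ is small. Present only the concavity route.
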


\begin{corollary}\label{cor:optimal-stat}
    By \cref{thm:exact-lb}, instantiating \cref{eq:batch_test_statistic} with $E^*$ yields an $\eps$-DP e-variable such that $\lim_{n\to\infty} \frac{1}{n}\E^\Alt[\log E] \geq \KL{\widetilde Q}{\Null} + \eps \TV{\widetilde Q}{\Alt}$.
\end{corollary}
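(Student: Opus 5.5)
The corollary is obtained by chaining \cref{thm:exact-lb} into \cref{thm:basic-batch}, so the plan is mostly bookkeeping.

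\textbf{Step 1: $E^*$ satisfies the hypotheses of \cref{thm:basic-batch}.} There are two things to check.

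First, $E^*$ is an e-variable for $\Null$. Since $E^*(x)=\min(c_2,\max(c_1,q(x)/p(x)))$, we have $E^* = \widetilde q/p$ pointwise:
\begin{itemize}
\item on $A$, the clip gives $c_1$;
\item on $B$, it gives $c_2$;
\item on $M$, it gives $q/p$.
\end{itemize}
Therefore $\E^\Null[E^*]=\int \widetilde q = f(\lambda^*)=1$, which is exactly the defining equation \eqref{eq:lambda-star}.

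Second, $E^*$ has log-sensitivity $\eps$. The function $\log E^*$ takes values in $[\log c_1,\log c_2]$, an interval of length $\log(c_2/c_1)=\eps$. So changing one data point changes $\log E^*(x_t)$ by at most $\eps$.

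Third, by \cref{thm:exact-lb} the per-sample e-power is $\mu=\E^\Alt[\log E^*]=\KL{\widetilde Q}{\Null}+\eps\TV{\widetilde Q}{\Alt}$.

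\textbf{Step 2: apply \cref{thm:basic-batch}.} For fixed $\eps=O(1)$, the theorem supplies $\lambda\in(0,1)$ and $b<1$ such that $\exp(\tilde\Lambda_n(E^*;\lambda))$ is an $\eps$-DP e-variable for $\Null$. It also gives
\[
\E^{\Alt^n}[\tilde\Lambda_n(E^*;\lambda)]\ge n\mu-\log(n\mu)-O(1).
\]

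\textbf{Step 3: divide by $n$ and take $\liminf$.} If $\mu>0$, then $\log(n\mu)/n\to 0$ and $O(1)/n\to0$, which gives $\liminf_n \frac1n\E^\Alt[\log E]\ge\mu$.

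The only care needed is the degenerate case $\mu=0$, where $\log(n\mu)$ is undefined. In that case $\widetilde Q=\Null$ is optimal, and the trivial e-variable $E\equiv1$ is $\eps$-DP and achieves rate $0$, so the bound holds trivially.

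Here $\E^\Alt[\log E]$ denotes $\E^\Alt[\tilde\Lambda_n]$, since $\log\exp(\tilde\Lambda_n)=\tilde\Lambda_n$. Strictly speaking the limit should be read as a $\liminf$. Combined with \cref{cor:one-sided} and \cref{thm:e-power-ub}, the limit in fact exists and equals $\mu$.

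I expect no real obstacle. The only delicate points are confirming that the log-sensitivity notion used in \cref{thm:basic-batch} is the range-of-$\log E$ notion verified in Step 1, and handling the $\mu=0$ edge case.
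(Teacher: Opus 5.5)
Your proposal is correct and is essentially the paper's argument. The corollary is just \cref{thm:exact-lb} fed into \cref{thm:basic-batch}, followed by dividing by $n$; the proof of \cref{thm:exact-lb} already notes $E^*=\widetilde q/p$ with $c_2/c_1=e^\eps$, and your explicit checks of the e-variable property, the log-sensitivity, and the $\liminf$/$\limsup$ sandwich are the bookkeeping the paper leaves implicit. One small refinement: for $\mu=0$ you need not switch to $E\equiv 1$, because then $\Alt=\Null$ and $E^*\equiv 1$, so the instantiated statistic itself has e-power $\log(1-b^2)$ for any fixed $\lambda\in(0,1)$, which vanishes after dividing by $n$.
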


Summarizing our results in the batch setting, we have derived the following chain of inequalities:
\begin{align*}
    \opt 
    &= \lim_{n \to \infty} \sup_{\M\,\eps\text{-DP}} \frac{\KL{\M(Q^n)}{\M(P^n)}}{n} &(\cref{prop:characterization}) \\
    &\leq \inf_{Q' \in \mathcal{D}} \KL{Q'}{\Null} + \eps \TV{Q'}{\Alt} & (\cref{lem:decomp}) \\
    &= \KL{\widetilde Q}{\Null} + \eps \TV{\widetilde Q}{\Alt} & (\cref{thm:e-power-ub}) \\
    &= \E^\Alt[\log E^*] & (\cref{thm:exact-lb}) \\
    &\leq \lim_{n\to\infty} \frac{1}{n}\E^\Alt[\tilde \Lambda_n(E^*; \lambda)] & (\cref{thm:basic-batch}) \\
    &\leq \opt. & (\textit{by definition of optimality})
\end{align*}

Since the first and last terms agree, the inequalities become equalities and \cref{thm:duality} follows as an immediate corollary. Moreover, we have shown that the private e-variable described in \cref{cor:optimal-stat} is asymptotically log-optimal among all $\eps$-DP e-variables.
In \cref{app:constants}, we describe how one can further improve its e-power for finite sample sizes through distribution-dependent post-processing.

Finally, we note that the construction of $E^*$ requires the computation of integrals over the full support of $\Alt$ and $\Null$, which may be expensive or even impossible in some settings (e.g. if $\Alt$ and $\Null$ are very high-dimensional, or defined implicitly through a generative model). In \cref{app:distribution-independent}, we define a distribution-independent alternative to $E^*$, the truncated and scaled likelihood-ratio ($\tsllr_\eps$), and show that it can be used to construct an $\eps$-DP e-variable with asymptotic rate $\Theta(\opt)$ for any $\eps = \Theta(1)$. We leave as an open question whether any distribution-independent $\eps$-DP e-variable can match the optimal rate when $\eps = o(1)$.

\section{Sequential Setting}\label{sec:sequential}

In sequential hypothesis testing, new data is collected gradually over time until some condition is met and the experiment is concluded. The number of data points used to make any final decision is therefore a random variable, $N$, called a \emph{stopping time}. It is typical for $\E[N]$ to be much smaller than the minimum $n$ required in the batch setting~\citep{wald1948optimum}; in fact, the Sequential Probability Ratio Test (SPRT) \citep{wald1945} was first developed during World War II in response to the US Navy's need for efficient alternatives to the likelihood ratio test. Much like the likelihood ratio test, however, the SPRT is decision-theoretic and requires the analyst to commit to a stopping condition in advance.

The analogue in our setting is the \emph{e-process}, which is a sequence of e-variables $E_1, E_2, \ldots$ with the additional property that $E_N$ is itself an e-variable for any stopping time $N$. This leads to some very useful properties, such as optional stopping (we may stop the experiment at any time and draw valid conclusions from our final output) as well as optional continuation (if our results are inconclusive, a second scientist can run a follow-up experiment and merge the resulting e-values by multiplication).  

The theory of e-processes is intimately connected with that of (non-negative) martingales, which have an interesting game-theoretic interpretation as the accumulated wealth of a gambler betting against the null~\citep{shafer2021testing}. We collect several standard definitions and results about martingales in \cref{app:lemmas-and-defs}. \citet[Chapter 12]{upfal2005probability} provide an accessible introduction to the tools we require. 

\begin{definition}
    A process $(M_t)$ is called a \emph{martingale} under $\Null$ if $\E^\Null[M_t \mid M_{1:t-1}] = M_{t-1}$ for all $t$. 
    If the equals sign is replaced by $\leq$ (resp. $\geq$), then it is called a \emph{supermartingale} (resp. \emph{submartingale}).
\end{definition}

\begin{definition}
    A process $(M_t)$ is called a \emph{test (super)martingale for $\Null$} if (a) it is $\Null$-almost surely nonnegative, (b) $M$ is a (super)martingale under $\Null$, and (c) $\mathbb{E}^\Null[M_0] \leq 1$.
\end{definition}

\begin{definition}
    A sequence of e-values $(E_t)$ 
    is called an \emph{e-process for $\Null$} if there exists a test (super)martingale $M$ such that $E \leq M$ $\Null$-almost surely.
\end{definition}

Any test supermartingale is an e-process, but the converse is not true in general. By Ville's inequality, any e-process can be used to construct a level-$\alpha$ sequential test by thresholding at $1/\alpha$.

As in the batch setting, the precise definition of a private e-process requires some care. Essentially, privacy must be defined with respect to the sequence of data points that our algorithm receives as input, whereas e-process properties must be established with respect to the private output sequence.

\begin{definition}[Differentially Private E-process]\label{def:dp-eproc}
    Let $(x_t) \in \mathcal{X}^\mathbb{N}$ be a sequence of data points. For any mechanism $\M : \mathcal{X}^\mathbb{N} \to \mathcal{O}^\mathbb{N}$, define the output process $O_t = \M(x)_t$. 
    We say that a sequence of random variables $(E_t)$
    is an $\varepsilon$-DP e-process for $\Null$ if \textbf{(a)} $\M$ satisfies $\varepsilon$-DP, \textbf{(b)} $O_t$ depends solely on $x_{1:t}$ for all $t$, and \textbf{(c)} $(E_t)$ is an e-process for $\M(\Null^\mathbb{N})$.
\end{definition}

\subsection{Defining the Right Metric}

The standard metric for quantifying the power of an e-process $M$ is the \emph{asymptotic growth rate of $M$ under $\Alt$}, which is defined by the lower limit $\liminf_{t \to \infty} \log(M_t)/t$ (if it is a constant). In particular, for any pair of distributions $\Alt$ and $\Null$, the non-private likelihood ratio process $M_t = \frac{d\Alt^t}{d\Null^t}(x_{1:t})$ achieves the optimal asymptotic growth rate of $\KL{\Alt}{\Null}$
. A weaker metric, usually applied in the context of composite alternates, is to consider the asymptotic e-power $\lim_{t \to\infty} \E^{\Alt^t}[\log E_t]/t$.

The fact that both metrics are stated in terms of limits is justified in the non-private setting, where the asymptotic behavior of likelihood ratio processes and their extensions is closely related to their local behavior for finite samples and increments. But in our case, the addition of privacy constraints introduces a tradeoff: by batching many points together before producing a new output, we can minimize the total error from noise at the cost of additional latency. Metrics defined in terms of limits essentially disregard one half of this tradeoff and therefore incentivize clearly impractical algorithm designs, e.g., involving outrageously large batch sizes.

For this reason, we argue that DP e-processes should be evaluated in terms of their expected value at finite stopping times. 
Moreover, because one of the main advantages of sequential hypothesis testing with e-processes is that we can choose whether to gather more data adaptively without hurting the statistical validity of our conclusions, it would be ideal to have a single private algorithm which is able to perform well at all stopping times simultaneously. In the following section, we prove \cref{thm:agr-ub}, which characterizes the best guarantee of this form that we could hope for.

\subsection{Lower Bound on Expected Stopping Times}

      \begin{theorem}\label{thm:agr-ub}
          Let $(E_t)$ be an $\varepsilon$-DP e-process. Then, for any stopping time $N$ which is bounded above by some constant $N_{max} <\infty$,
        \begin{equation*}
            \E^\Alt\left[\log E_N \right] \leq \E^\Alt[N] \opt.
        \end{equation*} 
      \end{theorem}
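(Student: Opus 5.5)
Our plan is to follow the structure of \cref{lem:decomp} for the whole stopped process. We first reduce to a KL bound, then run the coupling argument adaptively with $\widetilde Q$ as the shadow distribution.

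\textbf{Step 1: reduction to KL.} Since $N\le N_{max}$, the algorithm observes at most $N_{max}$ points. We therefore regard the whole history as the output of a single mechanism on $x_{1:N_{max}}$. Let $Y=(O_1,\dots,O_N)$ be the output transcript up to time $N$. Because $N$ is a stopping time, it is determined by $Y$ (together with any external randomness independent of the data, which we fold into $\M$). Since $(E_t)$ is an e-process for $\M(\Null^{\mathbb N})$, $E_N$ is an e-variable for the null law of $Y$, which we denote $\Null_Y$. The usual log-optimality of the likelihood ratio (Gibbs' inequality) then gives
\[
\E^\Alt[\log E_N]\le \KL{\Alt_Y}{\Null_Y}.
\]
It therefore suffices to show $\KL{\Alt_Y}{\Null_Y}\le \E^\Alt[N]\bigl(\KL{\widetilde Q}{\Null}+\eps\TV{\widetilde Q}{\Alt}\bigr)$. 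By \cref{thm:e-power-ub} and \cref{thm:duality}, this right-hand side equals $\E^\Alt[N]\opt$.

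\textbf{Step 2: adaptive coupling.} We reuse the coupling from \cref{lem:decomp} with $Q'=\widetilde Q$. Under $\Alt$, draw $(X_t,\tilde X_t)\sim\gamma$ i.i.d.\ with $B_t=\mathbb I[X_t\ne\tilde X_t]$. Under $\Null$, set $\tilde X_t=X_t\sim\Null$ and draw $B_t$ from the matched conditional given $\tilde X_t$. Step 1 shows $N$ is a stopping time for the output filtration, which is also a stopping time for the joint filtration $\mathcal F_t=\sigma(\tilde X_{1:t},B_{1:t},O_{1:t})$. By data processing, $\KL{\Alt_Y}{\Null_Y}$ is at most the KL divergence between the laws of the stopped joint sequence $(\tilde X_t,B_t,O_t)_{t\le N}$.

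\textbf{Step 3: chain rule and Wald.} We expand this KL with the chain rule one step at a time. The increment at time $t$ is weighted by $\mathbb I[N\ge t]$, which is $\mathcal F_{t-1}$-measurable. The $(\tilde X_t,B_t)$ part of each increment contributes $\KL{\widetilde Q}{\Null}$, since fresh draws are independent of the past and $B_t$ has matched conditionals. Summing gives $\E^\Alt[N]\KL{\widetilde Q}{\Null}$; this is the KL form of Wald's identity, valid because $N$ is bounded.

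The main obstacle is the output part. Here $O_t$ depends on the true past $X_{1:t}$, which is not in the conditioning $\sigma$-algebra. We handle it as in \cref{eq:jc+gp}. By joint convexity, we mix over the hidden $X$ given $(\tilde x,b)$ for the whole stopped transcript. Then group privacy bounds the pointwise log-likelihood ratio of the transcript under $x$ versus $\tilde x$. This ratio factorizes into per-step ratios, and each factor depends on mismatches among coordinates seen so far.

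To use only mismatches before stopping, we would define a hybrid dataset $x^{(N)}$ that agrees with $x$ on coordinates $\le N$ and with $\tilde x$ afterwards. Condition (b) of \cref{def:dp-eproc} says outputs up to time $N$ depend only on the data up to time $N$, so the transcript law is unchanged by this replacement. Group privacy then yields the bound $\eps\sum_{t\le N}B_t$. The delicate point is that $N$ itself is random, so this replacement must be justified under the coupling before taking expectations. Given that, optional stopping gives $\E^\Alt[\sum_{t\le N}B_t]=\TV{\widetilde Q}{\Alt}\,\E^\Alt[N]$, since $B_t$ is independent of $\mathcal F_{t-1}$. Adding the two contributions completes the proof.
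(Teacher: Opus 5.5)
Your proposal is correct. It shares the paper's skeleton: the shadow coupling with $\widetilde Q$, matched conditionals for $B_t$, data processing up to the augmented transcript, a per-step $\KL{\widetilde Q}{\Null}$ term, joint convexity plus group privacy for the output term, and Wald for $\sum_{t\le N}B_t$. Where you differ is the stopping-time bookkeeping.

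The paper keeps $\tilde X$ out of the filtration, using $\mathcal{G}_t=\sigma(O_{1:t},B_{1:t})$. It introduces the potential $\Phi'_t=D(\tilde X_{1:t}\mid\mathcal{G}_t)$ and shows $L'_t+\Phi'_t$ is a $\mathcal{G}$-submartingale. It then bounds the compensator by $\sum_{i\le t}(\KL{\widetilde Q}{\Null}+\eps b_i)$ and applies optional stopping and Wald.

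You instead put $\tilde X_t$ directly into the filtration and apply the chain rule to the stopped sequence. The paper's telescoping identity for $\Phi'$ amounts to exactly that chain rule, so your version is shorter and avoids the Doob decomposition. Your hybrid argument also makes explicit a step the paper passes over quickly. The fixed-time bound $D(O_{1:t}\mid b_{1:t},\tilde x_{1:t})\le\eps\sum_{i\le t}b_i$ is an average over outputs, and something like your $x^{(N)}$ is needed to charge only mismatches before $N$.

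Two points should be written out rather than left as ``given that''.

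\textbf{The random time.} This issue disappears if you work on each event $\{N=n\}\in\sigma(O_{1:n})$ with $x,\tilde x$ fixed. There the stopped transcript's density is that of $O_{1:n}$ under $x$. By condition (b), this equals the density under $x^{(n)}=(x_{1:n},\tilde x_{n+1:N_{max}})$. Group privacy for the $\eps$-DP map $x\mapsto O_{1:n}$ then bounds the log-ratio pointwise by $\eps\sum_{i\le n}b_i$. Only this cumulative ratio is controlled, not the individual per-step factors.

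\textbf{Regrouping before mixing.} Before mixing over $X$ given $(\tilde x,b)$, you must regroup the interleaved terms $D(O_t\mid\tilde X_{1:t},B_{1:t},O_{1:t-1})$ into the conditional KL of the stopped transcript given all of $(\tilde X_s,B_s)_{s\le N_{max}}$. This uses that $O_t$ is conditionally independent of $(\tilde X_s,B_s)_{s>t}$ under both hypotheses. Without it, the mixture weights would depend on $o_{1:t-1}$.

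You can in fact bypass both joint convexity and this regrouping. On $\{N=n\}$ the log-likelihood ratio of $(\tilde X_t,B_t,O_t)_{t\le n}$ is pointwise at most $\sum_{t\le n}[\log(\widetilde q/p)(\tilde X_t)+\eps B_t]$, and Wald finishes.
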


We remark that results with a similar form to \cref{thm:agr-ub} are widely known to apply to random sums of independent random variables by Wald's equation. In particular, an analogous result for local DP e-variables follows immediately from classical techniques. The picture is much more complicated in the central DP setting, and the main technical novelty in the proof of \cref{thm:agr-ub} is the introduction of an appropriate potential function which essentially allows us to analyze an arbitrary central DP mechanism as if it operated in a pointwise fashion.

As an application of \cref{thm:agr-ub}, we derive lower bounds on the expected stopping time of \emph{any} $\eps$-DP sequential hypothesis test, including those that are not (obviously) based on e-processes. We define an $\eps$-DP sequential test as an $\eps$-DP mechanism $\M:\mathcal{X}^* \to \lbrace 0, 1\rbrace^*$ with output sequence $\phi_1, \phi_2,\ldots$ and stopping time $N$. We interpret $\phi_t=1$ to mean that $\M$ has rejected the null at or before time $t$. The power of $\M$ is $1-\beta\coloneq \Alt(\phi_N = 1)$, and its level is $\alpha \coloneq \Null(\phi_N=1)$. \cref{thm:agr-ub} immediately implies a lower bound on the stopping time of $\eps$-DP sequential tests that are specifically based on thresholding e-processes; the following proposition shows that this lower bound is in fact universal.
\begin{proposition}\label{prop:sequential-tests}
    Let $\M$ be any $\varepsilon$-DP level-$\alpha$ sequential test with power $1-\beta$ and stopping time $N < N_{max}$ for some constant $N_{max} < \infty$. Then:
    \begin{equation*}
        \E^\Alt[N] \geq \frac{(1-\beta)\log((1-\beta)/\alpha) + \beta\log(\beta/(1-\alpha))}{\opt}.
    \end{equation*}
\end{proposition}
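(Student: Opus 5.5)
The idea is to turn the arbitrary $\eps$-DP sequential test $\M$ into an $\eps$-DP e-process whose value at the stopping time is exactly the likelihood ratio of the final decision. We then apply \cref{thm:agr-ub} and recognize the resulting e-power as a binary KL divergence. The bound on the right-hand side is precisely $\mathrm{kl}(1-\beta\,\Vert\,\alpha)$, where $\mathrm{kl}(a\Vert b)=a\log(a/b)+(1-a)\log((1-a)/(1-b))$. This mirrors the classical Wald-type argument, with \cref{thm:agr-ub} replacing Wald's identity.

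\textbf{Construction.} Let $\Null_\M$ and $\Alt_\M$ denote the laws of the full output sequence of $\M$ (together with its stopping decision) when the data are drawn from $\Null^{\mathbb{N}}$ and $\Alt^{\mathbb{N}}$ respectively. Let $\mathcal{F}_t$ be the filtration generated by the outputs up to time $t$. Define
\[
L_t=\frac{d\Alt_\M|_{\mathcal{F}_t}}{d\Null_\M|_{\mathcal{F}_t}},
\]
which is a test martingale for $\M(\Null^\mathbb{N})$ with $L_0=1$. Hence $(L_t)$ is an e-process in the sense of \cref{def:dp-eproc}: the mechanism is $\M$ itself, which is $\eps$-DP, and its outputs are causal. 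To make the e-power at $N$ depend only on the decision, post-process instead to
\[
E_t=\E^{\Null_\M}[L_N\mid\mathcal{F}_{t\wedge N}]
\]
at time $N$. The cleaner choice is to set $E_N=g(\phi_N)$, where $g(1)=(1-\beta)/\alpha$ and $g(0)=\beta/(1-\alpha)$, and to define $E_t=\E^{\Null_\M}[g(\phi_N)\mid\mathcal{F}_{t\wedge N}]$. Since $N\le N_{max}$ is bounded, this is a bounded martingale under the null. It has $\E^\Null[E_0]=\E^\Null[g(\phi_N)]=(1-\beta)+\beta=1$ and is nonnegative, so it is a test martingale, and therefore an e-process for $\M(\Null^\mathbb{N})$. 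Computing it is pure post-processing of $\M$'s outputs; it uses knowledge of $\Null,\Alt$ but not of the data, so privacy is preserved.

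\textbf{Applying the theorem.} \cref{thm:agr-ub} applies to the bounded stopping time $N$ and gives $\E^\Alt[\log E_N]\le\E^\Alt[N]\opt$. The left side is
\[
(1-\beta)\log\frac{1-\beta}{\alpha}+\beta\log\frac{\beta}{1-\alpha},
\]
and dividing by $\opt$ gives the claim. The degenerate cases $\alpha\in\{0,1\}$ or $\opt=0$ are handled trivially or by a limiting argument.

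\textbf{Main obstacle.} The delicate step is verifying that this post-processed process fits \cref{def:dp-eproc} exactly. Condition (c) requires an e-process with respect to the output distribution under the null, and $N$ must be a stopping time for the output filtration used in \cref{thm:agr-ub}. I would handle this by declaring the output of the new mechanism at time $t$ to be $(\phi_{1:t},\mathbb{I}[N\le t])$, possibly together with $\M$'s full output at time $t$. Then $N$ is adapted by construction, and the conditional expectations are measurable functions of that output. If $N$ is instead determined internally by $\M$, I would absorb the stopping indicator into $\M$'s output; this is already part of what a sequential test releases, so it costs no extra privacy.
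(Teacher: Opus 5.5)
Your proof is correct and follows the paper's route: turn the test into an $\eps$-DP e-process on the outputs of $\M$, apply \cref{thm:agr-ub} at the bounded stopping time $N$, and recognize the binary KL divergence. The one difference is that the paper uses the full output likelihood-ratio process $M_t$ and then the data-processing inequality $\KL{\Alt_{\phi_{1:N}}}{\Null_{\phi_{1:N}}}\ge\KL{\Alt_{\phi_N}}{\Null_{\phi_N}}$, whereas your Doob martingale of $g(\phi_N)$ attains the binary KL exactly at $N$; also, by adding the stopping indicator to the output you make explicit that $N$ is a stopping time of the output filtration, which the paper leaves implicit.
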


\subsection{Upper Bound on Expected Stopping Times}

\begin{algorithm}
\SetAlgoNoEnd
\SetAlgoNoLine
\caption{Batch to E-process Conversion}\label{alg:batch2proc}
\KwData{Privacy parameter $\varepsilon > 0$; data points $x_1, x_2, \ldots$; e-variable $E(\cdot)$ with log-sensitivity $c\varepsilon$ such that $\E^{\Alt^n}[\log E(X)] = n\mu$ for all $n\ge 1$; competitive ratio $\rho > c$}
Choose $\lambda \in (1/\rho, \min(1,1/c))$ to minimize $t_1$, defined below\;
Compute Laplacian compensator $C_\lambda \gets -\log(1-c^2\lambda^2)$\;
Set minimum stopping time $t_1 \gets \rho\lambda + \frac{\rho^2 \lambda C_\lambda}{\mu(\rho \lambda-1)^2}$\;
For $j \geq 1$, recursively define $t_{j+1} \gets \rho\left( \lambda t_j - \frac{jC_\lambda}{\mu}\right)$\;
Initialize $\widetilde E_0 = 1$\;
\For{$i=1,2,\ldots$}{
  \eIf{$i = \lfloor t_j \rfloor$ for some $j$}{
    Output $\widetilde E_i = \widetilde E_{i-1} \cdot\exp\left(\lambda \log E(x_{t_{j-1}+1:t_j}) + \text{Lap}(\lambda c) - C_\lambda\right)$\;
  }{
      Output $\widetilde E_i = \widetilde E_{i-1}$\;
  }
}
\end{algorithm}

In this section, we provide an efficient algorithm for reducing the sequential setting to the batch setting (\cref{alg:batch2proc}), and we prove that this algorithm is simultaneously optimal up to a constant competitive ratio for all stopping times that are not too small (\cref{thm:sequential-lb}). By tuning the hyperparameters of \cref{alg:batch2proc}, we can trade off between a lower minimum stopping time and a tighter competitive ratio after that initial time is reached.

As in the batch setting, we begin by considering the problem of directly privatizing an existing e-process through additive noise:

\begin{lemma}\label{lem:ind-martingale}
    Let $(M_t)$ be any non-private test (super)martingale such that $M_t$ depends only on $x_{1:t}$ for all $t$, and define $\Lambda_t = \log(M_t)$. Let $\M: \mathbb{R}^\mathbb{N} \to \mathbb{R}^\mathbb{N}$ be an $\eps$-DP mechanism, and let $\tilde \Lambda_t = \M(\Lambda)_t$ depend only on $\Lambda_{1:t}$. Then, if $\lambda \leq 1$ and $\xi_t \coloneq \tilde \Lambda_t - \Lambda_t$ is independent of $\Lambda_t$, the following sequence of random variables is an $\eps$-DP e-process:
    \begin{equation*}
    E_0 = 1,
    \quad ~E_t \coloneq \exp\left( \lambda \tilde \Lambda_t - \sum_{i=1}^t K_i(\lambda) \right), 
    \quad K_i(\lambda) \coloneq \log \mathbb{E}^\Null[\exp(\lambda (\xi_i - \xi_{i-1})) \mid \tilde \Lambda_{1:i-1}].
    \end{equation*}

\end{lemma}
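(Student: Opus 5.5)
Privacy is immediate: $E_t$ is a deterministic function of $\tilde\Lambda_{1:t}$ (the compensators $K_i(\lambda)$ depend only on $\tilde\Lambda_{1:i-1}$ and the fixed law of the noise), so by post-processing the output process inherits $\eps$-DP from $\M$. Condition (b) of \cref{def:dp-eproc} holds because $\tilde\Lambda_t$ depends only on $\Lambda_{1:t}$, hence only on $x_{1:t}$. The real work is condition (c): exhibit a test supermartingale with respect to the filtration generated by the private outputs that dominates $E_t$. I will show that $E_t$ is itself a test supermartingale with respect to $\mathcal{F}_t=\sigma(\tilde\Lambda_{1:t})$. It is nonnegative and $E_0=1$.

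Write $E_t = E_{t-1}\exp\bigl(\lambda(\Lambda_t-\Lambda_{t-1}) + \lambda(\xi_t-\xi_{t-1}) - K_t(\lambda)\bigr)$. It suffices to show $\E^\Null[\exp(\lambda(\Lambda_t-\Lambda_{t-1}) + \lambda(\xi_t-\xi_{t-1}))\mid \mathcal{F}_{t-1}] \le \exp(K_t(\lambda))$. First condition on the larger sigma-field $\mathcal{G}_{t-1}$ containing $x_{1:t-1}$ and all noise up to $t-1$ (so $\Lambda_{1:t-1}$, $\tilde\Lambda_{1:t-1}$, $\xi_{1:t-1}$ are measurable). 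I will use the independence hypothesis, read as: the noise increment $\xi_t-\xi_{t-1}$ is independent of the data increment $\Lambda_t-\Lambda_{t-1}$ given the past, with conditional law depending only on the published past $\tilde\Lambda_{1:t-1}$. Then the conditional expectation factors as
\[
\E^\Null\bigl[e^{\lambda(\Lambda_t-\Lambda_{t-1})}\mid \mathcal{G}_{t-1}\bigr]\cdot \E^\Null\bigl[e^{\lambda(\xi_t-\xi_{t-1})}\mid \mathcal{G}_{t-1}\bigr].
\]
For the first factor, Jensen with the concave map $u\mapsto u^\lambda$ ($0\le\lambda\le1$) and the supermartingale property of $M$ give $\E^\Null[(M_t/M_{t-1})^\lambda\mid\mathcal{G}_{t-1}] \le (\E^\Null[M_t/M_{t-1}\mid \mathcal{G}_{t-1}])^\lambda \le 1$ (noise is independent of data, so conditioning on it does not affect $M$'s supermartingale property). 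The second factor equals $\exp(K_t(\lambda))$ under the reading above. Towering down to $\mathcal{F}_{t-1}\subseteq\mathcal{G}_{t-1}$ yields the supermartingale inequality, and Ville/optional stopping makes $E_N$ an e-variable.

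The main obstacle is the second factor: $K_t$ is defined by conditioning only on $\tilde\Lambda_{1:i-1}$, while the factorization naturally conditions on the finer $\mathcal{G}_{t-1}$. If the noise increment's conditional MGF given $\mathcal{G}_{t-1}$ differs from that given $\mathcal{F}_{t-1}$, the bound fails pointwise. I would handle this by instead conditioning the product on $\mathcal{F}_{t-1}$ directly: bound the data factor by $1$ conditionally on $\mathcal{G}_{t-1}$ \emph{and} the new noise (using independence of $\xi$ from $\Lambda$), then integrate out the noise increment conditional on $\mathcal{F}_{t-1}$, which by definition produces exactly $\exp(K_t(\lambda))$. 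This ordering only needs the noise to be independent of the data process, which is how I will interpret the hypothesis.
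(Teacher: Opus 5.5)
Your proof is correct and follows the same route as the paper. Like the paper, you split $E_t/E_{t-1}$ into the data factor $e^{\lambda(\Lambda_t-\Lambda_{t-1})}$, whose conditional mean is at most $1$ by Jensen for $u\mapsto u^\lambda$ and the supermartingale property of $M$, and the noise factor $e^{\lambda(\xi_t-\xi_{t-1})-K_t(\lambda)}$, normalized by the definition of $K_t$, and then combine the two via independence. Your tower-property step (bound the data factor given a $\sigma$-field containing the past and the new noise, then integrate out the noise given $\sigma(\tilde\Lambda_{1:t-1})$) makes rigorous the paper's terse ``we use independence to conclude,'' and that step implicitly needs your reading of the hypothesis as the noise process being independent of the data process. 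One small fix: condition on $\Lambda_{1:t-1}$ rather than $x_{1:t-1}$, as the paper does, so that you only use that $M$ is a supermartingale in its own filtration.
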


The key quantity in \cref{lem:ind-martingale} is the compensator term $K_i$, which imposes a tradeoff between power and latency; by Jensen's inequality, releasing an unbiased estimate of $\Lambda_t$ induces an upward bias after exponentiating, and the accumulated compensator terms are the price we must pay to maintain the martingale property. This naturally suggests an algorithmic approach based on batching, because re-releasing our last output preserves the martingale property with no long term cost. 

Building on this idea, we present \cref{alg:batch2proc}, a universal mechanism for constructing $\varepsilon$-DP e-processes out of e-variables with bounded sensitivity. Compared to the batching-based algorithm of \citet{csillagDifferentiallyPrivateEValues2025a}, our main technical innovation is the design of a carefully calibrated batching schedule which allows us to optimize e-power over all stopping times simultaneously. The following theorem summarizes the utility guarantees of our algorithm:

\begin{theorem}\label{thm:sequential-lb}
    Let $E$ be a non-private e-variable with log-sensitivity $c\eps$ such that $\E^{\Alt^n}[\log E(X)] = n\mu$ for all $n$, and let $(\widetilde E_t)$ be the output of \cref{alg:batch2proc} with parameters $E, \eps$, and $\rho = \eta^2c$ for some $\eta>1$. Then $(\widetilde E_t)$ is a $\eps$-DP e-process for $\Null$. Moreover, for any stopping time $N \geq \inf_{\lambda \in (1/\rho,1/c)} \left(\rho\lambda - \frac{\rho^2 \lambda \log(1-c^2\lambda^2)}{\mu(\rho \lambda-1)^2}\right) = \widetilde O_{\eta\to 1^+}\left(\frac{c}{\mu(\eta-1)^2}\right)$ which is $\Alt$-almost surely finite, $\E^\Alt[\log \widetilde E_N] \geq \E^\Alt[N]\mu/\rho$.
\end{theorem}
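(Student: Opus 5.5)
Two things need to be shown: privacy together with the e-process property, and then the power bound.

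\emph{Privacy and validity.} The batches $x_{t_{j-1}+1:t_j}$ are disjoint, so changing one data point affects exactly one batch statistic $\lambda\log E(\cdot)$. That statistic has sensitivity $\lambda c\eps$. Adding $\mathrm{Lap}(\lambda c)$ noise (scale read in units of $\eps$, i.e. scale $\lambda c$ relative to sensitivity $\lambda c\eps$) makes that batch release $\eps$-DP. By parallel composition the whole output sequence is $\eps$-DP, and each output depends only on the prefix, as \cref{def:dp-eproc}(b) requires.

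For the e-process property, we show $\widetilde E$ is a test supermartingale under $\Null$. At a batch time $t_j$ the multiplicative increment has conditional expectation
\[
\E^\Null[E^\lambda]\cdot\E[e^{Z}]\,e^{-C_\lambda}.
\]
Here $\E^\Null[E^\lambda]\le (\E^\Null E)^\lambda\le 1$ by Jensen, since $\lambda\le 1$. Also $\E e^{Z}=1/(1-\lambda^2c^2)=e^{C_\lambda}$, which is finite because $\lambda c<1$. Between batch times $\widetilde E$ is constant. Together these give the supermartingale property; this is the batched instance of \cref{lem:ind-martingale}.

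\emph{Power.} Let $J(N)$ be the number of completed batches by time $N$. Then
\[
\log\widetilde E_N=\sum_{j\le J(N)}\Bigl(\lambda\log E(\text{batch}_j)+Z_j-C_\lambda\Bigr).
\]
The summands are i.i.d. across batches. Each $J$-th batch is fully determined by data after the previous batch, and $\{J(N)\ge j\}=\{N\ge t_j\}$ is determined before batch $j$ is seen. Applying Wald's identity (or optional stopping for the martingale $\sum_j(\lambda\log E_j-\lambda\mu\,|\text{batch}_j|+Z_j)$, with $N$ bounded or truncated and then passed to a limit) gives
\[
\E^\Alt[\log\widetilde E_N]=\E^\Alt\bigl[\lambda\mu\,t_{J(N)}-J(N)\,C_\lambda\bigr].
\]
It therefore suffices to prove the pathwise inequality
\[
\lambda\mu t_J-JC_\lambda\ \ge\ \mu N/\rho \qquad\text{for every } N\ge t_1,
\]
with $J=J(N)$, so that $t_J\le N<t_{J+1}$. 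The recursion is designed for exactly this. Since $N<t_{J+1}=\rho(\lambda t_J-JC_\lambda/\mu)$, we get
\[
\mu N/\rho<\lambda\mu t_J-JC_\lambda,
\]
which is the claimed inequality. Taking expectations gives $\E^\Alt[\log\widetilde E_N]\ge \mu\E^\Alt[N]/\rho$.

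\emph{Main obstacle: the schedule is well defined.} The step I expect to be hardest is checking that the $t_j$ are strictly increasing and tend to infinity, so that $J(N)$ is finite and every $N\ge t_1$ is covered. I would also handle the floors and the gap $t_{j-1}$ versus $\lfloor t_{j-1}\rfloor$, absorbing the integer parts into the choice of $t_1$. Write $a=\rho\lambda>1$. The recursion $t_{j+1}=a t_j-\rho jC_\lambda/\mu$ is linear with a forcing term growing linearly in $j$. Solving, $t_j$ has the form $A a^{j}+Bj+D$, with $A$ determined by $t_1$. Positivity of $A$ is exactly the condition
\[
t_1\ge \rho\lambda+\frac{\rho^2\lambda C_\lambda}{\mu(\rho\lambda-1)^2},
\]
under which $t_{j+1}-t_j>0$ for all $j$ by induction.

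Finally, the stated rate $\widetilde O\bigl(c/(\mu(\eta-1)^2)\bigr)$ follows by choosing $\lambda\approx 1/(\eta c)$. Then $\rho\lambda-1=\eta-1$ and $C_\lambda=-\log(1-\eta^{-2})$, which is $\widetilde O(1)$ as $\eta\to1^+$.
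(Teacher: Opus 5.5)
Your proposal is correct and follows essentially the same route as the paper: privacy via the Laplace mechanism on disjoint batches, validity via the batched form of \cref{lem:ind-martingale}, the schedule $t_{j+1}=\rho(\lambda t_j-jC_\lambda/\mu)$ solved in the form $Aa^j+Bj+D$ (the algorithm's $t_1$ corresponds to $A=1$), and the choice $\lambda=1/(\eta c)$ for the rate. Your pathwise use of $N<t_{J+1}$ followed by an explicit Wald/optional-stopping step is a slightly cleaner version of the paper's heuristic $t_{j+1}-1\sim t_{j+1}$; that step is valid because $N$ is a stopping time for the output filtration, which is constant between batch times. Two small corrections: the batch summands are independent but not identically distributed (batch sizes differ), so use your centered-martingale version rather than i.i.d.\ Wald; and the condition $t_1\ge\rho\lambda+\rho^2\lambda C_\lambda/(\mu(\rho\lambda-1)^2)$ is equivalent to $A\ge 1$, which implies $A>0$ but does not characterize it.
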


\begin{corollary}
    Instantiating \cref{alg:batch2proc} with $\rho>1$ and the optimal bounded e-variable introduced in \cref{thm:exact-lb} yields an $\eps$-DP e-process satisfying $\E^\Alt[\log \widetilde E_N] \geq \frac{1}{\rho}\E^\Alt[N] \opt$ for any sufficiently large stopping time $N$, matching \cref{thm:agr-ub} up to an arbitrarily small constant factor.
\end{corollary}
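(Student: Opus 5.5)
The statement to prove is the final corollary: running \cref{alg:batch2proc} on $E^*$ gives an $\eps$-DP e-process with $\E^\Alt[\log\widetilde E_N]\ge \frac{1}{\rho}\E^\Alt[N]\opt$ for all large enough stopping times $N$. I would obtain this by plugging $E^*$ into \cref{thm:sequential-lb}. The work is in checking that $E^*$ meets the hypotheses of that theorem and then identifying the resulting rate $\mu$ with $\opt$.

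\textbf{Step 1: the input e-variable.} Take the batch e-variable to be $E(x_{1:n})=\prod_{i=1}^n E^*(x_i)$.
\begin{itemize}
\item \emph{E-variable property.} Under $\Null$, the factors are independent. Each factor has $\E^\Null[E^*]=\int \widetilde q = 1$, since the calibration of $\lambda^*$ in \eqref{eq:lambda-star} makes $\widetilde Q$ a probability distribution. Hence $E$ is an e-variable for $\Null^n$ for every $n$.
\item \emph{Log-sensitivity.} Each $\log E^*(x_i)$ takes values in $[\log c_1,\log c_2]$, an interval of length $\eps$. Changing one data point therefore changes $\log E$ by at most $\eps$, so $c=1$.
\item \emph{Linear e-power.} By \cref{thm:exact-lb}, $\E^{\Alt^n}[\log E]=n\,\E^\Alt[\log E^*]=n\mu$ with $\mu=\KL{\widetilde Q}{\Null}+\eps\TV{\widetilde Q}{\Alt}$.
\end{itemize}

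\textbf{Step 2: identify $\mu$ with $\opt$.} The chain of (in)equalities at the end of \cref{sec:batch} shows that every term equals $\opt$, combining \cref{prop:characterization}, \cref{lem:decomp}, \cref{thm:e-power-ub}, \cref{thm:exact-lb} and \cref{thm:basic-batch}. In particular $\mu=\opt$. I would also note $\mu>0$ whenever $\Alt\neq\Null$, which the threshold $t_1$ needs in order to be finite. If $\mu=0$, the claimed bound is trivial because $\E^\Alt[\log\widetilde E_N]\ge 0$ is all that is asked.

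\textbf{Step 3: apply \cref{thm:sequential-lb}.} With $c=1$, any $\rho>1$ can be written as $\rho=\eta^2$ with $\eta=\sqrt\rho>1$. The theorem then says $(\widetilde E_t)$ is an $\eps$-DP e-process for $\Null$. It also gives, for every $\Alt$-a.s.\ finite stopping time $N$ at least the stated threshold $\widetilde O(1/(\mu(\eta-1)^2))$, that $\E^\Alt[\log\widetilde E_N]\ge \E^\Alt[N]\mu/\rho=\frac{1}{\rho}\E^\Alt[N]\opt$. This threshold is what "sufficiently large" means in the corollary.

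\textbf{Step 4: optimality.} Compare with \cref{thm:agr-ub}, which gives the upper bound $\E^\Alt[\log E_N]\le\E^\Alt[N]\opt$ for bounded $N$. The achieved rate is within a factor $\rho$ of this bound, and $\rho$ can be taken arbitrarily close to $1$ at the cost of a larger minimum stopping time.

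\textbf{Main obstacle.} The only delicate point is the domain of the hypotheses in \cref{thm:sequential-lb}:
\begin{itemize}
\item $\lambda$ ranges over $(1/\rho,1/c)$, which is nonempty exactly when $\rho>c=1$. This is why the corollary requires $\rho>1$.
\item \cref{thm:agr-ub} covers bounded stopping times, while \cref{thm:sequential-lb} covers a.s.\ finite ones. The matching claim should therefore be stated for stopping times valid in both results, e.g.\ bounded ones above the threshold.
\end{itemize}
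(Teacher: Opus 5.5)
Your proposal is correct and matches the argument the paper intends; the paper states the corollary without a separate proof. Taking $E(x_{1:n})=\prod_i E^*(x_i)$ gives log-sensitivity $\eps$ (so $c=1$) and linear e-power $n\mu$, with $\mu=\opt$ by \cref{thm:exact-lb} and the chain of equalities at the end of \cref{sec:batch}, so \cref{thm:sequential-lb} with $\rho=\eta^2$ gives the bound. One small slip is the degenerate case $\mu=0$ (i.e.\ $\Alt=\Null$): there $\E^\Alt[\log\widetilde E_N]\ge 0$ is not automatic, since every released batch contributes $-C_\lambda<0$ in expectation, and the claim survives only because $t_1=\infty$ makes the ``sufficiently large'' condition vacuous (equivalently, nothing is ever released and $\widetilde E\equiv 1$).
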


\section{Simulations}\label{sec:simulation}

In this section, we compare our \cref{alg:batch2proc} against a private variant of the sequential probability ratio test (SPRT)~\citep{wald1945}. 
The non-private SPRT accumulates the log-likelihood ratio \(L_t=\sum_{i=1}^t \log(q(X_i)/p(X_i))\) and stops when \(L_t\) crosses upper or lower boundaries, which are selected to obtain type I and II errors of $\alpha,\beta$. 
Existing DP sequential testing methods implement variants of the \texttt{AboveThreshold} algorithm~\citep{dwork2014algorithmic} to privately check if $L_t$ has crossed the boundaries. 
We compare against the recently proposed DP-SPRT of \citet{michel2026dpsprtdifferentiallyprivatesequential}, which is restricted to testing single-parameter exponential families with bounded support.
(The other natural baseline of \citet{zhang2022DPSPRT} only satisfies approximate DP.)

\begin{figure}[t]
    \centering
    \includegraphics[width=\textwidth]{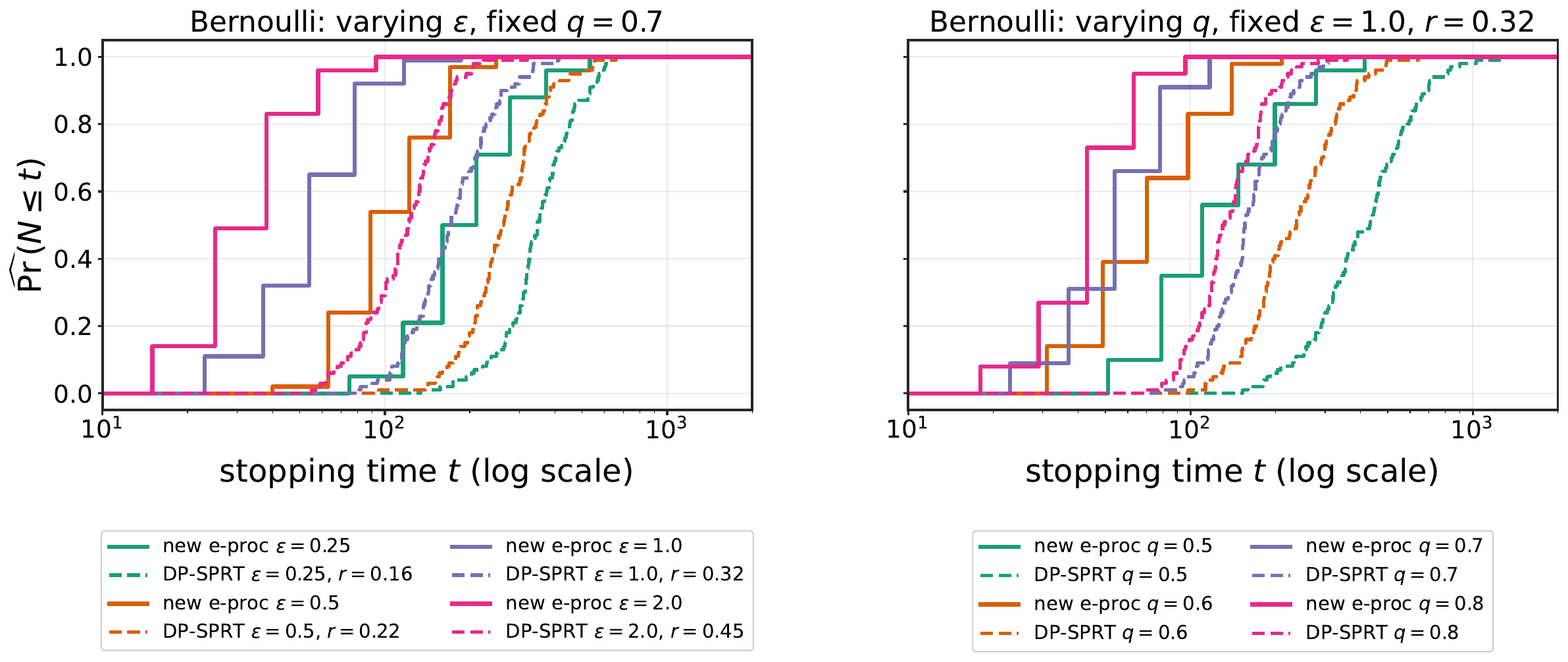}
    \caption{Empirical CDFs of stopping times for sequential tests over 100 trials under $\Alt=\mathrm{Bernoulli}(q)$ with null hypothesis  $\Null=\mathrm{Bernoulli}(0.3)$ and $\alpha=\beta=\frac{1}{40}$. Solid lines represent the two-sided private e-process with $E=E^*$ and $\rho=3$, while dashed lines represent the subsampled DP-SPRT of \cite{michel2026dpsprtdifferentiallyprivatesequential} with their suggested subsampling rate $r(\varepsilon)=\min\{1,\sqrt{\varepsilon/10}\}$. Each solid line sits to the left of the dashed line of the same color, which means that \cref{alg:batch2proc} consistently requires less data than the DP-SPRT to achieve the same statistical power across all values of $q$ and $\eps$.}
    \label{fig:bernoulli-ecdf-subsampled-dpsprt}
\end{figure}

Our evaluation uses the distributions \(\Null=\mathrm{Bernoulli}(0.3)\) and \(\Alt=\mathrm{Bernoulli}(q)\), which satisfy the assumptions of DP-SPRT and were also used in the evaluation of   \citet{michel2026dpsprtdifferentiallyprivatesequential}.  Our e-process construction is one-sided, controlling rejection of a designated null, whereas DP-SPRT is a two-sided test. To obtain a two-sided decision rule, we run two private e-processes in parallel: one for testing $\Null$ against $\Alt$ and one for $\Alt$ against $\Null$. The e-processes use competitive ratio $\rho=3$ and a privacy parameter of $\eps/2$ each. For each method, we use levels \(\alpha = \beta = \frac{1}{40}\) for type I and II error control and report the empirical CDFs of the stopping time $N$ after 100 trials. 

Across the settings shown in \Cref{fig:bernoulli-ecdf-subsampled-dpsprt}, we consistently find that our two-sided private e-process stops 
earlier than DP-SPRT. This is 
despite the fact that our
algorithm makes no parametric assumptions about the distributions being tested, whereas the DP-SPRT was specifically designed for Bernoulli data.
While it is impossible to draw strong conclusions on the basis of a single experiment, 
these results are nevertheless 
a promising sign for the practical usefulness of our algorithms, as well as for DP e-values and e-processes more generally.

\section{Conclusion}

In this work, we investigated simple hypothesis testing with $\varepsilon$-DP e-values. 
In the batch setting, we characterized $\opt$, the optimal instance-specific e-power achievable by any e-value under $\varepsilon$-DP, and constructed a concrete $\varepsilon$-DP algorithm which matches that rate exactly. 
In the sequential setting, 
 we proved a universal lower bound on the expected stopping time of any $\varepsilon$-DP e-process and showed that our \cref{alg:batch2proc} matches that lower bound up to an arbitrarily small constant factor. 
Moreover, we found empirically that our algorithm is able to consistently outperform an existing state-of-the-art algorithm for DP sequential testing.
 In future work, we hope to extend our results to other natural settings such as composite hypothesis testing with DP e-values, as well as relaxed notions of privacy such as approximate DP where the optimal achievable e-power remains unknown.

\bibliography{E-Values_v2.bib}


\appendix

\section{Useful technical lemmas and definitions}\label{app:lemmas-and-defs}

In this appendix, we collect a number of standard results. 
For references and much more background, see \cite{upfal2005probability,ramdasHypothesisTestingEvalues,polyanskiy2025information}.

We will make frequent use of the ``group privacy'' property of DP, which allows the definition of DP to be extended inductively to datasets which differ by more than one record.
\begin{lemma}[Group Privacy]\label{lem:group-privacy}
    Let $\M: \mathcal{X}^* \to \mathcal{O}$ be $\varepsilon$-DP, and let $x, x'$ be two datasets that differ on at most $k$ records. Then, for all $o \in \mathcal{O}$, $\frac{\Pr[\M(x) = o]}{\Pr[\M(x') = o]} \leq e^{k\varepsilon}$.
\end{lemma}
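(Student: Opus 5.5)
The statement to prove is the Group Privacy lemma: for $\eps$-DP $\M$ and datasets $x,x'$ differing in at most $k$ records, $\Pr[\M(x)=o]\le e^{k\eps}\Pr[\M(x')=o]$. I will argue by induction along a path of neighboring datasets. All ratios are read as densities, or more formally as the set-wise inequality $\Pr[\M(x)\in S]\le e^{k\eps}\Pr[\M(x')\in S]$ for every measurable $S$. The pointwise density statement then follows from the Radon--Nikodym theorem, since the set inequality implies $\M(x)\ll\M(x')$ with density bounded by $e^{k\eps}$ almost everywhere.

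First I build the path. Let $I\subseteq\{1,\dots,n\}$ be the set of indices where $x$ and $x'$ differ, so $|I|=j\le k$, and enumerate $I=\{i_1,\dots,i_j\}$. Define $x^{(0)}=x$, and obtain $x^{(m)}$ from $x^{(m-1)}$ by replacing coordinate $i_m$ with $x'_{i_m}$. Then $x^{(j)}=x'$, and consecutive datasets $x^{(m-1)},x^{(m)}$ differ in exactly one entry.

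Next I chain the DP inequality along this path. For any measurable $S$, the definition of $\eps$-DP gives $\Pr[\M(x^{(m-1)})\in S]\le e^{\eps}\Pr[\M(x^{(m)})\in S]$ for each $m$. Multiplying these $j$ inequalities, or inducting on $m$, yields $\Pr[\M(x)\in S]\le e^{j\eps}\Pr[\M(x')\in S]\le e^{k\eps}\Pr[\M(x')\in S]$, using $\eps\ge0$ and $j\le k$.

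Finally I pass to the pointwise form. In the discrete case, taking $S=\{o\}$ finishes immediately. In general, the set inequality shows that $\M(x)$ is absolutely continuous with respect to $\M(x')$ and that the Radon--Nikodym derivative is at most $e^{k\eps}$ almost everywhere. This is exactly the form used in the proof of \cref{lem:decomp}.

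There is no real obstacle here. The only subtle point is this last measure-theoretic step. If the density bound failed on a set of positive $\M(x')$-measure, integrating over that set would contradict the set inequality, so the bound must hold almost everywhere.
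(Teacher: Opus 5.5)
Your proof is correct: chaining the one-step $\eps$-DP inequality along a path of $j\le k$ single-entry substitutions gives the set-wise bound $\Pr[\M(x)\in S]\le e^{k\eps}\Pr[\M(x')\in S]$. Your Radon--Nikodym step then correctly reads the ``for all $o$'' density statement as holding $\M(x')$-almost everywhere (or for a suitable version of the density), which is all that the application in \cref{lem:decomp} requires.

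The paper gives no proof of this lemma; it only lists it in the appendix as a standard fact with references. Your hybrid-path argument is the standard textbook proof, so there is nothing to add.
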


The next several definitions and lemmas describe properties of KL divergence and couplings between probability distributions, which are used extensively in the proofs of our hardness results (\cref{thm:e-power-ub} and \cref{thm:agr-ub}).

\begin{definition}[KL Divergence]
    Given two probability measures $Q \ll P$ on a measurable space $\mathcal{X}$, the KL divergence of $Q$ from $P$ is defined to be:
    \begin{equation*}
        \KL{Q}{P} = \int_{x \in \mathcal{X}} \log \frac{dQ(x)}{dP(x)}~dQ(x),
    \end{equation*}

    where $\frac{dQ(x)}{dP(x)}$ is the Radon--Nikodym derivative of $Q$ with respect to $P$.
\end{definition}

\begin{definition}[Conditional KL Divergence]
    Given two joint probability measures $Q_{XY}$ and $P_{XY}$ with marginals $Q_X$ and $P_X$, the conditional KL divergence of $Q$ from $P$ given $X$ is defined by:
    \begin{equation*}
        \KL{Q_Y(\cdot\mid X)}{P_Y(\cdot \mid X)} = \int_{x \in \mathcal{X}} \KL{Q_{Y\mid X =x}}{P_{Y\mid X=x}} ~dQ_X(x),
    \end{equation*}
    or, equivalently, as $\E^{Q_X}[\KL{Q_{Y|X=x}}{P_{Y\mid X=x}}]$.
    
\end{definition}

\begin{lemma}[Chain Rule for KL Divergence]\label{lem:kl_chain_rule}
    For any pair of joint probability measures $Q_{XY}$ and $P_{XY}$,
    \begin{equation*}
        \KL{Q_{XY}}{P_{XY}} = \KL{Q_X}{P_X} + \KL{Q_Y(\cdot\mid X)}{P_Y(\cdot \mid X)}.
    \end{equation*}
\end{lemma}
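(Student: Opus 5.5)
We prove the chain rule by factorizing the Radon--Nikodym derivative of the joint measures into a marginal part and a conditional part, taking logarithms, and integrating. First we dispose of the degenerate case. If $Q_{XY} \not\ll P_{XY}$, the left-hand side is $+\infty$. In that case we show that either $Q_X \not\ll P_X$, or $Q_{Y\mid X=x} \not\ll P_{Y\mid X=x}$ on a set of $x$ of positive $Q_X$-measure. Either way the right-hand side is also $+\infty$, using the convention that a conditional KL term which is infinite on a set of positive measure makes the integral infinite. Conversely, if $Q_X \ll P_X$ and the conditionals are $Q_X$-a.e.\ absolutely continuous, then $Q_{XY} \ll P_{XY}$; this is the remaining case.

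In the main case we work with regular conditional distributions, which exist on standard Borel spaces; consistent with the paper's convention, we ignore measure-theoretic subtleties and write in terms of densities when convenient. We disintegrate $Q_{XY}(dx,dy) = Q_X(dx)\,Q_{Y\mid X=x}(dy)$, and likewise for $P$. The key identity we establish is
\[
\frac{dQ_{XY}}{dP_{XY}}(x,y) = \frac{dQ_X}{dP_X}(x)\cdot \frac{dQ_{Y\mid X=x}}{dP_{Y\mid X=x}}(y) \quad Q_{XY}\text{-a.s.}
\]
To verify it, we integrate the right-hand side against $P_{XY}$ over a rectangle $A\times B$ and use Fubini/Tonelli; this recovers $Q_{XY}(A\times B)$. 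Rectangles form a $\pi$-system generating the product $\sigma$-algebra, so uniqueness of the Radon--Nikodym derivative gives the identity.

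Taking logarithms turns the product into a sum. We then integrate against $Q_{XY}$ and apply the disintegration again. The first term depends only on $x$ and integrates to $\KL{Q_X}{P_X}$. For the second term, the inner integral over $y$ equals $\KL{Q_{Y\mid X=x}}{P_{Y\mid X=x}}$ for each $x$, and the outer integral against $Q_X$ is exactly the conditional KL divergence as defined above.

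The main obstacle is justifying the splitting of the integral of a sum into a sum of integrals: the log-likelihood ratios can take both signs, so linearity is not automatic. We handle this using the fact that the negative part of $\log\frac{dQ}{dP}$ is always $Q$-integrable, since $\int (\log \tfrac{dQ}{dP})^- \,dQ \le \int \tfrac{dP}{dQ}\,dQ \cdot e^{-1}$, which is finite. Hence each term has a well-defined value in $(-\infty,+\infty]$, and a sum of such terms is never an indeterminate $\infty-\infty$. Linearity and Fubini, applied to the positive and negative parts separately, then complete the argument, including the case where both sides are infinite.
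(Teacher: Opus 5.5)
The paper does not prove this lemma. Appendix~A quotes it as a standard fact with pointers to the literature (e.g.\ Polyanskiy--Wu), and your route is the usual textbook one: factor the Radon--Nikodym derivative, take logarithms, integrate, and control the negative parts. The rectangle/$\pi$-system verification is fine, and so is the integrability argument, with one notational fix: $dP/dQ$ need not exist, since $P\not\ll Q$ is allowed, so the bound should read $\int_{\{h>0\}} h^{-1}\,dQ = P(h>0)\le 1$ with $h = dQ/dP$.

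The problem is your case split. Your ``conversely'' sentence says that $Q_X\ll P_X$ together with $Q_{Y\mid X=x}\ll P_{Y\mid X=x}$ for $Q_X$-a.e.\ $x$ implies $Q_{XY}\ll P_{XY}$. That is the contrapositive of your degenerate-case claim, not its converse. The implication you still need is the reverse one: $Q_{XY}\ll P_{XY}$ implies $Q_X\ll P_X$ and $Q_{Y\mid X=x}\ll P_{Y\mid X=x}$ for $Q_X$-a.e.\ $x$. Without it, you never treat the case where $Q_{XY}\ll P_{XY}$ but conditional absolute continuity fails on a set of positive $Q_X$-measure. There your key identity cannot even be written, and by your own convention the right-hand side would be $+\infty$ while the left-hand side may be finite. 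So you must show this case is empty.

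The repair is short. Let $r = dQ_{XY}/dP_{XY}$ and $g(x) = \int r(x,y)\,P_{Y\mid X=x}(dy)$.
\begin{itemize}
\item Tonelli on sets $A\times\mathcal{Y}$ gives $Q_X(A) = \int_A g\,dP_X$. So $Q_X\ll P_X$ with density $g$, and $0<g<\infty$ holds $Q_X$-a.s.
\item On that set, the kernel $x\mapsto g(x)^{-1} r(x,\cdot)\,P_{Y\mid X=x}$ reproduces $Q_{XY}(A\times B)$ on rectangles. Hence it is a version of the regular conditional distribution of $Y$ given $X$ under $Q_{XY}$.
\item By $Q_X$-a.e.\ uniqueness of disintegrations on standard Borel spaces, it agrees with $Q_{Y\mid X=x}$. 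So $Q_{Y\mid X=x}\ll P_{Y\mid X=x}$ with density $r(x,\cdot)/g(x)$.
\end{itemize}
You already have the other direction: your rectangle computation shows the product of densities is a density of $Q_{XY}$, hence $Q_{XY}\ll P_{XY}$. Together the two give $Q_{XY}\ll P_{XY}$ if and only if ($Q_X\ll P_X$ and $Q_{Y\mid X=x}\ll P_{Y\mid X=x}$ for $Q_X$-a.e.\ $x$). That makes your two cases exhaustive, and the rest of your argument goes through unchanged.
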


The next lemma is stated for simple mixture distributions with only two components, but can be extended inductively to more complex mixtures.
\begin{lemma}[Joint Convexity of KL Divergence]\label{lem:kl_joint_convexity}
    Let $w \in [0,1]$, and let $Q = (1-w)Q' + wQ''$, $P = (1-w)P' + wP''$ be two mixture distributions with matching weights. Then:
    \begin{equation*}
        \KL{Q}{P} \leq (1-w)\KL{Q'}{P'} + w\KL{Q''}{P''}.
    \end{equation*}
\end{lemma}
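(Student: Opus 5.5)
The plan is to reduce joint convexity to the pointwise \emph{log-sum inequality}, equivalently to the convexity of the perspective function $(a,b)\mapsto a\log(a/b)$ on $[0,\infty)\times(0,\infty)$. Throughout I use the paper's conventions $0\log(0/b)=0$ and $a\log(a/0)=+\infty$ for $a>0$.

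\textbf{Step 1: reduce to densities on a common space.} Choose a dominating measure $\nu$, for instance $\nu=Q'+Q''+P'+P''$. Write $q',q'',p',p''$ for the corresponding densities, so that $Q$ and $P$ have densities $q=(1-w)q'+wq''$ and $p=(1-w)p'+wp''$. We may assume $\KL{Q'}{P'}$ and $\KL{Q''}{P''}$ are finite, since otherwise the right-hand side is $+\infty$ and there is nothing to prove. Under this assumption $Q'\ll P'$ and $Q''\ll P''$, and therefore $Q\ll P$.

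\textbf{Step 2: the pointwise inequality.} Fix $x$ and set $a_1=(1-w)q'(x)$, $a_2=wq''(x)$, $b_1=(1-w)p'(x)$, $b_2=wp''(x)$. The log-sum inequality states
\[
(a_1+a_2)\log\frac{a_1+a_2}{b_1+b_2}\le a_1\log\frac{a_1}{b_1}+a_2\log\frac{a_2}{b_2}.
\]
Its proof is a one-line application of Jensen's inequality to the convex function $t\mapsto t\log t$, with weights $b_i/(b_1+b_2)$ and points $t_i=a_i/b_i$. The degenerate cases where some $a_i$ or $b_i$ vanishes are handled by the conventions above.

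Substituting back, the weights $(1-w)$ and $w$ cancel inside each logarithm on the right-hand side. This gives, for $\nu$-a.e.\ $x$,
\[
q\log\frac{q}{p}\le(1-w)\,q'\log\frac{q'}{p'}+w\,q''\log\frac{q''}{p''}.
\]

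\textbf{Step 3: integrate against $\nu$.} Integrating the pointwise inequality over $\nu$ yields the claimed bound $\KL{Q}{P}\le(1-w)\KL{Q'}{P'}+w\KL{Q''}{P''}$.

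The only point requiring care is that the integrands $q\log(q/p)$ can be negative, so I need the integrals to be well defined. I would handle this by writing each integrand as $\bigl[\,q\log(q/p)-q+p\,\bigr]+(q-p)$. The bracketed term is nonnegative, because $a\log(a/b)-a+b\ge0$. The remainder $q-p$ integrates to zero, because $Q$ and $P$ are both probability measures, and the same holds for the primed and double-primed pairs. Applying Step 2 to the nonnegative parts (the added linear terms match on both sides, since $(1-w)(q'-p')+w(q''-p'')=q-p$) justifies termwise integration via monotone convergence.

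\textbf{Step 4: extension to finite mixtures.} For mixtures with more than two components, either induct on the number of components or apply the $k$-term log-sum inequality directly. This extended version is what is needed in the proof of \cref{lem:decomp}, and there the general mixture (integral) version follows by the same Jensen argument with a mixing measure in place of finite weights.

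I expect the main obstacle to be only this measure-theoretic bookkeeping, namely the integrability of signed integrands and the zero-density cases; the inequality itself is immediate from the log-sum inequality.
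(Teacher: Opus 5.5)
Your proof is correct. The paper gives no proof of this lemma to compare against: it lists joint convexity in the appendix among standard facts, with pointers to textbooks. Your log-sum (perspective-function) argument is the standard textbook proof.

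The other standard route, which fits the toolkit the paper uses elsewhere, goes through a latent label. Let $J\sim\mathrm{Bern}(w)$ under both hypotheses, and draw $X$ from $Q'$ or $Q''$ (respectively $P'$ or $P''$) according to $J$. Since $Q_J=P_J$, data processing and the chain rule (\cref{lem:kl_chain_rule}) give
\[
\KL{Q}{P}\le \KL{Q_{J,X}}{P_{J,X}} = \KL{Q_J}{P_J} + \E_J\bigl[\KL{Q_{X\mid J}}{P_{X\mid J}}\bigr] = (1-w)\KL{Q'}{P'}+w\KL{Q''}{P''}.
\]

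The two routes buy different things.
\begin{itemize}
\item The chain-rule version needs no pointwise bookkeeping of zero densities or signed integrands. It also extends verbatim to an arbitrary mixing measure: just take $J$ to be any index variable with the same law under both hypotheses. That continuous-mixture form is exactly what the proofs of \cref{lem:decomp} and \cref{thm:agr-ub} invoke.
\item Your route is more elementary and self-contained. However, its extension to integral mixtures needs Jensen over a mixing measure plus measurability of the component densities in the index, and you only sketch this.
\end{itemize}

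Three small points on your write-up.
\begin{itemize}
\item The reduction ``otherwise the right-hand side is $+\infty$'' fails at $w\in\{0,1\}$ under the convention $0\cdot\infty=0$. Those cases are trivial anyway.
\item Your Step~3 makes the finiteness reduction unnecessary. With the nonnegative integrand $q'\log(q'/p')-q'+p'$, the integral equals $\KL{Q'}{P'}$ when $Q'\ll P'$ and is $+\infty$ otherwise, so infinite divergences are handled automatically.
\item No monotone convergence theorem is needed in Step~3. Monotonicity of the integral for nonnegative functions suffices, together with the observation that $(q\log(q/p))^-\le p$ is integrable, which justifies splitting off $\int(q-p)\,d\nu=0$.
\end{itemize}
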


\begin{lemma}[Donsker--Varadhan Lemma]\label{lem:donsker_varadhan}
    Let $Q$ and $P$ be two probability measures on $\mathcal{X}$, and let $\mathcal{F}_P$ be the set of all measurable functions $f$ such that $\E^P[e^{f(x)}] < \infty$. If $Q \ll P$, then:
    \begin{equation*}
        \KL{Q}{P} = \sup_{f \in \mathcal{F}_P} \left\lbrace \E^Q[f(x)] - \log \E^P[e^{f(x)}] \right\rbrace.
    \end{equation*}
\end{lemma}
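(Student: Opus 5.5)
The plan is to prove the two inequalities separately, organizing both around the exponentially tilted measure built from a test function $f$. Throughout, write $r = dQ/dP$, which exists because $Q \ll P$.

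\textbf{Upper bound ($\ge$).} Fix $f \in \mathcal{F}_P$ and set $Z_f = \E^P[e^{f}]$. We have $Z_f > 0$ because $e^f > 0$, and $Z_f < \infty$ by assumption.
\begin{itemize}
\item Define the probability measure $G_f$ by $dG_f/dP = e^{f}/Z_f$. It is equivalent to $P$, since its density is strictly positive, so $Q \ll G_f$ and
\begin{equation*}
\frac{dQ}{dG_f} = \frac{r\, Z_f}{e^{f}}.
\end{equation*}
\item Formally, integrating $\log$ of this density against $Q$ gives the identity
\begin{equation*}
0 \le \KL{Q}{G_f} = \KL{Q}{P} - \E^Q[f] + \log Z_f.
\end{equation*}
Rearranged, this is $\E^Q[f] - \log Z_f \le \KL{Q}{P}$ for every admissible $f$.
\item If $\KL{Q}{P} = \infty$, this direction is trivial.
\item If $\KL{Q}{P} < \infty$, I must check that $\E^Q[f]$ is well defined, so that splitting the logarithm is legitimate.
  \begin{itemize}
  \item For the positive part, use Young's inequality $ab \le a\log a - a + e^{b}$ with $a = r$ and $b = f^+$. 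Integrating against $P$ gives $\E^Q[f^+] \le \E^P[r\log r - r + 1] + \E^P[e^{f^+}] < \infty$. The first term is $\KL{Q}{P}$, and the second is at most $1 + Z_f$.
  \item If $\E^Q[f^-] = \infty$, then $\E^Q[f] = -\infty$ and the inequality holds trivially.
  \end{itemize}
  Otherwise everything is integrable and the splitting is justified.
\end{itemize}

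\textbf{Lower bound ($\le$).} Here I exhibit test functions that approach $\KL{Q}{P}$.
\begin{itemize}
\item When $\KL{Q}{P} < \infty$, the natural choice is $f = \log r$. Then $\E^P[e^f] = P(r>0) \le 1$, so $f$ is admissible (with $e^{-\infty} = 0$ where $r = 0$), and $\E^Q[f] - \log\E^P[e^f] \ge \KL{Q}{P}$. The value $-\infty$ on $\{r = 0\}$ is a mild issue, so I will instead use $f_M = \max(\min(\log r, M), -M)$, which is bounded and hence automatically lies in $\mathcal{F}_P$.
\item I then let $M \to \infty$.
  \begin{itemize}
  \item $\E^P[e^{f_M}] \to \E^P[r] = 1$, by dominated convergence with dominating function $r + 1$.
  \item $\E^Q[f_M] \to \E^Q[\log r]$. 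The positive parts increase monotonically. The negative parts satisfy $\E^Q[(\log r)^-] = \E^P[r\log(1/r)\mathbb{I}(r<1)] \le 1/e$, which gives a dominating function for them.
  \end{itemize}
  Together these show $\E^Q[f_M] - \log \E^P[e^{f_M}] \to \KL{Q}{P}$. This convergence also covers the case $\KL{Q}{P} = \infty$: the positive parts diverge by monotone convergence while the normalizer stays bounded, so the supremum is infinite.
\end{itemize}

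The main obstacle is not the algebraic identity but the integrability bookkeeping. In the upper bound, $\E^Q[f]$ must be shown to be well defined; in the lower bound, the infinite-divergence case must be handled. The Young/Fenchel inequality and the truncation $f_M$ are the tools I would rely on for these two steps.
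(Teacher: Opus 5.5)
Your proof is correct and complete. The paper does not prove this lemma: it lists it in the appendix among standard background facts and refers the reader to the literature (e.g.\ Polyanskiy--Wu), so there is no paper proof to compare against. Your argument is the standard textbook one. One direction uses Gibbs' inequality $\KL{Q}{G_f}\ge 0$ for the exponentially tilted measure $G_f$, and the other uses the truncated log-likelihood ratio $f_M$. You handle the integrability issues properly: finiteness of $\E^Q[f^+]$ via the Fenchel--Young inequality, the domination $\E^Q[(\log r)^-]\le 1/e$, and the case $\KL{Q}{P}=\infty$.

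One remark is worth adding. The paper only uses the ``$\ge$'' half, and it applies it with $f=\log E$ for an e-variable $E$ that may vanish, so $f$ can equal $-\infty$. Your tilting argument still covers this case. If $Q(E=0)>0$, then $\E^Q[\log E]=-\infty$ and there is nothing to prove. Otherwise $Q(E>0)=1$, which gives $Z_f=\E^P[E]>0$. A direct check then shows $Q\ll G_f$, even though $G_f$ is no longer equivalent to $P$.
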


\begin{definition}[Couplings]
    Given two probability measures $Q$ and $P$ on $\mathcal{X}$, a coupling of $Q$ and $P$ is a joint probability measure $\gamma$ on $\mathcal{X} \otimes \mathcal{X}$ such that $\gamma(A, \mathcal{X}) = Q(A)$ and $\gamma(\mathcal{X}, B) = P(B)$ for all measurable events $A, B$.
\end{definition}

\begin{lemma}[Existence of Maximum Couplings]\label{lem:tv-coupling}
    For any pair of probability measures $Q$ and $P$, there exists a maximal coupling $\gamma$ such that $\gamma(X \neq Y) = \TV{Q}{P} = \sup_A|Q(A) - P(A)|$. 
\end{lemma}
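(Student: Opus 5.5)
We prove this with the standard explicit construction of a maximal coupling, which works on any measurable space once we pass to densities with respect to a common dominating measure. Let $\mu = Q + P$, and let $q = dQ/d\mu$ and $p = dP/d\mu$. Define the overlap density $m = \min(q,p)$ and set $\delta \coloneq 1 - \int m\,d\mu$.

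The first step is to identify $\delta$ with $\sup_A |Q(A)-P(A)|$. Since $q - m = (q-p)_+$, we have $\delta = \int (q-p)_+\,d\mu$. Because $\int (q-p)\,d\mu = 0$, this also equals $\int (p-q)_+\,d\mu$. For any measurable $A$,
\[
Q(A)-P(A) = \int_A (q-p)\,d\mu \le \int (q-p)_+\,d\mu = \delta,
\]
and symmetrically $P(A)-Q(A) \le \delta$. Equality holds for $A^* = \{q > p\}$. Hence $\delta = \TV{Q}{P}$.

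The second step is the coupling itself. Assume first that $0 < \delta < 1$.
\begin{itemize}
\item With probability $1-\delta$, draw $Z$ from the density $m/(1-\delta)$ and set $X = Y = Z$.
\item With probability $\delta$, draw $X$ from the density $(q-m)/\delta$ and, independently, $Y$ from $(p-m)/\delta$.
\end{itemize}
Both densities are nonnegative and integrate to one. The marginal density of $X$ is $(1-\delta)\cdot\frac{m}{1-\delta} + \delta\cdot\frac{q-m}{\delta} = q$, and likewise the marginal of $Y$ is $p$. So $\gamma$ is a coupling of $Q$ and $P$.

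In the degenerate cases we drop the empty branch. If $\delta = 0$, then $q = p$ $\mu$-a.e.; take $X = Y$ distributed as $Q$, so $\gamma(X \neq Y) = 0$. If $\delta = 1$, then $m = 0$ a.e.; take $X$ and $Y$ independent, and the argument below gives $\gamma(X \neq Y) = 1$.

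The last step, and the only delicate one, is to show $\gamma(X \neq Y) = \delta$ exactly. The first branch contributes no mismatch, so it suffices to show that $X \neq Y$ almost surely in the second branch.
\begin{itemize}
\item The density $q-m$ vanishes off $\{q>p\}$, so in the second branch $X \in \{q>p\}$ a.s.
\item The density $p-m$ vanishes off $\{p>q\}$, so $Y \in \{p>q\}$ a.s.
\end{itemize}
These two sets are disjoint, so $X \neq Y$ on this branch. This gives $\gamma(X\neq Y) = \delta$.

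To avoid any concern about whether the diagonal $\{X = Y\}$ is measurable on a general space, we interpret the event as follows. Let $D = (\{q>p\}\times\{p>q\}) \cup (\{p>q\}\times\{q>p\})$, which is a measurable rectangle union. The event $\{X \neq Y\}$ differs from $D$ only on a $\gamma$-null set, since the first branch puts all its mass on the diagonal and the second branch puts all its mass in $D$. Hence $\gamma(X\neq Y) = \gamma(D) = \delta = \TV{Q}{P}$, as claimed.
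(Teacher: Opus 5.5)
Your argument is correct, and it is the standard one: take the overlap density $\min(q,p)$ with respect to $Q+P$, and put independent residuals on the disjoint sets $\{q>p\}$ and $\{p>q\}$. The paper gives no proof of its own; it lists this lemma among standard facts with references, so there is nothing further to compare against.

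\textbf{The word ``maximal''.} To justify it, add the one-line coupling inequality. For any coupling $\gamma'$ and any measurable $A$,
\[
Q(A)-P(A)=\gamma'(X\in A)-\gamma'(Y\in A)\le\gamma'(X\in A,\,Y\notin A)\le\gamma'(X\neq Y),
\]
so no coupling can do better than $\TV{Q}{P}$. The paper only uses the existence of a coupling with mismatch probability exactly $\TV{Q}{P}$, which you have established.

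\textbf{Your last paragraph overreaches.} You claim that $\{X\neq Y\}$ differs from $D$ by a $\gamma$-null set. This requires the diagonal branch to give $\{X\neq Y\}$ outer measure zero. That holds when the diagonal is product-measurable, for instance whenever the $\sigma$-algebra on $\mathcal{X}$ is countably separated, which includes Polish spaces, the setting the paper assumes elsewhere. It can fail in general. Take the trivial $\sigma$-algebra $\{\emptyset,\mathcal{X}\}$ on a set with at least two points. Then $Q=P$ and $D=\emptyset$, but the only measurable set containing $\{X\neq Y\}$ is $\mathcal{X}\times\mathcal{X}$, which has $\gamma$-measure $1$.

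The lemma as literally stated therefore needs some such measurability assumption. In full generality, the right formulation is a coupling event on the underlying probability space, and your construction already gives this. With branch indicator $I$, the set $\{X\neq Y\}$ lies between $\{I=0,\ X\in\{q>p\},\ Y\in\{p>q\}\}$ and $\{I=0\}$. Both have probability $\delta$, so $\{X\neq Y\}$ has probability $\delta$ in the completion.
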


The following definitions and lemmas describe fundamental properties of martingales and e-processes, which are used throughout \cref{sec:sequential}.

\begin{definition}[Martingales]
    Let $(X_t)$ be a sequence of random variables adapted to some filtration $\mathcal{F}$. We say that $(X_t)$ is a martingale under $P$ if $\E^P[|X_i|] < \infty$ for all $i \geq 1$ and
    \begin{equation*}
        \E^P[X_t \mid \mathcal{F}_{t-1}] = X_{t-1},
    \end{equation*}
    i.e. the conditional expectation of the next value given the information available is always equal to the current value. If the equality in the above definition is replaced by $\leq$ (resp. $\geq$), then we say that $(X_t)$ is a supermartingale (resp. submartingale).
\end{definition}
 
The next theorem can be stated with considerably more general assumptions. We present the simplest version that suffices for our purposes. 

\begin{lemma}[Optional Stopping Theorem]\label{lem:optional_stopping}
    Let $(M_t)$ be a martingale under $P$ adapted to the filtration $\mathcal{F}$, and let $N$ be a stopping time for $\mathcal{F}$. If there is some positive integer $N_{max}$ such that $N \leq N_{max}$ almost surely, then $\E^P[M_N] = \E^P[M_1]$.
\end{lemma}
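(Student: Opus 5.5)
This is the classical optional stopping theorem for bounded stopping times. I would prove it directly with a telescoping decomposition and the tower property, without any limiting argument. Throughout I assume $N \geq 1$, so that $M_N$ is defined with $M_1$ as the initial term. I also replace $N$ by $\min(N, N_{max})$, which changes nothing almost surely.

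\textbf{Step 1 (decomposition).} Write the stopped value as a finite telescoping sum with indicator weights:
\begin{equation*}
M_N = M_1 + \sum_{t=2}^{N_{max}} (M_t - M_{t-1})\,\mathbb{I}[N \geq t].
\end{equation*}
This holds pointwise. On the event $\{N = k\}$, exactly the increments with $t \le k$ survive, and they telescope to $M_k - M_1$.

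\textbf{Step 2 (integrability).} Since $|M_N| \le \sum_{t=1}^{N_{max}} |M_t|$ and each $\E^P[|M_t|] < \infty$ by the definition of a martingale, $M_N$ is integrable. For the same reason every summand in Step 1 is integrable. Linearity of expectation therefore applies to the finite sum.

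\textbf{Step 3 (each increment has mean zero).} The key observation is that $\{N \geq t\} = \{N \le t-1\}^c$, and this event lies in $\mathcal{F}_{t-1}$ because $N$ is a stopping time. So the indicator is $\mathcal{F}_{t-1}$-measurable and bounded. By the tower property,
\begin{equation*}
\E^P\big[(M_t - M_{t-1})\mathbb{I}[N\ge t]\big] = \E^P\big[\mathbb{I}[N\ge t]\,\big(\E^P[M_t \mid \mathcal{F}_{t-1}] - M_{t-1}\big)\big] = 0.
\end{equation*}
Summing over $t$ gives $\E^P[M_N] = \E^P[M_1]$.

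The only point requiring care is Step 3: recognizing that $\{N\ge t\}$ is $\mathcal{F}_{t-1}$-measurable, which is precisely where the stopping-time property enters. Boundedness by $N_{max}$ is what keeps the sum finite and lets Step 2 go through without dominated convergence. The same argument with ``$\le$'' in place of ``$=$'' shows $\E^P[M_N] \le \E^P[M_1]$ for supermartingales, which is the form used for test supermartingales.
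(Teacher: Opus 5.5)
Your proof is correct and complete. The indicator-weighted telescoping sum, integrability from $|M_N| \le \sum_{t \le N_{max}} |M_t|$, and the observation that $\{N \ge t\} = \{N \le t-1\}^c \in \mathcal{F}_{t-1}$ combined with the tower property together form the standard textbook argument.

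The paper gives no proof of this lemma; it lists it in the appendix among standard facts with pointers to textbook references. Your predictable-indicator form $\sum_{t}(M_t - M_{t-1})\,\mathbb{I}[N \ge t]$ is exactly the form in which the paper later uses the lemma, when it applies optional stopping to the martingale part $W_t$ in the proof of \cref{thm:agr-ub}.
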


\begin{lemma}[Wald's Equation]\label{lem:wald}
    Let $(X_t)$ be an infinite sequence of random variables with $\E^P[X_i]=\mu$ for some distribution $P$, and let $N$ be a non-negative integer-valued random variable such that $X_n$ is independent of the event $\lbrace N \geq n\rbrace$. Then, provided that the infinite series satisfies $\sum_{n=1}^\infty \E^P[|X_n| \mathbb{I}\lbrace N \geq n \rbrace] < \infty$,
    \begin{equation*}
        \E^P\left[\sum_{n=1}^N X_n \right] = \E^P[N]\mu.
    \end{equation*}
\end{lemma}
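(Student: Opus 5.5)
We will prove Wald's equation (\cref{lem:wald}) in the standard way: write the random sum as an infinite series of indicator-weighted terms, exchange expectation and summation using the absolute-summability hypothesis, and then use the independence assumption term by term. Concretely, on every outcome
\begin{equation*}
    \sum_{n=1}^N X_n = \sum_{n=1}^\infty X_n \mathbb{I}\lbrace N \geq n\rbrace,
\end{equation*}
since for each realization only the finitely many terms with $n \le N$ are nonzero. (If $N=0$, both sides are the empty sum, equal to $0$.)

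The key step is the interchange of $\E^P$ with the infinite sum. By Tonelli's theorem applied to the nonnegative terms $|X_n|\mathbb{I}\lbrace N\ge n\rbrace$, we have
\begin{equation*}
    \E^P\left[\sum_{n=1}^\infty |X_n|\mathbb{I}\lbrace N\ge n\rbrace\right] = \sum_{n=1}^\infty \E^P\left[|X_n|\mathbb{I}\lbrace N\ge n\rbrace\right] < \infty
\end{equation*}
by hypothesis. Hence the series $\sum_n X_n\mathbb{I}\lbrace N\ge n\rbrace$ converges absolutely almost surely and is dominated by an integrable random variable. Fubini's theorem (equivalently, dominated convergence applied to the partial sums) then gives
\begin{equation*}
    \E^P\left[\sum_{n=1}^N X_n\right] = \sum_{n=1}^\infty \E^P\left[X_n\mathbb{I}\lbrace N\ge n\rbrace\right].
\end{equation*}
This justification is the only place where care is needed, and it is the step I expect to be the main obstacle. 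Everything else is bookkeeping.

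Next we use the independence of $X_n$ and the event $\lbrace N\ge n\rbrace$. This gives $\E^P[X_n\mathbb{I}\lbrace N\ge n\rbrace] = \E^P[X_n]\,P(N\ge n) = \mu P(N\ge n)$. Summing over $n$ and using the tail-sum formula $\sum_{n\ge1}P(N\ge n) = \E^P[N]$ for nonnegative integer-valued $N$ yields $\E^P[\sum_{n=1}^N X_n] = \mu\,\E^P[N]$.

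Finally, we address finiteness of $\E^P[N]$. Independence gives $\E^P[|X_n|\mathbb{I}\lbrace N\ge n\rbrace] = \E^P|X_n|\,P(N\ge n) \ge |\mu|\,P(N\ge n)$. So when $\mu\neq 0$, the summability hypothesis forces $\E^P[N]<\infty$, and the product $\mu\,\E^P[N]$ is well defined. When $\mu=0$, every term $\mu P(N\ge n)$ vanishes, so the left-hand side is $0$ and the identity holds under the usual convention $0\cdot\infty=0$.
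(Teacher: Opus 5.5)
Your argument is correct and is the standard textbook proof: write the random sum as $\sum_{n\ge1} X_n\mathbb{I}\lbrace N\ge n\rbrace$, use Tonelli plus dominated convergence to swap expectation and sum (justified exactly by the summability hypothesis), apply independence of $X_n$ from $\lbrace N\ge n\rbrace$ term by term, and finish with the tail-sum formula $\sum_{n\ge1}P(N\ge n)=\E^P[N]$. The paper gives no proof of its own here, since it lists Wald's equation among standard results taken from its background references, so there is nothing further to compare.

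Your closing remark is a genuine if minor sharpening. The lemma as stated omits the usual hypothesis $\E^P[N]<\infty$. Your argument shows this is automatic when $\mu\neq0$, and when $\mu=0$ the identity holds only under the convention $0\cdot\infty=0$.
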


\begin{lemma}[Ville's Inequality]
    If $(M_t)$ is an e-process for $\Null$, then for every $\alpha \in [0,1]$,
    \begin{equation*}
        \Null\left( \exists t: M_t \geq \frac{1}{\alpha}\right) \leq \alpha.
    \end{equation*}
\end{lemma}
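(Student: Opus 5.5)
The plan is to reduce the statement about the e-process $(M_t)$ to one about its dominating test supermartingale, prove the bound on each finite horizon using a stopping-time argument, and then let the horizon go to infinity. By definition of an e-process there is a test supermartingale $(S_t)$ for $\Null$ with $M_t \le S_t$ $\Null$-almost surely for all $t$. Hence $\{\exists t: M_t \ge 1/\alpha\}\subseteq\{\exists t: S_t\ge 1/\alpha\}$ up to a null set. It therefore suffices to prove the bound for $S$. The case $\alpha=0$ is trivial if $1/\alpha=\infty$ is read as an unreachable threshold, and the case $\alpha=1$ is trivial because probabilities are at most $1$. So fix $\alpha\in(0,1)$.

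For a finite horizon $T$, I would define the stopping time $\tau=\inf\{t: S_t\ge 1/\alpha\}$ and the bounded stopping time $\tau_T=\tau\wedge T\le T$. The key inequality to establish is
\[
\E^\Null[S_{\tau_T}]\le \E^\Null[S_0]\le 1.
\]
\cref{lem:optional_stopping} is stated only for martingales, so I would prove this supermartingale version directly. Write
\[
S_{\tau_T}=S_0+\sum_{t=1}^{T}\mathbb{I}\{\tau\ge t\}(S_t-S_{t-1}).
\]
The event $\{\tau\ge t\}=\{\tau\le t-1\}^c$ is $\mathcal{F}_{t-1}$-measurable. Conditioning on $\mathcal{F}_{t-1}$ and using $\E^\Null[S_t-S_{t-1}\mid\mathcal{F}_{t-1}]\le 0$ shows that each summand has nonpositive expectation. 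Integrability causes no problem: $S$ is nonnegative and its expectation is nonincreasing along the supermartingale, so each $S_t$ is integrable. Alternatively, one could use the Doob decomposition $S=\tilde M-A$, with $\tilde M$ a martingale and $A$ predictable nondecreasing, and apply \cref{lem:optional_stopping} to $\tilde M$. The direct telescoping is shorter.

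Next I would apply Markov's inequality, using nonnegativity of $S$. On the event $\{\exists t\le T: S_t\ge 1/\alpha\}$ we have $\tau\le T$, hence $S_{\tau_T}=S_\tau\ge 1/\alpha$. Therefore
\[
\Null(\exists t\le T: S_t\ge 1/\alpha)\le \Null(S_{\tau_T}\ge 1/\alpha)\le \alpha\,\E^\Null[S_{\tau_T}]\le\alpha.
\]
Finally, the events $\{\exists t\le T: S_t\ge1/\alpha\}$ increase in $T$ to $\{\exists t: S_t\ge 1/\alpha\}$. Continuity of probability from below then gives the claim for all $t$.

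I expect the main obstacle to be the optional-stopping step for supermartingales, since only the martingale version is recorded earlier. Truncating at $T$ is what makes the argument rigorous without any uniform integrability assumptions, and the passage to $T\to\infty$ is handled purely by monotone continuity of measure rather than by any limit of expectations.
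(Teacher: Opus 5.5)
Your proof is correct. The paper gives no proof of this lemma: it appears in \cref{app:lemmas-and-defs} as a standard fact with pointers to the literature. Your argument is the standard one. You pass to the dominating test supermartingale, bound $\E^\Null[S_{\tau\wedge T}]\le 1$, apply Markov, and let $T\to\infty$ by continuity from below.

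You were right that \cref{lem:optional_stopping} as stated covers only martingales. Your telescoping argument, using that $\{\tau\ge t\}$ is $\mathcal{F}_{t-1}$-measurable, supplies the needed supermartingale version without any integrability or uniform-integrability assumptions.

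One cosmetic point concerns $\alpha=0$. Rather than reading $1/\alpha=\infty$ as an unreachable threshold, note that $\E^\Null[S_t]\le 1$ forces $S_t<\infty$ $\Null$-almost surely. The event is then $\Null$-null even if e-values are allowed to take the value $+\infty$.
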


\section{Improving our private e-variable for finite sample sizes}\label{app:constants}

The private e-variable introduced in \cref{eq:batch_test_statistic} is asymptotically log-optimal. This does not imply that \cref{eq:batch_test_statistic} cannot be improved for specific input distributions, however. In particular, because we directly add Laplacian noise, our private e-variable is always unbounded, even when the non-private likelihood ratios themselves are not. As we show below, this implies that the construction is suboptimal in a fairly strong sense.

\begin{proposition}[Suboptimality of unbounded private e-variables]\label{prop:suboptimality}
    If $E$ is an unbounded $\varepsilon$-DP e-variable, then there exists a function $f$ such that $f(E)$ is an $\eps$-DP e-variable and $\E^\Alt[\log(\frac{E}{f(E)})] < 0$.
\end{proposition}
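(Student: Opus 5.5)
The plan is to take $f$ to be the likelihood ratio of the output distributions of $E$ itself. This ratio is automatically bounded because of group privacy. An unbounded $E$ therefore cannot coincide with it, and the strict form of Gibbs' inequality then gives a strict improvement.

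\textbf{Step 1: the construction is valid.} Fix $n$. Write $\Alt_E$ and $\Null_E$ for the laws of $E=\M(X)$ under $X\sim\Alt^n$ and $X\sim\Null^n$. Define
\[
f(e)=\frac{d\Alt_E}{d\Null_E}(e),
\]
the Radon--Nikodym derivative of the absolutely continuous part, taken to be $0$ where $\Alt_E$ puts no mass. Since $f(E)$ is a post-processing of $\M$, it is $\eps$-DP. It is an e-variable because $\E^{\Null}[f(E)]\le 1$.

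\textbf{Step 2: $f$ is bounded.} For any two datasets $x,x'\in\mathcal{X}^n$, group privacy (\cref{lem:group-privacy}) with $k=n$ gives $\Pr[\M(x)\in S]\le e^{n\eps}\Pr[\M(x')\in S]$. Integrate $x\sim\Alt^n$ on the left and $x'\sim\Null^n$ on the right. This yields $\Alt_E(S)\le e^{n\eps}\Null_E(S)$ for every measurable $S$. Hence $\Alt_E\ll\Null_E$ and $f\le e^{n\eps}$ everywhere.

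\textbf{Step 3: strict improvement.} If $\E^\Alt[\log E]=-\infty$, the claim is immediate, since $\E^\Alt[\log f(E)]=\KL{\Alt_E}{\Null_E}\ge 0$. Otherwise, apply Jensen's inequality under $\Alt$ (with $E>0$ $\Alt$-a.s., else the power is $-\infty$):
\[
\E^\Alt\Big[\log\frac{E}{f(E)}\Big]\le \log \E^\Alt\Big[\frac{E}{f(E)}\Big]=\log \E^\Null\big[E\,\mathbb{I}\{f(E)>0\}\big]\le \log\E^\Null[E]\le 0 .
\]
Equality throughout requires two things:
\begin{itemize}
\item $E/f(E)$ is $\Alt$-a.s. a constant $\kappa$;
\item $\E^\Null[E\,\mathbb{I}\{f(E)>0\}]=1$, which forces $\kappa=1$, i.e.\ $E=f(E)\le e^{n\eps}$ $\Alt$-a.s.
\end{itemize}

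It remains to rule out equality using unboundedness. There are two cases.
\begin{itemize}
\item If $E$ exceeds $e^{n\eps}$ on a set of positive $\Alt_E$-measure, then $E=f(E)$ fails, so the inequality is strict.
\item If instead $E\le e^{n\eps}$ $\Alt$-a.s., then unboundedness means $E>e^{n\eps}$ on a set $S$ with $\Null_E(S)>0$. By Step 2, $f=0$ $\Null_E$-a.e.\ on $S$ (since $\Alt_E(S)=0$). Then $\E^\Null[E\,\mathbb{I}\{f(E)>0\}]\le 1-\E^\Null[E\,\mathbb{I}_S]<1$, so the middle inequality is strict.
\end{itemize}
In either case $\E^\Alt[\log(E/f(E))]<0$.

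\textbf{Main obstacle.} The main obstacle is this equality-case analysis, specifically making precise which measure "unbounded" refers to and handling the case where the large values of $E$ live only on an $\Alt$-null set, as above. Measure-theoretic care is also needed in Step 2 to ensure $f$ is a genuine version of the density bounded by $e^{n\eps}$ everywhere, which I will arrange by redefining $f$ on a $\Null_E$-null set.
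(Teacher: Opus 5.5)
Your proposal is correct and follows essentially the same route as the paper: take $f$ to be the likelihood ratio $d\Alt_E/d\Null_E$ of the output laws, bound it by $e^{n\eps}$ via group privacy, and conclude that an (essentially) unbounded $E$ cannot coincide with it; the paper simply invokes uniqueness of the log-optimal e-variable, whereas you prove the strict inequality directly through the equality case of Jensen. One small simplification: group privacy in the reverse direction also gives $\Null_E(S)\le e^{n\eps}\Alt_E(S)$, so the two laws are mutually absolutely continuous and your second case (large values of $E$ only on an $\Alt$-null set) is vacuous, hence not really the obstacle you anticipated.
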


\begin{proof}    
    Since $E$ is an $\eps$-DP e-variable, we can write it as $\M(X)$ for some $\eps$-DP mechanism $\M$. Then, by \cref{lem:dp-ratio}, the unique log-optimal e-variable for testing $\M(\Alt^n)$ against $\M(\Null^n)$ is the likelihood ratio $E_\M(X)$, and by group privacy, $E_\M(X)$ is bounded in $[e^{-n\varepsilon}, e^{n\varepsilon}]$. Taking contrapositive, we conclude that no unbounded $\varepsilon$-DP e-variable can be log-optimal, even when compared against mechanisms that operate solely by post-processing its outputs.
\end{proof}

This observation naturally suggests the following modification of our original construction, which is log-optimal for its own output distribution by design:
\begin{align}
    \M(X) =  \sum_{t = 1}^n \log (1-\lambda+\lambda  E^*(x_t)) + Z_b,  \qquad E = \frac{d\M(\Alt^n)}{d\M(\Null^n)}(\M(X)).
\end{align}

When can we actually compute this value? Under the hypothesis that the $x_i$ are drawn i.i.d.\ from some distribution $D \in \lbrace \Null, \Alt \rbrace$, $\M(x_{1:n})$ is a sum of independent random variables. So, a straightforward sufficient condition is the ability to compute the CDF of the untruncated likelihood ratio under $D$, which is in turn sufficient to compute the characteristic function of $E^*$, $\varphi_D(t) \coloneq \E_{x \sim D}[\exp(it\cdot\E^*(x))]$. In this case, we can recover the likelihood under $D$ as
\begin{equation}
    \mathcal{F}^{-1}\left( \varphi_D(t)^n \cdot \frac{1}{1+b^2t^2} \right)(\M(X)),
\end{equation}

where $\mathcal{F}^{-1}$ denotes the inverse Fourier transform. We remark that, although the exact e-power of this e-variable typically does not have a closed form, log-optimality allows us to lower-bound it by the e-power of any other e-variable derived from the same private output, including the one described in \cref{thm:basic-batch}.

\section{A nearly-optimal distribution-independent bounded e-variable}\label{app:distribution-independent}

In this appendix, we present the truncated, scaled likelihood ratio ($\tsllr_\eps$) statistic, a simple bounded e-variable which does not require any form of distribution-dependent tuning, and show that it achieves e-power of at least $\frac{\eps-1}{\eps}(1-e^{-\eps})\opt$. This makes the $\tsllr_\eps$ statistic a viable alternative to $E^*$ in settings where computing the distribution-dependent tuning parameter $\lambda^*$ is computationally intractable, provided that $\eps$ is at least somewhat larger than 1. For constant $\eps \leq 1$, we can still match the optimal rate up to constant factors by computing $\tsllr_{\eps'}$ at a larger, fixed value of $\eps'$ and then taking fractional powers, which we show yields a bounded e-variable with e-power of at least $\frac{\eps}{5}\opt$. In either case, we can then privatize our bounded e-variable, which we show can reduce its e-power by at most a factor of 2.

\subsection{A second upper-bound on $\opt$}
\label{sec:batch_upper_bound}

To prove the optimality of our second construction, it will be easiest to begin by proving a slightly different upper bound on $\opt$ than the one presented in \cref{sec:batch_rate_upper}.
This upper bound will be based on a decomposition of $\Null$ and $\Alt$ into a pair of mixture distributions with matching weights. We begin by introducing the hockey-stick divergence of $\Alt$ from $\Null$:
\begin{equation*}
    \tau = D_{e^\varepsilon}(\Alt \Vert \Null) =\int_\mathcal{X} 
      \max \lbrace q(x)-e^\varepsilon p(x), 0 \rbrace ~dx.
\end{equation*}

Next, let $B = \lbrace x : q(x) > e^\varepsilon p(x)\rbrace$, so that $\Alt(B) = \tau + e^\varepsilon \Null(B)$. Define a subdistribution $\bar q(x) = \min(q(x), e^\varepsilon p(x))$ with normalized density function $q'(x) = \bar q(x)/(1-\tau)$. This allows us to write $\Alt = (1-\tau)Q' + \tau Q''$, where $Q''$ is some distribution supported on $B$. We will similarly define the (trivial) decomposition $\Null = (1-\tau)P' + \tau P''$, where $P' = P'' = \Null$. With these definitions, we can now present our upper bound:

\begin{theorem}\label{thm:tslr-ub}
        Let $\Null$ be an arbitrary distribution over $\mathcal{X}$, and let $E$ be an $\eps$-DP e-variable for $\Null$. Then for all distributions $\Alt$ over $\mathcal{X}$, we have:
        \begin{equation*}
            \frac{\mathbb{E}^\Alt[\log E]}{n} \leq (1-\tau)\KL{Q'}{P'} + \tau\varepsilon.
        \end{equation*}
\end{theorem}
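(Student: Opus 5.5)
The plan is to apply \cref{lem:decomp} with the intermediate distribution $Q' $ (the normalized clipped alternate), or to repeat its coupling argument directly with the mixture decomposition.

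First, we can bound $\E^\Alt[\log E]$ by $\KL{\M(\Alt^n)}{\M(\Null^n)}$, exactly as in the discussion preceding \cref{prop:characterization}: for any e-variable $E=\M(X)$ we have $\E^{\Alt^n}[\log E]\le \KL{\M(\Alt^n)}{\M(\Null^n)}$. This holds because the likelihood ratio is log-optimal, or equivalently by the Donsker--Varadhan lemma (\cref{lem:donsker_varadhan}) with $f=\log E$ and $\log\E^\Null[E]\le 0$.

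Second, we apply \cref{lem:decomp} with $Q'$ as defined above. This gives
\[
\tfrac1n \KL{\M(\Alt^n)}{\M(\Null^n)} \le \KL{Q'}{\Null}+\eps\TV{Q'}{\Alt}.
\]
The decomposition $\Alt=(1-\tau)Q'+\tau Q''$ gives $\TV{Q'}{\Alt}=\tau\,\TV{Q'}{Q''}\le\tau$. Since $P'=\Null$, this already yields $\KL{Q'}{P'}+\tau\eps$.

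That bound is weaker than the claim, which carries the factor $(1-\tau)$ in front of the KL term. To get the factor, we redo the coupling of \cref{lem:decomp} with a mixture-based coupling. Under the alternate, each $X_i$ is drawn by first sampling $B_i\sim\mathrm{Bern}(\tau)$, then taking $X_i\sim Q''$ if $B_i=1$ and $X_i\sim Q'$ otherwise. The shadow dataset sets $\tilde X_i=X_i$ when $B_i=0$ and draws $\tilde X_i\sim\Null$ independently when $B_i=1$. Under the null, $B_i\sim\mathrm{Bern}(\tau)$ independently and $X_i=\tilde X_i\sim\Null$. The shadow marginal is then the mixture $(1-\tau)Q'+\tau \Null$ under the alternate and $\Null$ under the null.

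The chain-rule step is done conditioning on $B$ first. Because $B$ has the same law under both hypotheses, the $B$-term contributes zero. Given $b$, the coordinates with $b_i=1$ have identical shadow laws and contribute nothing, while those with $b_i=0$ contribute $\KL{Q'}{\Null}$ each. The DPI term is therefore $\sum_i(1-b_i)\KL{Q'}{P'}$, with expectation $n(1-\tau)\KL{Q'}{P'}$. The group-privacy term is handled exactly as in \eqref{eq:jc+gp}: $X$ and $\tilde X$ differ only where $b_i=1$, giving at most $\eps\sum_i b_i$, with expectation $n\tau\eps$. Summing the two terms and dividing by $n$ gives the theorem.

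The main obstacle is that this coupling is not symmetric in the way the proof of \cref{lem:decomp} exploited: the null must produce the same joint $(\tilde X,B)$ structure. I would therefore check carefully that conditioning on $B$ before $\tilde X$ in the chain rule is legitimate and that the conditional KL for each coordinate factorizes as claimed. If there is a problem there, the fallback is to couple conditionally on $b$ only, use joint convexity over $b$ (\cref{lem:kl_joint_convexity}), and then apply \cref{lem:decomp}-style reasoning per coordinate.
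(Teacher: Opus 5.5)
Your proposal is correct and is essentially the paper's own argument. The paper also couples coordinatewise through $\Alt=(1-\tau)Q'+\tau Q''$ with $b_i\sim\mathrm{Bern}(\tau)$ and conditions on $\cI=\{i:b_i=0\}$, where its joint-convexity step plays the role of your $B$-first chain rule; it then collects $|\cI|\,\KL{Q'}{P'}$ and applies group privacy to the remaining $n-|\cI|$ coordinates, differing from you only cosmetically in conditioning on $x_\cI$ rather than padding to a full shadow dataset. The worry you flag is unfounded, since $B$ has the same law under both hypotheses and the chain rule may be taken in the order $B$, $\tilde X$, $\M(X)$; indeed your construction is just \cref{lem:decomp} applied to $(1-\tau)Q'+\tau\Null$, whose KL term is at most $(1-\tau)\KL{Q'}{\Null}$ by convexity and whose TV distance to $\Alt$ is $\tau\,\TV{\Null}{Q''}\le\tau$.
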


\begin{proof}
We will use the same techniques as in the proof of \cref{lem:decomp}. 
Let $\mathcal{M}$ be an $\eps$-DP algorithm that takes a dataset $x_{1:n}$ as input and outputs an e-value $E$ for testing $x_i \sim \Alt$ against $x_i \sim \Null$. Let $\M(\Alt)$ (resp. $\M(\Null))$ denote the pushforward measure of the algorithm's output space. By Donsker--Varadhan (\cref{lem:donsker_varadhan}), we have:
\begin{equation*}
    \E^\Alt[\log E] \leq \KL{\M(\Alt)}{\M(\Null)} + \log \E^\Null[E] \leq \KL{\M(\Alt)}{\M(\Null)},
\end{equation*}

where the second inequality follows because $E$ is an e-variable. Using the decomposition described above, define a coupling of $\Null$ and $\Alt$ as follows: for each $t \in [n]$, we sample $b_t \sim Bern(\tau)$. If $b_t=0$, then $x_t \sim Q'$ and $y_t \sim P'$. Otherwise, $x_t \sim Q''$ and $y_t \sim  P''$. Define $\mathcal{I}$ to be the set of indices for which $b_i=0$, and let $x_\mathcal{I} = \lbrace x_i : i \in \mathcal{I} \rbrace$. 

    To make the following calculations clearer, we suppress $\M$ from our notation and let $\Null_{E}$ denote the distribution of $E$ under $\Null$ and let $\Null_{E,x_\cI}$ denote the joint distribution of $E$ and $x_{\cI}$, and let $\Null_{E,x_\cI\mid \cI}$ denote the joint distribution conditioned on (a realization of) $\cI$.
    Define the notation for $\Alt$ likewise.
    Rewriting and applying monotonicity and the joint convexity of KL divergence (\cref{lem:kl_joint_convexity}), we have
    \begin{align*}
        \KL{\M(\Alt)}{\M(\Null)} 
            = \KL{\Alt_E}{\Null_E} 
            &\le \KL{\Alt_{E,x_\cI}}{\Null_{E,x_\cI}} \\
            &\le \E_\cI\left[ \KL{\Alt_{E, x_\cI\mid \cI}}{\Null_{E, x_\cI\mid \cI}}\right].
    \end{align*}
    We then apply the chain rule for KL divergence (\cref{lem:kl_chain_rule}) and obtain
    \begin{align*}
        \KL{\M(\Alt)}{\M(\Null)} \le \E_\cI\left[ \KL{\Alt_{x_\cI\mid \cI}}{\Null_{x_\cI\mid \cI}}\right]
            +\E_{\cI,x_\cI}\left[ \KL{\Alt_{E\mid \cI, x_\cI}}{\Null_{E\mid \cI,x_\cI}}\right].
    \end{align*}
    For the first term above, by construction points in $x_\cI$ are drawn i.i.d.\ from $Q'$ or $P'$, so we have
    \[
        \E_\cI\left[ \KL{\Alt_{x_\cI\mid \cI}}{\Null_{x_\cI\mid \cI}}\right] = \E_\cI\left[|\cI|\cdot \KL{Q'}{P'}\right].
    \]
    For the second term we apply group privacy: since we condition on $x_\cI$ the datasets differ on at most $n-|\cI|$ points and thus
    \[
        \E_{\cI,x_\cI}\left[ \KL{\Alt_{E\mid \cI, x_\cI}}{\Null_{E\mid \cI,x_\cI}}\right]
            \le \E_{\cI}\left[ (n - |\cI|)\eps\right].
    \]
    Applying $\E[|\cI|] = n(1-\tau)$ and dividing by $n$ finishes the proof.
\end{proof}

\subsection{Constructing a distribution-independent test statistic}

We begin by designing a bounded test statistic that achieves $\E^\Alt[\log E] = \Theta(\opt)$ as long as $\eps$ is somewhat larger than 1. Then, we will extend to $\eps=\Theta(1)$ by transforming $E$ in a way which reduces its log-sensitivity while preserving its e-power up to a constant factor. 

\begin{lemma}Define the truncation operator $f_c(x) = \min(c,x)$.  Fix $\Null$ and $\Alt$ and let $R = \frac{d\Alt}{d\Null}(x)$. Then $\E^\Alt[\log f_{e^\varepsilon}(R)] \geq \frac{\eps-1}{\eps}\opt$.
\end{lemma}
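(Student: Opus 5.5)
The plan is to compare the e-power of the truncated likelihood ratio directly with the upper bound of \cref{thm:tslr-ub}. That theorem holds for every $n$ and every $\eps$-DP e-variable, so taking the supremum and the limit gives
\[
\opt \le U \coloneq (1-\tau)\KL{Q'}{P'} + \tau\eps .
\]
It therefore suffices to show $\E^\Alt[\log f_{e^\eps}(R)] \ge \frac{\eps-1}{\eps}U$. Equivalently, we need $U - \E^\Alt[\log f_{e^\eps}(R)] \le U/\eps$. Both quantities are explicit integrals over the same sets, so I expect the argument to reduce to an exact bookkeeping identity plus one elementary inequality.

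\emph{Step 1: compute both sides on $B$ and $B^c$.} Here $B=\lbrace q>e^\eps p\rbrace$ and $\Alt(B)=\tau+e^\eps\Null(B)$. Since $f_{e^\eps}(R)=R$ on $B^c$ and $f_{e^\eps}(R)=e^\eps$ on $B$,
\[
\E^\Alt[\log f_{e^\eps}(R)] = \int_{B^c} q\log\tfrac{q}{p}\,dx + \bigl(\tau+e^\eps\Null(B)\bigr)\eps .
\]
For the bound, recall $q'=\bar q/(1-\tau)$ with $\bar q=\min(q,e^\eps p)$, and $P'=\Null$. Then
\[
(1-\tau)\KL{Q'}{P'} = \int \bar q\log\tfrac{\bar q}{p}\,dx - (1-\tau)\log(1-\tau) = \int_{B^c} q\log\tfrac{q}{p}\,dx + e^\eps\Null(B)\eps - (1-\tau)\log(1-\tau).
\]
Adding $\tau\eps$ and comparing the two displays gives the exact identity
\[
U - \E^\Alt[\log f_{e^\eps}(R)] = -(1-\tau)\log(1-\tau).
\]

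\emph{Step 2: bound the gap.} By $-\log(1-\tau)\le \tau/(1-\tau)$, the gap is at most $\tau$. Since $\KL{Q'}{P'}\ge 0$, we have $U\ge\tau\eps$, so $\tau\le U/\eps$. Combining, $\E^\Alt[\log f_{e^\eps}(R)] \ge U - U/\eps = \frac{\eps-1}{\eps}U \ge \frac{\eps-1}{\eps}\opt$. For $\eps\le 1$ the right-hand side is nonpositive, and the same chain of inequalities still holds.

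\emph{Main obstacle.} The only delicate point is the bookkeeping in Step 1. We must use the normalized $Q'$, which produces the $-(1-\tau)\log(1-\tau)$ term, and correctly split $\Alt(B)$ into $\tau+e^\eps\Null(B)$ so that the $B$-contributions cancel exactly. A measure-theoretic caveat is the case $\tau=1$, which happens only when $\Null(B)=0$ and $\Alt(B)=1$. There $Q'$ is undefined, but $U=\eps$ and $\E^\Alt[\log f_{e^\eps}(R)]=\eps$, so the claim is immediate.
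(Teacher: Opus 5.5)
Your identity $U-\E^\Alt[\log f_{e^\eps}(R)]=-(1-\tau)\log(1-\tau)$ is exactly what the paper uses, and so are the bound of this gap by $\tau$ and the bound $U\ge\tau\eps$. The paper closes with a chain of ratio bounds on $\opt/\E^\Alt[\log f_{e^\eps}(R)]$, ending at $\frac{\tau\eps}{\tau\eps-\tau}=\frac{\eps}{\eps-1}$. Your additive version is slightly cleaner: it never divides by $\E^\Alt[\log f_{e^\eps}(R)]$, which the paper implicitly assumes is positive, and it handles the degenerate case $\tau=0$. For $\eps\ge 1$ your argument is complete.

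Your closing claim about $\eps<1$ is wrong, though. For $\eps<1$ the coefficient $\frac{\eps-1}{\eps}$ is negative, so multiplying $U\ge\opt$ by it gives $\frac{\eps-1}{\eps}U\le\frac{\eps-1}{\eps}\opt$. The last link of your chain therefore points the wrong way. What you have actually shown is only $\E^\Alt[\log f_{e^\eps}(R)]\ge\frac{\eps-1}{\eps}U$, which is weaker than the claim.

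Nonpositivity of the right-hand side does not rescue this. When $\eps<1$, the quantity $\E^\Alt[\log f_{e^\eps}(R)]=U+(1-\tau)\log(1-\tau)$ can itself be negative, e.g. when $\KL{Q'}{P'}\approx 0$ and $\tau$ is bounded away from $0$.

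To cover $\eps<1$ you need a lower bound on $\opt$, for instance $\opt\ge\eps\tau$. By the strong duality of \cref{thm:duality}, it suffices to show $\KL{Q'}{\Null}+\eps\TV{Q'}{\Alt}\ge\eps\tau$ for every $Q'$. This follows from two bounds:
\begin{itemize}
\item $\tau\le\TV{Q'}{\Alt}+\bigl(Q'(B)-e^\eps\Null(B)\bigr)$;
\item $\KL{Q'}{\Null}\ge\eps\bigl(Q'(B)-e^\eps\Null(B)\bigr)$, obtained by data processing to the binary KL on $\{B,B^c\}$ and then using $a\log(a/b)-a+b\ge\eps(a-e^\eps b)$ for $a\ge e^\eps b$.
\end{itemize}
With this in hand, $\E^\Alt[\log f_{e^\eps}(R)]\ge U-\tau\ge(\eps-1)\tau\ge\frac{\eps-1}{\eps}\opt$.

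Note that the paper's own ratio argument is also valid only for $\eps>1$. That is the only regime where the paper uses the lemma; it states there is no guarantee for $\eps\le1$. Simply restricting your statement to $\eps\ge 1$ would therefore match the paper.
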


\begin{proof}
As in the proof of the upper-bound, define the sub-distribution $\bar q(x) = \min(q(x), e^\varepsilon p(x))$ and let $q'(x) = \bar q(x)/(1-\tau)$ be the normalized density function. This allows us to write $Q = (1-\tau)Q' + \tau Q''$. Let $B = \lbrace x: \log(q(x)/p(x)) > \varepsilon \rbrace$. Now, have that:
\begin{align*}
    \E^\Alt[\log f_{e^\varepsilon}(R)] &= \int_\mathcal{X} q(x) \log(f_{e^\varepsilon}(R))~dx \\
    &= \varepsilon \int_B q(x)~dx + \int_{\mathcal{X} \setminus B} q(x) \log \left( \frac{q(x)}{p(x)} \right)~dx \\
    &= \varepsilon e^\varepsilon \Null(B) + \varepsilon \tau + \int_{\mathcal{X} \setminus B} q(x) \log \left( \frac{q(x)}{p(x)} \right)~dx.
\end{align*}

Next, we compute $\KL{Q'}{P'}$:

\begin{align*}
    (1-\tau)\KL{Q'}{P'}
    &= \int_\mathcal{X} \bar q(x) \log\left( \frac{q'(x)}{(1-\tau)p(x)} \right) ~dx \\
    &= \int_B e^\varepsilon p(x) \log\left( \frac{e^\varepsilon p(x)}{ p(x)} \right) ~dx + \int_{\mathcal{X} \setminus B} q(x) \log\left( \frac{q(x)}{p(x)} \right) ~dx - (1-\tau)\log(1-\tau)\\
    &= \varepsilon e^\varepsilon \Null(B) + \int_{\mathcal{X} \setminus B} q(x) \log\left( \frac{q(x)}{p(x)} \right) ~dx - (1-\tau)\log(1-\tau).
\end{align*}

It follows that:
\begin{align*}
    \frac{\opt}{\E^\Alt[\log f_{e^\eps}(R)]} 
    &\leq \frac{(1-\tau) \KL{Q'}{P'}) + \tau\varepsilon}{\E^\Alt[\log f_{e^\eps}(R)]} \\
    &= \frac{(1-\tau) \KL{Q'}{P'}) + \tau\varepsilon}{(1-\tau) \KL{Q'}{P'}) + \tau\varepsilon + (1-\tau)\log(1-\tau)} \\
    &\leq \frac{\tau\varepsilon}{\tau\varepsilon + (1-\tau)\log(1-\tau)} \\
    &\leq \frac{\tau \eps}{\tau\eps - \tau} \\
    &= \frac{\eps}{\eps-1}
\end{align*}

and thus that $\E^\Alt[\log f_{e^\eps}(R)] \geq \frac{\eps-1}{\eps}\opt$.
\end{proof}

For $\varepsilon>0$, we define the \emph{truncated scaled likelihood ratio statistic}:
\begin{equation*}
    \tsllr_\varepsilon(x) = e^{-\varepsilon} + (1-e^{-\varepsilon})\left( f_{1+e^\varepsilon}\left(\frac{d\Alt}{d\Null}(x)\right) \right).
\end{equation*}

\begin{theorem}\label{thm:tslr}
    The $\tsllr$ statistic satisfies the following properties:
    \begin{itemize}
        \item $\tsllr_\eps$ is an e-variable for $\Null$.
        \item For all $x$, $\log\tsllr_\eps(x) \in [-\varepsilon, \varepsilon]$.
        \item $\E^\Alt[\log \tsllr_\eps(x)] \geq  \frac{(\eps-1)(1-e^{-\varepsilon})}{\eps}\opt$.
    \end{itemize}
\end{theorem}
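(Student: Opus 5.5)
The plan is to verify the three bullet points one at a time, directly from the definition $\tsllr_\eps(x) = e^{-\eps} + (1-e^{-\eps})\, f_{1+e^\eps}(R(x))$ with $R = \frac{d\Alt}{d\Null}$. Write $s = 1-e^{-\eps} \in (0,1)$. The statistic is then a convex combination $(1-s)\cdot 1 + s\cdot f_{1+e^\eps}(R)$, and all three claims will follow from elementary properties of this form together with the truncation lemma just proved.

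\emph{E-variable property.} Since $f_{1+e^\eps}(R) \le R$ pointwise and $\E^\Null[R] \le 1$, linearity gives
\[
\E^\Null[\tsllr_\eps] \le e^{-\eps} + s\,\E^\Null[R] \le e^{-\eps} + s = 1.
\]
Nonnegativity is obvious.

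\emph{Bounded log-range.} The map $r \mapsto e^{-\eps} + s f_{1+e^\eps}(r)$ is nondecreasing on $[0,\infty)$. Its minimum, at $r=0$, is $e^{-\eps}$. Its maximum, attained for $r \ge 1+e^\eps$, is
\[
e^{-\eps} + (1-e^{-\eps})(1+e^\eps) = e^{-\eps} + 1 + e^\eps - e^{-\eps} - 1 = e^\eps.
\]
Hence $\log \tsllr_\eps \in [-\eps,\eps]$. This explains the somewhat unusual truncation level $1+e^\eps$: it is exactly the level at which the upper end of the range is $e^\eps$.

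\emph{E-power.} I would use concavity of $\log$ on the convex combination $(1-s)\cdot 1 + s\cdot f_{1+e^\eps}(R)$:
\[
\log \tsllr_\eps \ge (1-s)\log 1 + s \log f_{1+e^\eps}(R) = s\log f_{1+e^\eps}(R).
\]
Next, truncation at the larger level dominates truncation at the smaller one, so $f_{1+e^\eps}(R) \ge f_{e^\eps}(R)$. Taking $\E^\Alt$ and applying the preceding lemma then gives
\[
\E^\Alt[\log \tsllr_\eps] \ge (1-e^{-\eps})\,\E^\Alt[\log f_{e^\eps}(R)] \ge \frac{(\eps-1)(1-e^{-\eps})}{\eps}\,\opt.
\]

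The only real obstacle is integrability in this step. The quantity $\log f_{e^\eps}(R)$ could be $-\infty$ where $R = 0$, but that set has $\Alt$-measure zero. The expectation is bounded above by $\eps$, so it is well defined in $[-\infty,\eps]$. If it is $-\infty$, the lemma already rules that out whenever $\eps>1$; for $\eps \le 1$ the claimed bound is nonpositive. In that case it is in any event implied by the pointwise bound $\log\tsllr_\eps \ge -\eps$ only if needed, so I would handle $\eps\le 1$ separately and trivially if the lemma's proof does not cover it.
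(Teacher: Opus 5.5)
\textbf{Comparison with the paper.} Your first two bullets match the paper's argument. For the third bullet you take a different and cleaner route. You use a single weighted AM--GM/concavity step, $\log\bigl((1-s)+s\,y\bigr)\ge s\log y$ with $y=f_{1+e^\eps}(R)$, and then monotonicity in the truncation level. Together these give $\log \tsllr_\eps \ge (1-e^{-\eps})\log f_{e^\eps}(R)$ at every point at once.

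The paper instead splits $\mathcal{X}$ into three regions: $A=\{p>q\}$, $B=\{q>(1+e^\eps)p\}$, and the middle region $M$. It uses equality on $B$ and the same concavity bound on $M$. On $A$, however, it only derives $\log\tsllr_\eps\ge\log f_{e^\eps}(R)$. This is \emph{weaker} than the needed $(1-e^{-\eps})\log f_{e^\eps}(R)$, because $\log f_{e^\eps}(R)<0$ on $A$. Your uniform argument is therefore not just shorter: it supplies the correct pointwise inequality on $A$, which the case split as written does not. Both proofs then finish by invoking the truncation lemma.

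\textbf{Your closing paragraph.} The integrability concern is moot. Since $\log y\ge 1-1/y$, you have $\E^\Alt[\log f_{e^\eps}(R)]\ge\int_{\{R<e^\eps\}}(q-p)\,dx\ge -1$.

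The proposed fallback for $\eps\le 1$ is wrong. The bound $\log\tsllr_\eps\ge-\eps$ does not imply $\E^\Alt[\log\tsllr_\eps]\ge-\frac{(1-\eps)(1-e^{-\eps})}{\eps}\opt$, because that right-hand side tends to $0$ as $\opt\to 0$ (for example as $\Alt\to\Null$).

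Your hedge itself is warranted. The lemma's proof divides by $\E^\Alt[\log f_{e^\eps}(R)]$ and ends with the factor $\eps/(\eps-1)$, so it only covers $\eps>1$. The right fix is to show $\E^\Alt[\log\tsllr_\eps]\ge 0$:
\begin{itemize}
\item With $s=1-e^{-\eps}$, set $h(R)=\log(1-s+sR)-s(1-1/R)$. Its derivative $h'(R)$ has the sign of $R-1$, so $h\ge h(1)=0$.
\item On the truncated region you have $\log\tsllr_\eps=\eps\ge s$.
\item Together these give $\log\tsllr_\eps\ge s(1-1/R)$ pointwise, hence $\E^\Alt[\log\tsllr_\eps]\ge s\,(1-\Null(q>0))\ge 0$.
\end{itemize}
This dominates the nonpositive right-hand side when $\eps\le 1$.
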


\begin{proof}
First, we have $\E^\Null[\tsllr_\eps(x)] =e^{-\varepsilon} + (1-e^{-\varepsilon})\E^\Null[f_{1+e^\varepsilon}(\frac{d\Alt}{d\Null}(x))] \leq 1$ because $f_{1+e^\varepsilon}(\frac{d\Alt}{d\Null}(x))$ is an e-variable.
    
Next, we show that $\log\tsllr_\eps$ has bounded sensitivity. Clearly, $\log \tsllr_\eps(x) \geq -\varepsilon$. Simultaneously, we have $\log \tsllr_\eps(x) \leq \log(e^{-\varepsilon} + (1-e^{-\varepsilon})(1+e^\varepsilon)) = \log(e^\varepsilon)=\varepsilon$. So, our statistic has bounded log-sensitivity of $2\varepsilon$.

Finally, we show that the $\tsllr_\eps$ has nearly-optimal e-power for sufficiently large $\eps$. We will bound the e-power of our statistic against $f_\varepsilon$ on the three regions $A = \lbrace x: p(x)>q(x) \rbrace$, $B = \lbrace x:q(x) > (1+e^\varepsilon) p(x) \rbrace$, and $M = \mathcal{X} \setminus (A \cup B)$.

On $A$, we have $\tsllr_\eps(x) = e^{-\varepsilon} + (1-e^{-\varepsilon})\frac{q(x)}{p(x)} \geq \frac{q(x)}{p(x)}$ because $\frac{q(x)}{p(x)} \leq 1$, and so $\log \tsllr_\eps(x) \geq \log f_{e^\varepsilon}(R)$. On $B$, we have $\log \tsllr_\eps(x) = \varepsilon = \log f_{e^\varepsilon(R)}$. Finally, on $M$, we have $\log (\tsllr_\eps(x)) \geq (1-e^{-\varepsilon}) \log \frac{q(x)}{p(x)} \geq (1-e^{-\varepsilon})\log f_{e^{\varepsilon}}(R)$. We conclude that $\log (\tsllr_\eps(x)) \geq (1-e^{-\varepsilon}) \log f_{e^\varepsilon}(R)$ pointwise, and therefore that:
\begin{equation*}
        \E^\Alt[\log \tsllr_\eps(x)] \geq \frac{(\eps-1)(1-e^{-\varepsilon})}{\eps}\opt .
\end{equation*}
\end{proof}

The $\tsllr_\eps$ statistic can be used directly for sufficiently large $\eps$. For example, if $\eps > 2.3$, then one can calculate that $\E^\Alt[\log \tsllr_\eps] \geq \frac{1}{2}\opt$. For $\eps \leq 1$, however, it has no provable guarantees in general. 

\subsection{Extending the distribution-independent test statistic}

We overcome this challenge in two steps. First, we observe that if $E'$ is an e-variable such that $\log E' \in [-\eps', \eps']$ for some $\eps' > \eps$ and $\E^\Alt[\log E'] = \mu$, then $E = \exp(\frac{\eps}{\eps'} \log E')$ is an e-variable (by concavity) which satisfies $\log E \in [-\eps, \eps]$ and $\E^\Alt[\log E] = \frac{\eps}{\eps'}\mu$. Second, we observe that the optimal rate $\opt$ is clearly a non-decreasing function of $\eps$, i.e. $\mathfrak{R}_{\eps'}(\Alt~\Vert~\Null) \geq \opt$. This allows us to construct the following continuous family of candidate bounded-sensitivity e-variables:
\begin{equation*}
    \tsllr_\eps^{\eps'} \coloneq (\tsllr_{\eps'})^{\eps/\eps'} \in [e^{-\eps}, e^\eps],
\end{equation*}

and we can choose $\eps^* \geq \eps$ in order to optimize our lower bound on their e-power:
\begin{equation*}
    \E^\Alt[\log \tsllr_\eps^{\eps^*}] \geq \sup_{\eps' \geq \min(1,\eps)} \left(\frac{(\eps'-1)(1-e^{-{\eps'})}} {\eps'}\right)\left(\frac{\eps}{\eps'}\right) \opt.
\end{equation*}

The right-hand side is a linear function of $\eps$, and therefore the value of the unconstrained optimum is independent of $\eps$. We numerically calculate that the maximum is obtained at $\eps^* \approx 2.334$ where the value of the right-hand side is approximately $0.221 \cdot \eps$. For $\eps > 2.334$, the optimum is attained at the extreme point $\eps^* = \eps$, i.e. the unadjusted $\tsllr_\eps$ statistic. For $\eps = \Theta(1)$, we simplify these constants slightly to conclude that:
\begin{equation*}
    \E^\Alt[\log \tsllr_\eps^{\eps^*}] \asymp \min\left( \frac{\eps}{5},1 \right)\opt = \Theta(\opt)
\end{equation*}

Finally, we prove a slightly modified version of \cref{thm:basic-batch} for e-variables with log-range exactly $[-\eps, \eps]$, which allows us to derive a distribution-independent $\eps$-DP e-variable satisfying 
\begin{equation*}
    \frac{\E^\Alt[\log E]}{n} \asymp \min\left(\frac{\eps}{10}, \frac{1}{2}\right) \opt = \Theta(\opt),
\end{equation*}
as desired.

The formal statement of the modified theorem follows:
\begin{theorem}\label{thm:basic-batch-symmetric}
    Let $E \in [e^{-\eps}, e^\eps]$ be an e-variable for $\Null$ such that $\E^\Alt[\log E] = \mu$. Then, for every $\eps = O(1)$, there exist computable values $\lambda \in (0, 1/2)$ and $b<1$ such that $\exp(\tilde \Lambda_n(E;\lambda))$ is an $\eps$-DP e-variable for $\Null$ and:
    \begin{equation*}
        \E^\Alt[\tilde \Lambda_n(E; \lambda)] \geq \frac{n\mu}{2} - \log(n\mu) - O(1)
    \end{equation*}
\end{theorem}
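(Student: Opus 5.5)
We follow the template of \cref{thm:basic-batch}, changing only the sensitivity bookkeeping: $E$ now has log-range $[-\eps,\eps]$, so its log-sensitivity is $2\eps$ rather than $\eps$. The plan has three parts: establish validity of $\exp(\tilde\Lambda_n(E;\lambda))$, calibrate the noise for privacy, and lower bound the e-power. The factor $1/2$ in the rate is exactly the price of the doubled sensitivity.

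\emph{Validity.} For $\lambda\in(0,1)$, each factor $1-\lambda+\lambda E(x_t)$ has $\Null$-mean at most $1$. The $x_t$ are independent under $\Null^n$ and $Z_b$ is independent of the data, so
\[
\E^\Null\bigl[\exp(\tilde\Lambda_n)\bigr]=\E^\Null\Bigl[\prod_t(1-\lambda+\lambda E(x_t))\Bigr]\cdot\frac{\E[e^{Z_b}]}{\E[e^{Z_b}]}\le 1.
\]
This requires $b<1$, which guarantees $\E[e^{Z_b}]=1/(1-b^2)<\infty$.

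\emph{Privacy.} Changing one $x_t$ changes $\Lambda_n$ by at most $\sup-\inf$ of $\log(1-\lambda+\lambda E)$ over $E\in[e^{-\eps},e^\eps]$, namely
\[
\Delta(\lambda)=\log\frac{1-\lambda+\lambda e^{\eps}}{1-\lambda+\lambda e^{-\eps}}.
\]
The Laplace mechanism with $b=\Delta(\lambda)/\eps$ is then $\eps$-DP. We need $b<1$, i.e.\ $\Delta(\lambda)<\eps$. Since $\log(1-\lambda+\lambda e^{u})$ is convex in $u$ with slope $\le\lambda$ up to lower-order corrections, we have $\Delta(\lambda)\le 2\lambda\eps$ (check by concavity of $\lambda\mapsto\log(1-\lambda+\lambda e^{u})$ and a Taylor bound). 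So any $\lambda<1/2$ suffices, giving $b\le 2\lambda<1$. This is why the statement restricts to $\lambda\in(0,1/2)$.

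\emph{Power.} By concavity of $\log$ in $\lambda$, $\log(1-\lambda+\lambda E)\ge\lambda\log E$ pointwise for $\lambda\in[0,1]$. Hence
\[
\E^\Alt[\tilde\Lambda_n]\ge n\lambda\mu-\log\frac{1}{1-b^2}\ge n\lambda\mu+\log(1-4\lambda^2).
\]
Set $\lambda=\tfrac12-\delta$. Then $1-4\lambda^2=4\delta(1-\delta)\ge 2\delta$, so the bound becomes $n\mu/2-n\delta\mu+\log\delta+O(1)$. Choosing $\delta=1/(n\mu)$ (or any $\Theta(1/(n\mu))$) yields $n\mu/2-\log(n\mu)-O(1)$. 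When $n\mu$ is small, take $\delta$ a constant instead, and the statement is vacuous up to the $O(1)$ term.

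\emph{Main obstacle.} The key inequality is $\Delta(\lambda)\le 2\lambda\eps$, which must hold uniformly for $\eps=O(1)$. For $\eps$ large this linear bound may fail, or at least lose constants, because $\log(1-\lambda+\lambda e^{\eps})$ is not linear in $\eps$. I would first try to prove $\Delta(\lambda)/\eps\le 2\lambda+O(\lambda\eps)$ by a Taylor expansion around $\lambda=0$ and absorb the correction into $\delta$. If that loses too much, the fallback is to set $b=\Delta(\lambda)/\eps$ exactly and solve for $\lambda$ with $b=1-\Theta(1/(n\mu))$. This gives $\lambda\mu$ within $O(1/n)$ of the value at $b=1$. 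Its $\lambda$ is then shown to be at least $1/2$ by the symmetry $E\leftrightarrow$ range $[-\eps,\eps]$: at $\lambda=1/2$ the increments $\log\frac{1+e^{\pm\eps}}{2}$ differ by at most $\eps$, since $\log\cosh$-type convexity gives $\log(1+e^\eps)-\log(1+e^{-\eps})=\eps$ exactly.
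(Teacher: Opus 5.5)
\textbf{Gap: the inequality $\Delta(\lambda)\le 2\lambda\eps$ is reversed.} Your overall structure matches the paper: validity via independence and $\E[e^{Z_b}]=1/(1-b^2)$, Laplace calibration $b=\Delta(\lambda)/\eps$, the pointwise bound $\log(1-\lambda+\lambda E)\ge\lambda\log E$, and $\lambda=\tfrac12-s$ with $s\asymp 1/(n\mu)$. But the quantitative step it all rests on is wrong.

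The inequality $\Delta(\lambda)\le 2\lambda\eps$, and hence $b\le 2\lambda$ and $\log(1-b^2)\ge\log(1-4\lambda^2)$, fails for every $\eps>0$ and every $\lambda\in(0,1/2)$, not just for large $\eps$. The map $\lambda\mapsto\Delta(\lambda)$ is concave on $[0,1/2]$ with $\Delta(0)=0$ and $\Delta(1/2)=\eps$, so it lies strictly \emph{above} the chord $2\lambda\eps$. Concretely:
\begin{itemize}
\item near $0$, $\Delta(\lambda)=2\lambda\sinh\eps+O(\lambda^2)$;
\item near $1/2$, $\Delta(\tfrac12-s)=\eps-4\tanh(\eps/2)\,s+O(s^2)$, with $4\tanh(\eps/2)<2\eps$;
\item for $\eps=1$, $\lambda=1/4$, one gets $\Delta\approx 0.529>0.5$.
\end{itemize}
Your convexity heuristic points the wrong way. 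For $g(u)=\log(1-\lambda+\lambda e^u)$ one has $g'(u)+g'(-u)\ge 2\lambda$, with equality at $u\neq 0$ only when $\lambda=1/2$.

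The qualitative claim $b<1$ for $\lambda<1/2$ survives, since $\Delta$ is increasing in $\lambda$. The rate does not. Your Taylor repair around $\lambda=0$ also targets the wrong regime: the relevant $\lambda$ is near $1/2$, where an $O(\lambda\eps)$ correction is of order $\eps$ and can push $b$ past $1$.

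\textbf{The fallback is missing its key estimate.} Taking $b=\Delta(\lambda)/\eps$ exactly is what the paper does. However, your justification only pins down the endpoint: $\log(1+e^\eps)-\log(1+e^{-\eps})=\eps$ shows $b=1$ at $\lambda=1/2$. (For $b<1$ the matching $\lambda$ is below $1/2$, not at least $1/2$.)

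What is missing is a linear rate: $\eps-\Delta(\tfrac12-s)\ge c\,s$, with $c/\eps$ bounded away from $0$ for $\eps=O(1)$. This is exactly what converts $\log(1-b^2)$ into $\log s-O(1)$. The paper obtains it by showing $R_s=\Delta(\tfrac12-s)$ is concave on $[0,1/2]$ with $R_0'=-4\tanh(\eps/2)$. Hence $b\le 1-Cs$ with $C=4\tanh(\eps/2)/\eps\in(0,2]$, which is bounded away from $0$ when $\eps=O(1)$. A cruder route is the bound $\Delta'(\lambda)\ge 2\tanh(\eps/2)+1-e^{-\eps}$ on $[0,1/2]$ plus the mean value theorem.

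With that in hand, $1-b^2\ge Cs$. Your choice $s=1/(n\mu)$ then gives $\tfrac{n\mu}{2}-\log(n\mu)-1+\log C$, which is the claimed bound. The paper instead optimizes $s$ exactly via the substitution $C=n\mu\sinh u$, which only affects the constant.
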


We provide the (rather technical) proof below. It is conceptually similar to the proof of \cref{thm:basic-batch}, except that the particularly symmetric range of $[e^{-\eps}, e^\eps]$ allows us to derive a more precise closed-form expression for the optimal choice of $\lambda$, which in turn gives a tighter characterization of the final error.

\begin{proof}[Proof of \cref{thm:basic-batch-symmetric}]
Let $E \in [e^{-\eps}, e^\eps]$ be an e-variable for $\Null$. Then
the non-private test statistic from \cref{eq:batch_test_statistic} instantiated with $E$ has log-sensitivity:
    \begin{equation*}
        R_\lambda = \log \left( \frac{1-\lambda+\lambda e^{\varepsilon}}{1-\lambda+\lambda e^{-\varepsilon}} \right).
    \end{equation*}

In order to ensure that we end up with a valid e-variable, we need $\mathbb{E}[e^{Z_b}] = \frac{1}{1-b^2} < \infty$, which implies that $b < 1$. Meanwhile, to satisfy $\varepsilon$-DP, we require that $b \geq \frac{R_\lambda}{\varepsilon}$. Writing $\mu = \E^\Alt[\log E]$ and using the concavity of $\log(\cdot)$, we have that:
    \begin{equation*}
        \mathbb{E}^\Alt[\tilde \Lambda_S(\lambda)] = n \cdot\mathbb{E}^\Alt[\log(1-\lambda+\lambda E)] + \mathbb{E}[Z_b] - \log \mathbb{E}[\exp(Z_b)] \geq \lambda n \mu + \log (1-b^2).
    \end{equation*}

    We will choose $b = \frac{R_\lambda}{\varepsilon}$ exactly and reparameterize with $\lambda = \frac{1}{2} - s$. Then:
    \begin{align*}
        R_s &= \log \left( \frac{1 + s - \frac{1}{2} + (\frac{1}{2}-s)e^\varepsilon}{1 + s - \frac{1}{2} + (\frac{1}{2}-s)e^{-\varepsilon}} \right) \\
        &= \log \left( \frac{1 + 2s + (1-2s)e^\varepsilon}{1 + 2s + (1-2s)e^{-\varepsilon}} \right) \\
        &= \varepsilon+ \log \left( \frac{1 + 2s + (1-2s)e^\varepsilon}{(1 + 2s)e^\varepsilon + 1-2s} \right) \\
        &= \varepsilon+ \log \left( \frac{e^\varepsilon+1 + (2-2e^\varepsilon)s}{(2e^\varepsilon-2)s + e^\varepsilon+1} \right).
    \end{align*}

    Taking derivatives, we get:
    \begin{align*}
        R'_s &= \frac{2-2e^\varepsilon}{e^\varepsilon+1+(2-2e^\varepsilon)s} - \frac{2e^\varepsilon-2}{(2e^\varepsilon-2)s+e^\varepsilon+1} \\
        R''_s &= \frac{-(2-2e^\varepsilon)^2}{(e^\varepsilon+1+(2-2e^\varepsilon)s)^2} + \frac{(2e^\varepsilon-2)^2}{((2e^\varepsilon-2)s+e^\varepsilon+1)^2}.
    \end{align*}

    At $s=0$, we get $R'_s = -4\tanh(\varepsilon/2)$ and $R''_s=0$. Otherwise, we have $R''_s < 0$ whenever:
    \begin{align*}
        \frac{(2-2e^\varepsilon)^2}{(e^\varepsilon+1+(2-2e^\varepsilon)s)^2} 
        &> 
        \frac{(2e^\varepsilon-2)^2}{((2e^\varepsilon-2)s+e^\varepsilon+1)^2} 
        \\
        ((2e^\varepsilon-2)s+e^\varepsilon+1)^2
        &> 
        (e^\varepsilon+1+(2-2e^\varepsilon)s)^2
        \\
        (e^{2\varepsilon}-1)s 
        &>
        (1-e^{2\varepsilon})s \\
        s &> 0.
    \end{align*}

    So, we have that $R_s$ is concave on the interval $[0, 1/2]$, which means that we can uniformly upper-bound it by $R_0 + R'_0s = \varepsilon - 4\tanh(\varepsilon/2)s$. Rewriting our objective function with the same change of variables, we get:

    \begin{equation*}
        \left(\frac{1}{2}-s\right)n\mu + \log\left(1 - \left( 1 - \frac{4\tanh(\varepsilon/2)s}{\varepsilon} \right)^2 \right).
    \end{equation*}

    Let $C = \frac{4\tanh(\varepsilon/2)}{\varepsilon}$. Then the expression simplifies to:
    \begin{equation*}
        \left(\frac{1}{2}-s\right)n\mu + \log\left(1 - \left( 1 - Cs \right)^2 \right).
    \end{equation*}

    Applying the first-order optimality condition with respect to $s$, we get:
    \begin{align*}
        0 &= -n\mu + \frac{2C(1-Cs)}{1-(1-Cs)^2} \\
        0 &= (1-Cs)^2n\mu + 2C(1-Cs) - n\mu \\
        (1-Cs) &= \frac{-C \pm \sqrt{C^2+n^2\mu^2}}{n\mu} \\
        s &= \frac{C - \sqrt{C^2+n^2\mu^2} +n\mu}{Cn\mu}     
        \end{align*}

    At this point, we introduce the change of variables $n\mu = B$, $C = B \sinh(u)$. Then:
    \begin{equation*}
        Cs = \frac{B \sinh(u) - B\cosh(u) + B}{B} = \sinh(u) - \cosh(u) + 1 = 1-e^{-u},
    \end{equation*}
    and so
    \begin{equation*}
        \log(1-(1-Cs)^2) = \log(1-e^{-2u}).
    \end{equation*}
    Simultaneously, we have that:
    \begin{equation*}
        Bs = \frac{B\sinh(u) - B\cosh(u) + B}{B\sinh(u)} = 1+\frac{1-\cosh(u)}{\sinh(u)} = 1 - \tanh(u/2).
    \end{equation*}

    So, all together, the value of our objective function becomes:
    \begin{align*}
        \frac{B}{2} - \left(1-\frac{e^u-1}{e^u+1}\right) + \log(1-e^{-2u})
        &= \frac{B}{2} - \frac{2}{e^u+1} + \log(1-e^{-2u}) \\
        &\geq \frac{B}{2} - 1 + \log(1-e^{-2u}).
    \end{align*}

    Using the logarithmic representation of $u = \sinh^{-1}\left( \frac{C}{B} \right)$, this simplifies to:

    \begin{equation*}
        \frac{B}{2}-1+\ln \left(\frac{2C e^{-u}}{B}\right) = \frac{B}{2} - \ln(B) -1 +\ln \left(2C\right) - u,
    \end{equation*}

    and expanding the definition of $B$ then yields the theorem statement.
\end{proof}

\section{Deferred proofs}
\label{sec:deferred_proofs}

\subsection{Deferred proofs from \cref{sec:batch}}
\label{sec:deferred_batch}

\begin{proof}[Proof of \cref{prop:characterization}]

We begin by proving the following helpful lemma:

\begin{lemma}\label{lem:dp-ratio}
    Let $\M: \mathcal{X}^n \to \mathcal{O}$ be an arbitrary $\varepsilon$-DP mechanism. Then the following likelihood ratio always exists:
    \begin{equation}
        E_\M(X)\coloneq\frac{d\M(\Alt^n)}{d\M(\Null^n)}(\M(X)).
    \end{equation}

    Moreover, it is a $\eps$-DP e-variable for $\Null$, $\E^\Alt[\log E_\M(X)] = \KL{\M(\Alt^n)}{\M(\Null^n)}$, and it is the log-optimal e-variable for testing $\M(\Alt^n)$ against $\M(\Null^n)$.
\end{lemma}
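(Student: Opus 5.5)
The plan is to prove the four claims of \cref{lem:dp-ratio} in order: existence of the Radon--Nikodym derivative, the e-variable and privacy properties, the identity for its e-power, and log-optimality.

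\textbf{Existence.} I would show that $\M(\Alt^n) \ll \M(\Null^n)$, and in fact that the density is bounded. Fix a measurable $S \subseteq \mathcal{O}$ and any two datasets $x, x' \in \mathcal{X}^n$. They differ in at most $n$ entries, so group privacy (\cref{lem:group-privacy}) gives
\[
\Pr[\M(x)\in S] \le e^{n\eps}\Pr[\M(x')\in S].
\]
Integrate $x \sim \Alt^n$ on the left and $x' \sim \Null^n$ on the right. This yields $\M(\Alt^n)(S) \le e^{n\eps}\,\M(\Null^n)(S)$ for every $S$. Hence absolute continuity holds, and Radon--Nikodym gives a density $E_\M \in [e^{-n\eps}, e^{n\eps}]$ a.e. (the lower bound by the symmetric argument). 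This boundedness is also what \cref{prop:suboptimality} uses.

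\textbf{E-variable and privacy.} $E_\M(X)$ is a deterministic measurable function of the output $\M(X)$, so it is $\eps$-DP by post-processing. For validity,
\[
\E^{\Null^n}[E_\M] = \int \frac{d\M(\Alt^n)}{d\M(\Null^n)}\, d\M(\Null^n) = 1 .
\]

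\textbf{E-power.} By the definition of KL divergence, $\E^{\Alt^n}[\log E_\M] = \int \log \frac{d\M(\Alt^n)}{d\M(\Null^n)}\, d\M(\Alt^n) = \KL{\M(\Alt^n)}{\M(\Null^n)}$. This quantity is finite because the log-density is bounded by $n\eps$ in absolute value.

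\textbf{Log-optimality.} Let $E'$ be any e-variable for testing $\M(\Alt^n)$ against $\M(\Null^n)$. That is, $E' = g(Y)$, possibly with extra independent randomness, where $Y = \M(X)$ and $\E^{\M(\Null^n)}[E'] \le 1$. We may assume $E' \ge 0$. Then, writing $r$ for the density,
\[
\E^{\Alt}\!\left[\log \frac{E'}{E_\M}\right] \le \log \E^{\Alt}\!\left[\frac{E'}{E_\M}\right] = \log \E^{\Null}\!\left[E' \,\mathbb{I}[r>0]\right] \le \log 1 = 0 .
\]
The first step is Jensen's inequality, and the middle step is a change of measure using $d\M(\Alt^n) = r\, d\M(\Null^n)$. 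If the e-variable uses extra randomness, I would first condition on it, or equivalently replace $E'$ by its conditional expectation given $Y$, which only increases $\log$-power by Jensen.

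The main obstacle is measure-theoretic care in this last step. It needs well-definedness when $E' = 0$ with positive $\Alt$-probability (then the left side is $-\infty$ and the inequality is trivial). It also needs the bounded density, so that $\log E_\M$ is integrable and the difference $\E^{\Alt}[\log E'] - \E^{\Alt}[\log E_\M]$ makes sense. Both points are handled by the $[e^{-n\eps}, e^{n\eps}]$ bound established in the existence step.

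Combining these facts gives \cref{prop:characterization}. Any $\eps$-DP e-variable $E = \M(X)$ has e-power at most $\E^{\Alt}[\log E_\M] = \KL{\M(\Alt^n)}{\M(\Null^n)}$, since $E$ is itself an e-variable for the output distributions. Conversely, $E_\M$ attains this value. Taking the supremum over $\M$ and then the limit in $n$ gives the stated equality.
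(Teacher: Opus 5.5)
Your proposal is correct and follows the paper's proof. Group privacy gives $\M(\Alt^n) \ll \M(\Null^n)$, and post-processing plus the definitions of e-variable and KL divergence give the remaining claims. The differences are minor: you record the two-sided density bound $[e^{-n\eps}, e^{n\eps}]$, which the paper later uses in \cref{prop:suboptimality}, and you prove log-optimality directly via Jensen and a change of measure, where the paper cites Proposition 3.22 of \citet{ramdasHypothesisTestingEvalues}, whose proof is essentially your argument.
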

\begin{proof}[Proof of \cref{lem:dp-ratio}]
    To show that the ratio exists, let $A$ be any event with $\M(\Alt^n)(A) > 0$. Then by \cref{lem:group-privacy}, $\M(\Null^n)(A) \geq e^{-n\varepsilon}\M(\Alt^n) > 0$, and so $\M(\Alt^n) \ll \M(\Null^n)$.  Next, $E_\M(X)$ is $\eps$-DP by post-processing, and it is an e-variable because $\E^{\Null^n}[E_\M(X)] = \int_\mathcal{O} (d\M(\Alt^n)/d\M(\Null^n))~d\M(\Null^n) = \int_\mathcal{O} d\M(\Alt^n) = 1$. Finally, $\E^{\Alt^n}[\log E_\M(X)] = \KL{\M(\Alt^n)}{\M(\Null^n)}$ by the definition of KL-divergence, and as a likelihood ratio, $E_\M$ is log-optimal for testing $\M(\Alt^n)$ against $\M(\Null^n)$ by Proposition 3.22 of \citet{ramdasHypothesisTestingEvalues}.
\end{proof}

We proceed with the proof of \cref{prop:characterization}. Let $x_{1:n}$ be data generated i.i.d.\ from either $\Null$ or $\Alt$, and let $E=\M(x_{1:n})$ be an $\eps$-DP e-variable for $\Null$. Then, by the Donsker--Varadhan lemma (\cref{lem:donsker_varadhan}), it follows that:
\begin{align*}
    \E^{\Alt^n}[\log E] &\leq \log(\E^{\Null^n}[E]) +  \KL{\M(\Alt^n)}{\M(\Null^n)}\\
    &\leq \KL{\M(\Alt^n)}{\M(\Null^n))},
\end{align*}
where the second inequality follows because $\M(x_{1:n})$ is an e-variable for $\Null$. Hence,
\[\opt \leq \sup_{\M~\eps\text{-DP}} \frac{\KL{\M(\Alt^n)}{\M(\Null^n)}}{n}.\]

But, for any arbitrary $\eps$-DP mechanism $\M$, \cref{lem:dp-ratio} tells us that the right-hand side is attained exactly by the $\eps$-DP e-variable $E_\M(X)$, and we conclude that the inequality is in fact an equality. 
\end{proof}

\vspace{1cm}

In \cref{thm:exact-lb}, we show that the e-power of the non-private e-variable $E^*$ exactly matches the upper bound given by \cref{thm:e-power-ub} under the minor technical condition that the set $\mathcal{X}$ is a Polish space, which encompasses essentially all interesting cases.

\begin{proof}
    Let $\mathcal{X}$ be a Polish space, and let $\mathcal{D}$ denote the set of probability distributions over $\mathcal{X}$ endowed with the weak topology. Define:
    \begin{equation*}
        f(Q) \coloneq \KL{Q}{\Null} + \eps \TV{Q}{\Alt}, \qquad \widetilde Q \coloneq \argmin_{Q \in \mathcal{D}} f(Q).
    \end{equation*}

    We will begin by showing that $\widetilde Q$ exists and is almost-surely unique. To this end, we observe that $f(\Null) = \eps \TV{\Null}{\Alt}$ is feasible. So, it suffices to optimize over the set $S \coloneq\lbrace Q \mid \KL{Q}{\Null} \leq \eps \TV{\Null}{\Alt}\rbrace$, which is weakly compact because $\mathcal{X}$ is a Polish space. Then, because $f$ is weakly lower semi-continuous, it follows from the extreme value theorem that a minimizer of $f$ exists in $S$. Finally, because KL divergence is strictly convex over $S$ while TV distance is convex, $f$ is strictly convex and we conclude that the minimizer is unique.

    Having established existence and uniqueness, we now turn to analyzing the specific form of the optimum. For a fixed distribution $R$ with density $r$, we define $g(x) = q(x)/p(x)$ and $s(x) = r(x)/p(x)$. Then we can expand the objective function as:
    \begin{align*}
        f(R) &= \int_{\mathcal{X}} r(x) \log \left( \frac{r(x)}{p(x)} \right)~dx + \frac{\eps}{2}\int_{\mathcal{X}} |r(x) - q(x)|~dx\\ 
        &= \int_{\mathcal{X}} \left[\frac{r(x)}{p(x)} \log \left( \frac{r(x)}{p(x)} \right) + \frac{\eps}{2}\left| \frac{r(x)}{p(x)} - \frac{q(x)}{p(x)} \right| \right]p(x)~dx \\
        &= \int_{\mathcal{X}} \left[s(x) \log \left( s(x) \right) + \frac{\eps}{2}\left| s(x) - g(x) \right| \right]p(x)~dx.
    \end{align*}

    We will minimize this quantity with respect to $s(x)$, subject to the constraint that $\int_\mathcal{X}s(x)p(x)~dx = 1$. We define the Lagrangian $\mathcal{L}(R,\lambda) = f(R) + \lambda (1-\int_{\mathcal{X}} s(x)p(x)~dx)$ and compute:
    \begin{align*}
        \frac{\partial \mathcal{L}}{\partial s}
        &= \int_\mathcal{X} \left[\log s(x) + 1 - \lambda + \frac{\eps}{2} \sign[s(x)-g(x)] \right]p(x)~dx,
    \end{align*}
    which yields the first-order optimality condition:
    \begin{equation*}
        \log s(x) + 1 - \lambda + \frac{\eps}{2} \sign[s(x)-g(x)] = 0
    \end{equation*}
    for all $x$. Rearranging, we conclude that:
    \begin{equation*}
        s(x) = \begin{cases}
            c_1 \coloneq e^{-\eps/2 + \lambda^* - 1} & g(x) < c_1, \\
            c_2 \coloneq e^{\eps/2+\lambda^*-1} & g(x) > c_2, \\
            g(x) & \text{otherwise},
        \end{cases}
    \end{equation*}
    where $\lambda^*$ is the unique value of the Lagrange multiplier such that $\int_{\mathcal{X}} s(x)p(x)~dx = 1$. From here, we define the sets:
    \begin{equation*}
        A = \lbrace x \mid q(x)/p(x) < c_1\rbrace,\quad B = \lbrace x \mid q(x)/p(x) > c_2\rbrace, \quad M = \mathcal{X} \setminus(A \cup B),
    \end{equation*}
    which yields the following expression for the density of the $\widetilde Q$:
    \begin{equation*}
        \widetilde q(x) = \begin{cases}
            c_1\cdot p(x) & x \in A, \\
            q(x) & x \in M, \\
            c_2 \cdot p(x) & x \in B.
        \end{cases}
    \end{equation*}
    We are now ready to define our e-variable, $E^*(x) \coloneq \frac{\widetilde q(x)}{p(x)}$. Immediately, we observe that $c_2/c_1 = e^\varepsilon$ and therefore that $E^*$ has bounded log-sensitivity of $\eps$. Moreover, we can compute that:
    \begin{align*}
        \E^\Alt[\log E^*] - \KL{\widetilde Q}{\Null}
        &= \int_A (q(x) - \widetilde q(x)) \log(c_1)~dx + \int_B (q(x) - \widetilde q(x)) \log(c_2)~dx \\
        &= \left( \widetilde Q(A) - \Alt(A)  \right)[\eps/2 - \lambda^* + 1] + (\Alt(B) - \widetilde Q(B))[\eps/2 + \lambda^* - 1] \\
        &= \eps \TV{\widetilde Q}{\Alt},
    \end{align*}

    where we used the fact that $\widetilde Q(A) - \Alt(A) = \Alt(B) - \widetilde Q(B) = \TV{\widetilde Q}{\Alt}$. It follows that:
    \begin{equation*}
        \E^\Alt[\log E^*] = \KL{\widetilde Q}{\Null} + \eps \TV{\widetilde Q}{\Alt},
    \end{equation*}

    as desired.
\end{proof}

\vspace{1cm}

We present the proof of \cref{thm:basic-batch}, which establishes that the private test statistic from \cref{eq:batch_test_statistic} matches the optimal rate up to lower-order terms.

\begin{proof}[Proof of \cref{thm:basic-batch}]
The non-private test statistic from \cref{eq:batch_test_statistic} has log-sensitivity:
    \begin{equation*}
        R_\lambda = \log \left( \frac{1-\lambda+\lambda c_2}{1-\lambda+\lambda c_1} \right).
    \end{equation*}
    In order to ensure that we end up with a valid e-variable, we need $\mathbb{E}[e^{Z_b}] = \frac{1}{1-b^2} < \infty$, which implies that $b < 1$. Meanwhile, to satisfy $\varepsilon$-DP, we require that $b \geq \frac{R_\lambda}{\varepsilon}$. Writing $\mu = \E^\Alt[\log E]$ and using the concavity of $\log(\cdot)$, we have that:
    \begin{equation*}
        \mathbb{E}^\Alt[\tilde \Lambda_S(\lambda)] = n \cdot\mathbb{E}^\Alt[\log(1-\lambda+\lambda E)] + \mathbb{E}[Z_b] - \log \mathbb{E}[\exp(Z_b)] \geq \lambda n \mu + \log (1-b^2).
    \end{equation*}

    We will choose $b = \frac{R_\lambda}{\varepsilon}$ exactly and reparameterize with $\lambda = 1 - s$. Taking the Taylor expansion of $R_s$ around $s=0$, we can approximate:
    \begin{align*}
        R_s &= \log\left( \frac{(1-s)c_2 + s}{(1-s)c_1 + s} \right) 
        = \log((1-s)c_2 + s) - \log((1-s)c_1 + s) \\
        \frac{\partial R_s}{\partial s} &=
        \frac{1-c_2}{(1-s)c_2 + s} - \frac{1-c_1}{(1-s)c_1 + s} \\
        &= \frac{(1-c_2)((1-s)c_1 + s) - (1-c_1)((1-s)c_2 + s)}{((1-s)c_2 + s)((1-s)c_1 + s)} \\
        &= \frac{s(1 - c_2 - 1 + c_1) + (1-s)(c_1 - c_1c_2 - c_2 + c_1c_2)}{((1-s)c_2 + s)((1-s)c_1 + s)} \\
        &= \frac{c_1-c_2}{((1-s)c_2 + s)((1-s)c_1 + s)} \\
        R_s &= \eps - \frac{c_2-c_1}{c_1c_2}s + O(s^2).
    \end{align*}
    Let $\beta = \frac{c_2-c_1}{c_1c_2} > 0$. Then we have that $b = 1 - s\beta/\eps + O(s^2/\eps)$ and thus that $\log(1-b^2) \approx \log(2s\beta/\eps) = \log(s) + O(1)$ for sufficiently small $s$. Plugging this approximation into our objective function, we get:
    \begin{equation*}
        (1-s)n\mu + \log s + O(1),
    \end{equation*}
    and applying the first order condition with respect to $s$, we arrive at an asymptotically optimal value of $s = \frac{1}{n\mu}$, which does indeed converge to 0. With this choice, we obtain that our objective value satisfies
    \begin{equation*}
        \mathbb{E}^\Alt[\tilde \Lambda_S(\lambda)] \geq \left(1-\frac{1}{n\mu}\right) n\mu + \log(1/(n\mu)) - O(1)
        = n\mu - \log(n\mu) - O(1),
    \end{equation*}
    as desired.
\end{proof}

\begin{remark}
    In practice, the optimal value of $\lambda$ (and $s$) should be computed directly through bisection. The particular value $s = \frac{1}{n\mu}$ is useful as a proof technique to capture the asymptotic behavior of the estimator, but is typically suboptimal for finite sample sizes.
\end{remark}

\vspace{1cm}

\subsection{Deferred proofs from \cref{sec:sequential}}
\label{sec:deferred_sequential}

The following helpful lemma will be used in the proof of \cref{thm:agr-ub}, and can be seen as a sequential analogue of \cref{prop:suboptimality}.

\begin{lemma}[Reduction to DP martingales]\label{lem:martingale-reduction}
        Let $(E_t)$ be an $\eps$-DP e-process adapted to the output filtration $\mathcal{F}_t = \sigma(E_1, \ldots, E_t)$. Then there exists an $\varepsilon$-DP $\Null$-martingale $(M_t)$ such that for any $\mathcal{F}$ stopping time $N$ which is $\Alt$-almost surely finite, we have:
        \begin{equation*}
            \E^\Alt[\log E_N] \leq \E^\Alt[\log M_N].
        \end{equation*}
        \end{lemma}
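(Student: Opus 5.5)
The plan is to take $M_t$ to be the likelihood ratio process of the private output itself, which reduces the claim to a stopped version of the Donsker--Varadhan/Jensen argument from \cref{prop:characterization}. Let $\Alt_t$ and $\Null_t$ denote the laws of $(E_1,\ldots,E_t)$ under $X\sim\Alt^\mathbb{N}$ and $X\sim\Null^\mathbb{N}$ respectively, including the internal randomness of $\M$. Define
\[
M_t \coloneq \frac{d\Alt_t}{d\Null_t}(E_1,\ldots,E_t), \qquad M_0 = 1 .
\]

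The first step is to show that $M_t$ exists. By condition (b) of \cref{def:dp-eproc}, $(E_1,\ldots,E_t)$ depends only on $x_{1:t}$. Any two length-$t$ prefixes differ in at most $t$ records, so group privacy (\cref{lem:group-privacy}) gives $\Null_t(A)\ge e^{-t\eps}\Alt_t(A)$ for every event $A$; this is the same coupling-free argument as in \cref{lem:dp-ratio}. Hence $\Alt_t\ll\Null_t$, and in fact $M_t\in[e^{-t\eps},e^{t\eps}]$.

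The second step is to check that $(M_t)$ has the required properties. It is $\eps$-DP by post-processing, since it is a deterministic function of the private output. It is nonnegative with $\E^\Null[M_t]=1$. It is a $\Null$-martingale with respect to $\mathcal{F}_t$ because the laws are consistent: $\Alt_{t}$ and $\Null_{t}$ are the marginals of $\Alt_{t+1}$ and $\Null_{t+1}$, so $\E^\Null[M_{t+1}\mid\mathcal{F}_t]=M_t$. This is the standard likelihood-ratio martingale, and boundedness makes integrability automatic.

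The third step is a change of measure at the stopping time. For a stopping time $N$ that is $\Alt$-a.s.\ finite, I will show that $\frac{d\Alt|_{\mathcal{F}_N}}{d\Null|_{\mathcal{F}_N}} = M_N$ on $\{N<\infty\}$. On each event $\{N=t\}\in\mathcal{F}_t$ this is just the definition of $M_t$; summing over $t$ and using $\Alt(N<\infty)=1$ gives $\E^\Alt[Y]=\E^\Null[Y M_N\mathbb{I}\{N<\infty\}]$ for nonnegative $\mathcal{F}_N$-measurable $Y$. Applying Jensen's inequality to the concave $\log$ then gives
\[
\E^\Alt\left[\log\frac{E_N}{M_N}\right] \le \log \E^\Alt\left[\frac{E_N}{M_N}\right] = \log \E^\Null\left[E_N\,\mathbb{I}\{N<\infty\}\right] \le \log 1 = 0 .
\]
The last inequality holds because $(E_t)$ is an e-process for $\M(\Null^\mathbb{N})$, so $E_N$ is an e-variable; for unbounded $N$, this follows from domination by a test supermartingale together with Fatou's lemma. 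Rearranging yields $\E^\Alt[\log E_N]\le\E^\Alt[\log M_N]$.

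I expect the main obstacle to be the technical justification of this last step rather than the construction. Since $M_N$ may not be bounded when $N$ is unbounded, the change of measure on $\mathcal{F}_N$ for a merely a.s.-finite $N$ has to be argued through the decomposition over $\{N=t\}$. One also needs $\E^\Alt[\log M_N]$ to be well defined so that the subtraction is legitimate. To handle the latter, I would observe that $\E^\Alt[\log M_N]\ge 0$: it is the KL divergence between the stopped laws, or equivalently one can apply Jensen in the reverse direction. The inequality can therefore be stated in extended-real form, with the case $\E^\Alt[\log E_N]=-\infty$ being trivial.
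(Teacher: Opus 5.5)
Your proposal is correct and follows essentially the same route as the paper. Both take $M_t$ to be the likelihood ratio of the private output, get $\Alt_t \ll \Null_t$ from group privacy, obtain $\eps$-DP by post-processing, and conclude by log-optimality of the likelihood ratio at the stopping time. The paper cites Proposition 3.22 of \citet{ramdasHypothesisTestingEvalues} for that last step, whereas you prove it directly via change of measure over the events $\{N=t\}$ plus Jensen, which also handles an $\Alt$-a.s.\ finite but possibly $\Null$-infinite $N$ more carefully than the paper's $e^{N\eps}$ bound.
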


    \begin{proof}
        The construction in fact applies to any arbitrary $\varepsilon$-DP sequence $(O_t) = (\M(x_{1:t}))$ with corresponding output filtration $\mathcal{F}_t = \sigma(O_{1:t})$.
       Let $N$ be any $\Alt$-almost surely finite $\mathcal{F}$ stopping time, and let $A \in \mathcal{F}_N$ be any event such that $\Null(A)=0$. We have by group privacy that $\Alt(A) \leq e^{N\varepsilon} \Null(A) = 0$, and therefore that $\M(\Alt^N) \ll \M(\Null^N)$ on $\mathcal{F}_N$. The following likelihood ratio process therefore exists for all $t > 0$:
        \begin{equation*}
            M_t = \frac{d\M(\Alt^t)}{d\M(\Null^t)}(O_1, \ldots, O_t),
        \end{equation*}

        and by Proposition 3.22 of \citet{ramdasHypothesisTestingEvalues}, it is log-optimal in the sense that $\mathbb{E}^\Alt[\log M_N] \geq \mathbb{E}^\Alt[\log M'_N]$ for any other e-process $M'$ which is adapted to $\mathcal{F}$. Observing that it is also itself $\varepsilon$-DP by post-processing completes the proof. 
    \end{proof}

    \vspace{1cm}

\cref{thm:agr-ub} provides an upper bound on the expected value of a $\eps$-DP e-process $(E_t)$ at any random stopping time $N < N_{max}$.

\begin{proof}[Proof of \cref{thm:agr-ub}]

    We begin by establishing some notation. By \cref{lem:martingale-reduction}, we can assume without loss of generality that $E_t$ is a likelihood ratio process generated by the outputs of our DP mechanism. We will denote the corresponding log-likelihood ratio process by $L_t = \log \frac{d\M(\Alt^t)}{d\M(\Null^t)}(O_{1:t})$, which is adapted to the natural output filtration $\mathcal{F}_t = \sigma(O_{1:t})$.
    
    From here, we recall the total-variation coupling $\gamma$ between $\Alt$ and $\widetilde Q$ presented in \cref{sec:batch_optimal}: for $X, \tilde X \sim \gamma$ we have $X \sim \Alt$, $\tilde X \sim \widetilde Q$, and $w \coloneq \Pr[X_t \neq \tilde X_t] = TV(\widetilde Q, \Alt)$. Let $X_{1:n}, \tilde X_{1:n} \sim \gamma^n$ under the alternate, and let  $X_i = \tilde X_i \sim \Null$ under the null. In addition, we define a sequence of binary random variables $B_{1:n}$ such that $B_t = \mathbb{I}[X_i \neq \tilde X_i]$. Under the null, we define $B_t$ so that $\Null(B_t=1 \mid \tilde X_t) = \Alt(B_t = 1 \mid \tilde X_t)$.

    Finally, we introduce the shorthand $D(Y) = \KL{\Alt_Y}{\Null_Y}$, where $\Alt_Y$ (resp. $\Null_Y$) represents the distribution of $Y$ under $\Alt$ (resp. $\Null$). This  notation will prevent the following calculations from becoming unwieldy.
    
    With these preliminaries concluded, our basic proof strategy can be seen as an extension \cref{lem:decomp}: we will use the data-processing inequality to upper-bound $\E^\Alt[L_N] = D(O_{1:N})$ by the divergence $D(O_{1:N}, \tilde x_{1:N}, b_{1:N})$, which will turn out to be easier to analyze. Intuitively, the latter quantity represents the achievable e-power of a hypothetical mechanism which is given additional side information about the input, and this will yield a valid upper-bound on the achievable e-power of real-world mechanisms that lack that information. The first step in this process is to define an appropriate potential function $\Phi'$ which will allow us to largely ignore the internal workings of our particular mechanism $\M$.

    \paragraph{Constructing the potential function.}
    Formally, we define the augmented log-likelihood ratio process $L'_t = \log \frac{d\M(\Alt^t)}{d\M(\Null^t)}(O_{1:t}, B_{1:t})$, which is adapted to the finer filtration $\mathcal{G}_t = \sigma(O_{1:t}, B_{1:t})$. The conditional expected increments of $L'_t$ satisfy:
    \begin{equation*}
        \E^\Alt[L'_t - L'_{t-1} \mid \mathcal{G}_{t-1}] = D(O_t, B_t \mid \mathcal{G}_{t-1}).
    \end{equation*}
    We will also define the potential function $\Phi'_t = D(\tilde X_{1:t} \mid \mathcal{G}_{t})$. 
    
    Next, we apply the chain rule for KL divergence (\cref{lem:kl_chain_rule}) to the joint distribution of $O_t \otimes B_t, \tilde X_{1:t} $ conditioned on $\mathcal{G}_{t-1}$ to get:
    \begin{align*}
        D(O_t, \tilde X_{1:t}, B_t \mid \mathcal{G}_{t-1}) 
        &= D(O_t, B_t \mid \mathcal{G}_{t-1}) + \E^\Alt[D(\tilde X_{1:t}\mid \mathcal{G}_{t})] \\
        &= D(O_t, B_t \mid \mathcal{G}_{t-1}) + \E^\Alt[\Phi'_t].
    \end{align*}
    Separately, we apply the chain rule to the joint distribution of $O_t \otimes \tilde X_t \otimes B_t, \tilde X_{1:t-1}$ to compute:
    \begin{align*}
        D(O_t, \tilde X_{1:t}, B_t \mid \mathcal{G}_{t-1}) 
        &= D(\tilde X_{1:t-1} \mid \mathcal{G}_{t-1}) + \E^\Alt[D(O_t, \tilde X_t, B_t \mid \mathcal{G}_{t-1}, \tilde x_{1:t-1})] \\
        &= \Phi'_{t-1} + \E^\Alt[D(O_t, \tilde X_t, B_t \mid \mathcal{G}_{t-1}, \tilde x_{1:t-1})].
    \end{align*} 

     Because the left-hand sides of the two equations above are equal, we can equate the right-hand sides:
    \begin{equation*}
        D(O_t, B_t \mid \mathcal{G}_{t-1}) + \E^\Alt[\Phi'_t] = \Phi'_{t-1}  + \E^\Alt[D(O_t, \tilde X_t, B_t \mid \mathcal{G}_{t-1},\tilde x_{1:t-1})],
    \end{equation*}

    and by our formula for the expected conditional increments of $L'_t$, it follows that:
    \begin{equation*}
        \E^\Alt[L'_t + \Phi'_t \mid \mathcal{G}_{t-1}] = L'_{t-1} + \Phi'_{t-1} + \E^\Alt[D(O_t, \tilde X_t, B_t \mid \mathcal{G}_{t-1}, \tilde x_{1:t-1})].
    \end{equation*}

    Because KL divergence is non-negative, we conclude that the process $\tilde L'_t = L'_t + \Phi'_t$ is a submartingale under the alternate for the finer filtration $\mathcal{G}_t$. Also by non-negativity, we have $\tilde L'_t \geq L'_t$. 
    
    \paragraph{A fixed-time upper bound.} We now turn our attention to the increments. Once again applying the chain rule for KL divergence, we have:
    \begin{equation*}
        D(O_t, \tilde X_t, B_t \mid \mathcal{G}_{t-1}, \tilde x_{1:t-1}) 
        = D(\tilde X_t, B_t \mid \mathcal{G}_{t-1}, \tilde x_{1:t-1}) + \E^\Alt[D(O_t \mid \mathcal{G}_{t-1}, b_t, \tilde x_{1:t})].
    \end{equation*}

    We will bound each term separately. Since the $\tilde X_t,B_t$ pairs are i.i.d. and $B_t$ has the same conditional distribution given $\tilde x_t$ under the null and the alternate, we have that $D(\tilde X_t, B_t \mid \mathcal{G}_{t-1}, \tilde x_{1:t-1}) = D(\tilde X_t, B_t) = \KL{\widetilde Q}{\Null}$. 
    Simultaneously, we have that:
    \begin{align*}
    \sum_{i=1}^t  \E^\Alt[D(O_i \mid \mathcal{G}_{i-1}, b_i, \tilde x_{1:i})]
    &=
    \sum_{i=1}^t  \E^\Alt[D(O_i \mid \mathcal{F}_{i-1}, b_{1:t}, \tilde x_{1:t})]  \\
    &=
    D(O_{1:t} \mid b_{1:t}, \tilde x_{1:t}),
    \end{align*}
    where the first equality follows because $O_i$ is independent of $b_{i+1: t}, \tilde x_{i+1: t}$ and the second equality follows inductively from the chain rule for KL divergence (\cref{lem:kl_chain_rule}). Next, we apply joint convexity along with group privacy (see \cref{eq:jc+gp}) to conclude that:
    \begin{align*}
    D(O_{1:t} \mid b_{1:t}, \tilde x_{1:t})
    &\leq \sum_{i=1}^t b_i \varepsilon.
    \end{align*}

    
    Importantly, this is an almost sure upper bound on the sum of the expectations, and not merely a bound on the expectation of the sum. It follows that under $\mathcal{G}$, we can decompose $\tilde L'_t = W_t + A_t$ for some martingale $W_t = \sum_{i=1}^t w_i$ and some process $A_t = \sum_{i=1}^t a_i$ satisfying $A_t \leq \sum_{i=1}^t [\KL{\widetilde Q}{\Null} + b_i\varepsilon)]$ almost surely.

    \paragraph{Extension to bounded stopping times. }
    Now, let $N$ be any $\mathcal{F}$ stopping time which is upper-bounded by $N_{max} < \infty$. Then it is also a stopping time for the finer filtration $\mathcal{G}$. We can therefore write:
    \begin{align*}
        \E^\Alt[\tilde L'_N] &= \E^\Alt\left[ \sum_{t=1}^{N_{max}} (w_t + a_t)\lbrace N \geq t\rbrace  \right] \\
        &\leq \E^\Alt\left[ \sum_{t=1}^{N_{max}} (w_t + [\KL{\widetilde Q}{\Null} + b_t\varepsilon)])\lbrace N \geq t\rbrace  \right] \\
        &= \E^\Alt\left[ \sum_{t=1}^{N_{max}} [\KL{\widetilde Q}{\Null} + b_t\varepsilon)]\lbrace N \geq t\rbrace  \right],
    \end{align*}

    where the last line follows from linearity of expectation along with the optional stopping theorem (\cref{lem:optional_stopping}) applied to the martingale $W_t$, which is valid because $N \leq N_{max}$ almost surely. Now, we can take expectation over $b$. Because $\lbrace N \geq t\rbrace \in \mathcal{F}_{t-1} \perp b_t$, Wald's equation (\cref{lem:wald}) gives us that:
    \begin{equation*}
        \E^\Alt\left[ \sum_{t=1}^{N_{max}} \left(\KL{\widetilde Q}{\Null} + b_t\varepsilon\right)\lbrace N \geq t\rbrace  \right] = \E[N]\left(\KL{\widetilde Q}{\Null} + w \varepsilon\right),
    \end{equation*}

    and by rearranging, this gives us that:
    \begin{equation*}
        \E[N] \geq \frac{\E^\Alt[\tilde L'_N]}{\KL{\widetilde Q}{\Null} + w \varepsilon} \geq \frac{\E^\Alt[L'_N]}{\KL{\widetilde Q}{\Null} + w \varepsilon}.
    \end{equation*}
    
    This is nearly what we want, except that it's stated in terms of the sequence $L'_t$. To conclude, we observe that by the data processing inequality, we have that $\E^\Alt[L_N] = D(O_{1:N}) \leq D(\tilde x_{1:N}, b_{1:N}, O_{1:N}) = E^\Alt[\tilde L'_N]$, giving us finally that:
    \begin{equation*}
        \E[N] \geq \frac{\E^\Alt[L_N]}{\KL{\widetilde Q}{\Null} + w \varepsilon}.
    \end{equation*}

    Using the fact (proved in \cref{sec:batch}) that $\opt = \KL{\widetilde Q}{\Null} + w\eps$ then yields the theorem statement. This concludes the proof of \cref{thm:agr-ub}.
\end{proof}

\vspace{1cm}

\cref{prop:sequential-tests} applies \cref{thm:agr-ub} to lower-bound the expected stopping time of any $\eps$-DP sequential test, even those that are not obviously based on e-processes.

\begin{proof}[Proof of \cref{prop:sequential-tests}]
    Let $\M: \mathcal{X}^* \to \lbrace 0, 1 \rbrace^*$ be any $\eps$-DP sequential test with power $1-\beta$ and level $\alpha < 1-\beta$. We will write the output process as $\phi_1, \phi_2, \ldots$, interpreting $\phi_t = 1$ to mean that $\M$ rejected the null hypothesis at or before time $t$. Let $N$ be the stopping time of $\M$, and assume that it is bounded above by an arbitrary constant $N_{max} < \infty$ (this is satisfied by all real-world tests). For $n > N$, we will extent the output of $\M$ by setting $\phi_n = \phi_N$. With this notation, the condition that $\M$ has power $1-\beta$ and level $\alpha$ means that $\Alt(\phi_N = 1) = 1-\beta$ and $\Null(\phi_N = 1) = \Null(\exists t\,.\,\phi_t = 1) = \alpha$. 

    Now, define the likelihood ratio process $M_t = \frac{d \M(\Alt^t)}{d\M(\Null^t)}(\phi_{1:t})$. In the proof of \cref{lem:martingale-reduction}, we show that $M_t$ is an $\eps$-DP e-process. So, by \cref{thm:agr-ub} along with the definition of KL divergence, we have the following upper bound:
    \begin{equation*}
        \KL{\Alt_{\phi_{1:N}}}{\Null_{\phi_{1:N}}} = \E^\Alt[\log M_N] \leq \E^\Alt[N] \opt.
    \end{equation*}
    On the other hand, by the data-processing inequality along with our assumption that $\M$ has power $1-\beta$ and level $\alpha$, we have the lower bound:
    \begin{align*}
        \KL{\Alt_{\phi_{1:N}}}{\Null_{\phi_{1:N}}} 
        &\geq \KL{\Alt_{\phi_{N}}}{\Null_{\phi_{N}}} \\
        &= \KL{\text{Bern}(1-\beta)}{\text{Bern}(\alpha)} \\
        &= (1-\beta)\log((1-\beta)/\alpha) + \beta\log(\beta/(1-\alpha)).
    \end{align*}

    Rearranging, we conclude that:
    \begin{equation*}
        \E[N] \geq \frac{(1-\beta)\log((1-\beta)/\alpha) + \beta\log(\beta/(1-\alpha))}{\opt},
    \end{equation*}

    completing the proof. 
    \end{proof}

    We remark that a `converse' to the construction in the preceding proof holds as well: given any $\eps$-DP e-process $E$, we can post-process it into a $\eps$-DP level-$\alpha$ sequential test by defining $\phi_t = \mathbb{I}[\exists i \leq t\,.\,E_i \geq 1/\alpha]$. In other words, any hardness result for $\eps$-DP sequential tests can be immediately turned into a hardness result for $\eps$-DP e-processes, and visa versa.

\vspace{1cm}

\begin{proof}[Proof of \cref{lem:ind-martingale}]
    We can decompose the ratio $E_t/E_{t-1}$  into:
    \begin{equation*}
        \exp(\lambda(\tilde \Lambda_t - \tilde\Lambda_{t-1}) - K_t(\lambda)) = \exp(\lambda(\Lambda_t - \Lambda_{t-1})) \exp(\lambda(\xi_t - \xi_{t-1}) - K_t(\lambda)).
    \end{equation*}
    Because $M_t$ is a test (super)martingale, we know that $\E^\Null[\exp(\Lambda_t - \Lambda_{t-1}) \mid \Lambda_{1:t-1}] = \E^\Null[M_t/M_{t-1} \mid M_{1:t-1}] \leq 1$. Then, for $\lambda \in [0,1]$, the map $x \mapsto x^\lambda$ is concave, and so Jensen's inequality gives us that $\E^\Null[\exp(\lambda(\Lambda_t - \Lambda_{t-1})) \mid \Lambda_{1:t-1}] \leq \E^\Null[\exp(\Lambda_t - \Lambda_{t-1}) \mid \Lambda_{1:t-1}]^\lambda \leq 1$. Applying the definition of $K_t(\lambda)$, it follows that $\exp(\lambda(\xi_t - \xi_{t-1}) - K_t(\lambda)) \leq 1$ as well. Finally, we use independence to conclude that $E_t$ is a test (super)martingale for $\M(\Null^\mathbb{N})$ and thus an $\eps$-DP e-process.
\end{proof}

\vspace{1cm}

\begin{proof}[Proof of \cref{thm:sequential-lb}]
The fact that \cref{alg:batch2proc} satisfies $\eps$-DP follows from standard analysis of the Laplace mechanism. To see that it is an e-process, let $\mathcal{G}_j = \mathcal{F}_{t_j}$ denote the `coarsened' filtration seen by the batched process, so that $\Lambda_j - \Lambda_{j-1} = \sum_{i=t_{j-1}+1}^{t_j} \log E(x_i)$. By \cref{lem:ind-martingale}, $(\widetilde E_{t_j})$ is an $\eps$-DP test (super)martingale for $\mathcal{G}_j$. Expanding back out to $\mathcal{F}$ clearly preserves the martingale property since $\widetilde E_i = \widetilde E_{i-1}$ exactly for any $i \neq t_j$, and so we conclude that $(\widetilde E_i)$ is a test (super)martingale for the original output filtration $\mathcal{F}$ as well.

To analyze its expected stopping times, define $J(t) = \max \lbrace j : t_j \leq t \rbrace$ to be the number of batch updates up to time $t$. Let $C_\lambda = -\log(1-c^2\lambda^2)$. Our goal is to minimize the following competitive ratio between the original non-private process and our output process:
\begin{equation*}
    \sup_{N \geq t_1} \frac{\E^\Alt[\Lambda_N]}{\E^\Alt[\log \widetilde E_N]} =\sup_{N \geq t_1} \frac{\E^\Alt[\Lambda_N]}{\lambda \E^\Alt[\Lambda_{t_{J(N)}}] - J(N) C_\lambda}.
\end{equation*}

We remark that the minimum time $t_1$ is necessary because the competitive ratio would otherwise be $\infty$ at early time steps when we don't have enough data to release anything. With this constraint, the supremum will always occur when $N=t_{J(n)+1}-1$ for some $n>0$, in which case the expression simplifies to:
\begin{equation*}
    \sup_{n>0} \frac{n \mu}{t_{J(n)} \lambda\mu  - J(n)C_\lambda} = \sup_{j>0} \frac{(t_{j+1}-1)\mu}{t_j\lambda\mu - jC_\lambda}.
\end{equation*}

To minimize the supremum, we would like to choose a batch schedule so that the worst-case relative error is equal to some constant $\rho>1$ for all $j$. Asymptotically, $t_{j+1} -1 \sim t_{j+1}$, and so we get the following recurrence:
\begin{align*}
    t_{j+1} = \rho \left(\lambda  t_j - \frac{ jC_\lambda}{\mu} \right).
\end{align*}

We will solve this recurrence using the method of undetermined coefficients. The homogeneous equation $t_{j+1} = \rho\lambda t_j$ has solution $t_{j+1} = C(\rho\lambda)^{j+1}$, so we'll guess that the solution to the equation has the form:
\begin{equation*}
    t_{j+1} = C(\rho\lambda)^{j+1} + \alpha j + \gamma.
\end{equation*}

Writing $\beta = \rho C_\lambda/\mu$, this then implies that:
\begin{align*}
    \alpha j + \gamma &= \rho\lambda(\alpha(j-1)+\gamma) - \beta j \\
    &= (\rho\lambda\alpha-\beta)j + \rho\lambda\gamma - \rho\lambda\alpha,
\end{align*}
and thus, equating coefficients, that:
\begin{align*}
    \alpha &= \rho\lambda\alpha-\beta \\
    \alpha &= \frac{\beta}{\rho\lambda-1},
\end{align*}

and
\begin{align*}
    \gamma &= \rho\lambda(\gamma-\alpha) \\
    \gamma &= \frac{\rho\lambda\alpha}{\rho\lambda-1} = \frac{\rho\lambda\beta}{(\rho\lambda-1)^2}.
\end{align*}

Let $R = \rho\lambda$. Then our candidate solution is:
\begin{equation*}
    t^*_{j+1}=CR^{j+1} + \frac{\beta j}{R-1} + \frac{R\beta}{(R-1)^2}.
\end{equation*}
Plugging this in to our original recurrence to verify, we get:
\begin{align*}
    Rt^*_j - \beta j
    &= CR^{j+1} + \frac{R\beta(j-1)}{R-1} + \frac{R^2\beta}{(R-1)^2} - \beta j \\
    &= CR^{j+1} + \left(1 + \frac{1}{R-1} \right)\beta(j-1) + \left(1 + \frac{1}{R-1} \right)^2\beta - \beta j \\
    &= CR^{j+1} + \left(\frac{1}{R-1} \right)\beta j - \left(1+\frac{1}{R-1} \right)\beta  + \left(1 + \frac{1}{R-1} \right)^2\beta \\
    &= CR^{j+1} + \left(\frac{1}{R-1} \right)\beta j + \left(\frac{1}{R-1} + \frac{1}{(R-1)^2} \right)\beta \\
    &= CR^{j+1} + \left(\frac{1}{R-1} \right)\beta j + \frac{R\beta }{(R-1)^2} \\
    &= t^*_{j+1},
\end{align*}

confirming that our solution is valid. Expanding the variables we introduced and setting $C=1$ for simplicity, this gives us:
\begin{align*}
    t_1 &= \rho\lambda + \frac{\rho^2\lambda \log(1/(1-c^2\lambda^2))}{\mu (\rho\lambda-1)^2} \\
    t_j &= (\rho\lambda)^j + \frac{\rho \log(1/(1-c^2\lambda^2)) (j-1)}{\mu(\rho\lambda-1)} + \frac{\rho^2\lambda \log(1/(1-c^2\lambda^2))}{\mu(\rho\lambda-1)^2}.
\end{align*}

Returning finally to our original stopping time equation, we get that with this batching schedule and for any $N \geq t_1$,

\begin{equation*}
    \E^\Alt[\log \widetilde E_N] \geq \frac{\E^\Alt[\Lambda_N]}{\rho} = \frac{\E[N]\mu}{\rho}.
\end{equation*}

Since $\lambda \in (0, \min(1,1/c))$ and we require $\rho\lambda > 1$ for feasibility, this implies that we can achieve any competitive ratio $\rho \in (\max(1,c),\infty)$, but at the cost of a larger initial startup time. To understand this tradeoff, we examine the asymptotic behavior of $t_1$ as $\rho \to c$ for $c \geq 1$. Choosing the (potentially suboptimal) value $\lambda = 1/\sqrt{\rho c} \in (1/\rho, 1/c)$ and letting $\eta = \sqrt{\rho/c}$, we get that:
    \begin{equation*}
        t_1 = \eta - \frac{\rho}{\mu} \left( \frac{\eta \log(1-\frac{1}{\eta^2})}{(\eta-1)^2} \right)
        = \eta - \frac{\rho}{\mu} \left( \frac{\eta (\log(\eta-1) + \log(\eta+1) - 2\log(\eta))}{(\eta-1)^2} \right).
    \end{equation*}

Now, letting $\eta = 1+\delta$ and taking the limit as $\delta \to 0^+$, we get:
\begin{equation*}
    1 - \frac{\rho}{\mu} \left( \frac{\log(\delta) + \log(2)}{\delta^2} \right) = O\left( \frac{\rho \log(1/\delta)}{\mu\delta^2} \right) = \tilde O\left( \frac{c}{\mu(\eta-1)^2} \right),
\end{equation*}

where we used the fact that $\rho \to c$ as $\eta \to 1$. Finally, since this upper-bound holds for our fixed choice of $\lambda$, it also holds for the optimal choice.
\end{proof}

\end{document}